\DeclareMathOperator{\diag}{diag}
\numberwithin{equation}{section}
\newtheorem{proposition}{Proposition}[section]
\newtheorem{lemma}{Lemma}[section]
\newtheorem{assumption}{Assumption}
\newtheorem{theorem}{Theorem}[section]
\theoremstyle{remark}
\newtheorem{remark}{Remark}[section]
\title{A General Approach for Parisian Stopping Times under Markov Processes}
\author{Gongqiu Zhang\thanks{School of Science and Engineering, The Chinese University of Hong Kong, Shenzhen, China. Email: zhanggongqiu@cuhk.edu.cn.} \and Lingfei Li\thanks{Corresponding author. Department of Systems Engineering and Engineering Management, The Chinese University of Hong Kong, Hong Kong SAR. Email: lfli@se.cuhk.edu.hk.}}
\date{\today}
\begin{document}
	
\maketitle

\begin{abstract}
We propose a method based on continuous time Markov chain approximation to compute the distribution of Parisian stopping times and price Parisian options under general one-dimensional Markov processes. We prove the convergence of the method under a general setting and obtain sharp estimate of the convergence rate for diffusion models. Our theoretical analysis reveals how to design the grid of the CTMC to achieve faster convergence. Numerical experiments are conducted to demonstrate the accuracy and efficiency of our method for both diffusion and jump models. To show the versality of our approach, we develop extensions for multi-sided Parisian stopping times, the joint distribution of Parisian stopping times and first passage times, Parisian bonds and for more sophisticated models like regime-switching and stochastic volatility models. 

\end{abstract}

\textbf{Keywords}: Parisian stopping time, Parisian options, Parisian ruin probability, Markov chain approximation, grid design.

\textbf{MSC (2020) Classification}: 60J28, 60J60, 91G20, 91G30, 91G60.

\textbf{JEL Classification}: G13.

\section{Introduction}

Consider a stochastic process $X$ and define
\begin{align}
	g^-_{L, t} = \sup \{ s \le t: X_s \ge L  \},\ g^+_{L, t} = \sup \{ s \le t: X_s \le L  \}.
\end{align}
for a given level $L$. The ages of excursion of $X$ below and above $L$ are given by $t - g^-_{L, t}$ and $t - g^+_{L, t}$, respectively. Consider the following two stopping times:
\begin{align}
	\tau_{L, D}^+  = \inf\{ t \ge 0: 1_{\{ X_t > L \}} (t - g^+_{L, t}) \ge D \},\ \tau_{L, D}^-  = \inf\{ t \ge 0: 1_{\{ X_t < L \}} (t - g^-_{L, t}) \ge D \},  
\end{align}
which are called Parisian stopping times. $\tau_{L, D}^+$ ($\tau_{L, D}^-$) is the first time that the age of excursion of $X$ above (below) $L$ reaches $D$. In these definitions, we use the convention that $\sup\emptyset = 0$ and $\inf\emptyset = \infty$.

Parisian stopping times appear in important finance and insurance applications. Parisian options are proposed in \cite{chesney1997brownian} as an alternative to barrier options and they have the advantage of being less prone to market manipulation (\cite{labart2010parisian}). These options are activated or cancelled if the process stays long enough below or above a prespecified level. In the insurance literature, the Parisian stopping time for the excursion below the solvency level is used to define the ruin event (e.g., \cite{dassios2008parisian}, \cite{czarna2011ruin}, \cite{loeffen2013parisian}).

There is an extensive literature on the study of Parisian stopping times and their applications in finance and insurance with many papers focused on specific models. For the Brownian motion and the geometric Brownian motion model, see \cite{chesney1997brownian}, \cite{bernard2010new}, \cite{dassios2010perturbed}, \cite{dassios2011double}, \cite{zhu2013pricing}, \cite{dassios2013parisian}, \cite{dassios2016joint}, \cite{le2016analytical}, \cite{dassios2017analytical} and \cite{dassios2018recursive}.
For models with jumps, methods for Parisian option pricing were developed in
\cite{albrecher2012pricing} for Kou's double-exponential jump-diffusion model and in \cite{chesney2018parisian} for the hyper-exponential jump-diffusion model. \cite{czarna2011ruin} and \cite{loeffen2013parisian} derived the ruin probability for spectrally negative L\'evy processes. In addition, \cite{chesney2006american}, \cite{le2016analytical} and \cite{lu2018pricing} studied American-style Parisian option pricing under the Black-Scholes model while \cite{chesney2018parisian} dealt with the hyper-exponential jump-diffusion model. As for the hedging of Parisian options, \cite{kim2016risk} developed a quasi-static strategy and tested it under the double-exponential jump-diffusion model. 

Several numerical methods have also been applied to the Parisian option pricing problem. \cite{baldi2000pricing} developed a Monte Carlo simulation method based on large deviation estimates for general diffusions. For the Black-Scholes model, \cite{avellaneda1999pricing} proposed a trinomial tree approach while \cite{haber1999pricing} developed an explicit finite difference scheme to numerically solve the pricing PDE, which is not unconditionally stable. These two methods can also be applied to general diffusions, but they need to incorporate the current length of excursion as an additional dimension. Furthermore, they may require a very small time step to achieve a target accuracy level, making them inefficient. If the model contains jumps, the lattice method and the finite difference approach will face more challenges for accuracy and efficiency. 

In this paper, we aim to develop a computationally efficient approach for Parisian stopping time problems that is applicable to general Markov models. We consider a one-dimensional (1D) Markov process $X$ living on an interval $I\subseteq\mathbb{R}$ with its infinitesimal generator $\mathcal{G}$ given by 
\begin{align}\label{eq:model-specification}
	\mathcal{G} f(x)=\frac{1}{2} \sigma^{2}(x) f^{\prime \prime}(x)+\mu(x) f^{\prime}(x)+\int_{\mathbb{R}}\left(f(x+z)-f(x)-z 1_{\{|z| \leq 1\}} f^{\prime}(x)\right) \nu(x, d z)
\end{align}
for $f \in C_c^2(I)$ (the space of twice continuously differentiable functions on $I$ with compact support) with $\int_\mathbb{R} (z^2 \wedge 1) \nu(x, dz) < \infty$. The quantities $\mu(x)$, $\sigma(x)$, $\nu(x, dz)$ are known as the drift, volatility and jump measure of $X$, respectively. We would like to obtain the distribution of $\tau_{L, D}^-$ and $\tau_{L, D}^+$, i.e., $\mathbb{P}_x[\tau_{L, D}^\pm \le t]$. The Laplace transform of the probability as a function of $t$ is given by
\begin{align}
	\int_{0}^{\infty} e^{-q t} \mathbb{P}_x[\tau_{L, D}^\pm \le t] dt = \mathbb{E}_x\left[ \int_{\tau_{L, D}^\pm}^{\infty} e^{-qt} dt \right] = \frac{1}{q} \mathbb{E}_x\left[ e^{-q \tau_{L, D}^\pm} \right],\ \Re(q) > 0.
\end{align}

For a general Markov process, there is no analytical solution for this transform. Our idea is approximating $X$ by a continuous time Markov chain (CTMC). We will show later that the Laplace transform of $\tau_{L, D}^\pm$ under a CTMC model is the solution to a linear system which involves the distribution and the Laplace transform of a first passage time of the CTMC, and this solution can be obtained in closed form. To recover the distribution of $\tau_{L, D}^\pm$, we simply invert its Laplace transform numerically using the efficient algorithm of \cite{abate1992fourier}. Using the result for $\tau_{L, D}^\pm$, we can compute the Laplace transfrom of a Parisian option price as a function of maturity in closed form for a CTMC model. We can further solve a range of Parisian problems under a CTMC model such as the distribution of the multi-sided Parisian stopping time, the joint distribution of a Parisian stopping time and a first passage time and the pricing of Parisian bonds. For more complex models like regime-switching models and stochastic volatility models, we can also approximate them by a CTMC and solve the Parisian stopping time problems. Thus, our approach is very general and versatile. Numerical experiments for various popular models further demonstrate the computational efficiency of our method, which can outpeform traditional numerical methods significantly. 

Our paper fits into a burgeoning literature on CTMC approximation for option pricing. \cite{mijatovic2013continuously} used it for pricing European and barrier options under general 1D Markov models. They showed that the barrier option price can be obtained in closed form using matrix exponentials under a CTMC model. Sharp converegnce rate analysis of their algorithm for realistic option payoffs is a challenging problem. Answers are provided for diffusion models first in \cite{li2018error} for uniform grids and then in \cite{zhang2019analysis}) for general non-uniform grids while the problem still remains open for general Markov models with jumps. For diffusions, \cite{zhang2019analysis} also developed principles for designing grids to achieve second order convergence whereas the algorithm only converges at first order in general. The idea of CTMC approximation has also been exploited to develop efficient methods for pricing various types of options for 1D Markov models: \cite{eriksson2015american} for American options, \cite{cai2015general}, \cite{song2018computable}, \cite{cui2018single} for Asian options, and \cite{zhangli2021} and \cite{zhang2021Amerdrawdown} for maximum drawdown options. Furthermore, CTMC approximation can be extended to deal with more complex models 
where the asset price follows a regime-switching process (\cite{CaiKouSongRS}) or exhibits stochastic volatility (\cite{cui2017var,cui2018general,cui2019general}). 

One can view the present paper as a generalization of \cite{mijatovic2013continuously} and \cite{zhang2018analysis} from barrier options to Parisian options. However, this generalization is by no means trivial. First, the Parisian problem is much more complex than the first passage problem as it involves the excursion of the process. Second, there is considerable challenge in sharp convergence rate analysis. Even for barrier options, the sharp convergence rate of CTMC approximation is only known for diffusion models, and the difficulty mainly stems from the lack of smoothness in the option payoffs. In this paper, we focus on analyzing the convergence rate of our algorithm for Parisian problems under diffusion models. We use the spectral analysis framework in \cite{zhang2019analysis}, but there is an important difference which makes the analysis for Parisian problems much harder. Here, we must carefully analyze the dependence of solutions to Sturm-Liouville and elliptic boundary value problems on variable boundary levels whereas in the barrier option problem the boundaries are fixed. Furthermore, we need to establish some new identities that enable us to obtain the sharp rate. Our analysis not only yields the sharp estimate of the convergence rate, but also unveils how the grid of the CTMC should be designed to achieve fast convergence. 

For the sake of simplicity, our discussions in the main text will be focused on the algorithms for single-barrier Parisian stopping times and the down-and-in Parisian option as well as their convergence analysis. Extensions for other cases are provided in the appendix. The rest of this paper is organized as follows. In Section \ref{sec:Parisian-CTMC}, we derive the Laplace transforms of Parisian stopping times and Parisian option prices under a general CTMC model with finite number of states. Section \ref{sec:CTMC} reviews how to construct a CTMC to approximate a general jump-diffusion and establishes the convergence of this approximation for the Parisian problems. Section \ref{sec:conv-rate} analyzes the convergence rate for diffusion models and proposes grid design principles to ensure second order convergence. Section \ref{sec:numerical-ex} shows numerical results for various models and Section \ref{sec:conclusions} concludes the paper. In Appendix \ref{sec:extensions}, we extend our method to deal with multi-sided Parisian stopping times, the joint distribution of Parisian stopping times and first passage times, Parisian bonds and the Parisian problem for regime-switching and stochastic volatility models. Proofs for the convergence rate analysis are collected in Appendix \ref{sec:proofs}.

\section{Parisian Stopping Times Under a CTMC}\label{sec:Parisian-CTMC}
Consider a CTMC denoted by $Y$ which lives in the state space $\mathbb{S}_n=\{y_0,y_1,\cdots,y_n\}$. The transition rate matrix of $Y$ is denoted by ${\pmb G}$ with ${\pmb G}(x, y)$ indicating the transition rate from state $x$ to $y$, $x \ne y$ and ${\pmb G}(x, x) = -\sum_{y \in \mathbb{S}_n \backslash \{ x \}} {\pmb G}(x, y)$. The action of $\pmb{G}$ on a function $g: \mathbb{S}_n \to \mathbb{R}$ is defined as
\begin{align}
	{\pmb G} g(x) = \sum_{y \in \mathbb{S}_n} {\pmb G}(x, y) g(y),\ x \in \mathbb{S}_n.
\end{align}
For any $\xi$, we also introduce the notations
\begin{align} 
&\xi^+ = \min \{ z>\xi: z \in \mathbb{S}_n \}, \ \text{for}\ \xi < y_n, \\
&\xi^- = \max \{ z<\xi: z \in \mathbb{S}_n \}, \ \text{for}\ \xi > y_0,
\end{align}
which are the CTMC states that are the right and left neighbor of $\xi$, respectively. We also define $L^+ = \min\{ z \ge L: z \in \mathbb{S}_n \}$. Note that $L^+$ is different from $\xi^+$ in that $L^+=L$ if $L\in\mathbb{S}_n$, but we always have $\xi^+>\xi$. We slightly abuse the notation here to avoid introducing a new one.  

In this section, we first derive the Laplace transform of Parisian stopping times and then the Laplace transform of the price of a down-and-in Parisian option.

\subsection{Laplace Transform of Parisian Stopping Times}
Define 
\begin{align}
	h_n(q, x; y) := \mathbb{E}_x\left[e^{-q\tau_{L, D}^{-}} 1_{\{ Y_{\tau_{L, D}^{-}}  = y \}} \right],\ x, y \in \mathbb{S}_n,
\end{align} 
which is the Laplace transform of $\tau_{L, D}^{-}$ given $Y_0=x$ with an indicator for the process landing on $y$ at $\tau_{L, D}^{-}$. The joint distribution of $\tau_{L, D}^{-}$ and $Y_{\tau_{L, D}^{-}}$ is characterized by $h(q, x; y)$. We also define 
\begin{align}
	h_n(q, x) := \sum_{y \in \mathbb{S}_n} h_n(q, x; y) = \mathbb{E}_x\left[ e^{-q \tau_{L, D}^{-}} \right],\ x \in \mathbb{S}_n,
\end{align}
which is the Laplace transform of $\tau_{L, D}^{-}$ given $Y_0=x$. To calculate $h_n(q, x; y)$ and $h_n(q, x)$, we introduce the following quantities that involve the first passage times of $Y$:
\begin{align}
	&\tau_L^+:=\inf\{t: Y_t\ge L\},\ \tau_L^-:=\inf\{t: Y_t<L\},\\
	&v_n(D, x; y) := \mathbb{E}_x\left[ 1_{\{ \tau_L^+ \ge D \}} 1_{\{ Y_D = y \}} \right],\\
	&v_n(D, x) := \sum_{y \in \mathbb{S}} v_n(D, x; y) = \mathbb{E}_x\left[ 1_{\{ \tau_L^+ \ge D \}} \right],\\
	&u^+_n(q, D, x; z) := \mathbb{E}_x\left[ e^{-q \tau_L^+} 1_{\{ \tau_L^+ < D \}} 1_{\{ Y_{\tau_L^+} = z   \}}\right],\\
	&u^-_n(q, x; z) := \mathbb{E}_x\left[ e^{-q \tau_L^-} 1_{\{ Y_{\tau_L^-} = z \}} \right].
\end{align}

\begin{remark}
	Throughout the paper, the variable before the semicolon indicates the initial state and the variable after it indicates the state at a stopping time or fixed time. 
\end{remark}

Using the strong Markov property of $Y$, we can derive the following two equations for $h_n(q, x; y)$ and $h_n(q, x)$ using the first passage quantities.

\begin{theorem}\label{thm:h-linear-system}
	For $x \in \mathbb{S}_n$,
	\begin{align}
		h_n(q, x; y)& = 1_{\{ x < L \}} e^{-qD} v_n(D, x; y) + 1_{\{ x < L \}} \sum_{z \ge L}  u^+_n(q, D, x; z) h_n(q, z; y) \label{eq:hqxy}\\
		&\quad + 1_{\{x \ge L\}} \sum_{z < L}  u^-_n(q, x; z) h_n(q, z; y), \label{eq:hqxy-2} \\
		h_n(q, x)& = 1_{\{ x < L \}} e^{-qD} v_n(D, x) + 1_{\{ x < L \}}  \sum_{z \ge L} u^+_n(q, D, x; z) h_n(q, z) \label{eq:hqx-1} \\
		&\quad + 1_{\{x \ge L\}} \sum_{z < L}  u^-_n(q, x; z) h_n(q, z). \label{eq:hqx-2}
	\end{align}	
\end{theorem}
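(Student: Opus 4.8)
The plan is to decompose the event $\{\tau_{L,D}^- \le t\}$ according to the position of the starting state $x$ relative to $L$, and then condition on the first relevant passage event using the strong Markov property of $Y$. I would first treat the case $x < L$. Here $Y$ starts below $L$, so the clock measuring the age of the excursion below $L$ is already running. Either (i) the excursion below $L$ survives for at least $D$ units of time, in which case $\tau_{L,D}^- = D$ and $Y_{\tau_{L,D}^-} = Y_D$, contributing the term $e^{-qD} v_n(D,x;y)$ after taking the discounted expectation with the indicator $1_{\{Y_D = y\}}$; or (ii) the process reaches level $L$ from below before time $D$, i.e. $\tau_L^+ < D$, landing on some state $z \ge L$ (recall that since $Y$ is a CTMC with jumps, it may overshoot $L$, so we sum over all $z \ge L$). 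In case (ii), by the strong Markov property the Parisian clock effectively restarts from $z$, so the contribution is $\sum_{z \ge L} \mathbb{E}_x[e^{-q\tau_L^+} 1_{\{\tau_L^+ < D\}} 1_{\{Y_{\tau_L^+}=z\}}] \, h_n(q,z;y) = \sum_{z \ge L} u_n^+(q,D,x;z) h_n(q,z;y)$. Adding (i) and (ii) gives exactly \eqref{eq:hqxy}.

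Next I would treat the case $x \ge L$. Now the excursion-below-$L$ clock is not running — the process is at or above $L$ — so nothing toward $\tau_{L,D}^-$ can accumulate until $Y$ drops strictly below $L$. Let $\tau_L^- = \inf\{t: Y_t < L\}$ be that first downcrossing time, landing on some state $z < L$. Since $Y$ is non-explosive and, under the standing approximation setup, will eventually go below $L$, we have $\tau_L^- < \infty$ and no discounted mass is lost. Applying the strong Markov property at $\tau_L^-$, the process from that point behaves as a fresh copy started at $z < L$, so $h_n(q,x;y) = \mathbb{E}_x[e^{-q\tau_L^-} 1_{\{Y_{\tau_L^-}=z\}} h_n(q,z;y)]$ summed over $z < L$, which is $\sum_{z<L} u_n^-(q,x;z) h_n(q,z;y)$, giving \eqref{eq:hqxy-2}. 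Combining the two cases with the indicators $1_{\{x<L\}}$ and $1_{\{x\ge L\}}$ yields the first displayed identity. The equation for $h_n(q,x)$ then follows immediately by summing both sides over $y \in \mathbb{S}_n$ and using $h_n(q,x) = \sum_y h_n(q,x;y)$, $v_n(D,x) = \sum_y v_n(D,x;y)$, together with Tonelli/Fubini to exchange the two summations in the $z$-terms.

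The main technical point to be careful about — and the step I expect to require the most attention — is the rigorous application of the strong Markov property in case (ii) for $x < L$. One must verify that on the event $\{\tau_L^+ < D\}$, the age of the excursion below $L$ at time $\tau_L^+$ is exactly $\tau_L^+$ itself (i.e. the process has stayed strictly below $L$ on $[0,\tau_L^+)$ so that $g^-_{L,\tau_L^+} = 0$), and hence the post-$\tau_L^+$ evolution of the Parisian clock, driven by the shifted process $Y_{\tau_L^+ + \cdot}$, coincides in law with that of a Parisian problem freshly started from $Y_{\tau_L^+} = z$. This uses that $\tau_L^+$ is a stopping time and that the functionals $g^\pm$ are additive under time-shift in the appropriate sense. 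A secondary point is confirming that $\tau_L^- < \infty$ $\mathbb{P}_x$-a.s. when $x \ge L$ (needed so that the case-$x\ge L$ identity has no missing term at $+\infty$); for a finite-state CTMC that visits states below $L$ with positive probability from every state, irreducibility on the relevant communicating class guarantees this, and in any case the discounting $e^{-q\tau_L^-}$ with $\Re(q)>0$ makes the expression well defined regardless. Everything else is bookkeeping with indicators and the decomposition $1 = 1_{\{x<L\}} + 1_{\{x\ge L\}}$.
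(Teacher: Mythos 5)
Your proposal is correct and follows essentially the same route as the paper's proof: split on $x<L$ versus $x\ge L$, further split the sub-$L$ case on $\{\tau_L^+\ge D\}$ versus $\{\tau_L^+<D\}$, apply the strong Markov property at $\tau_L^+$ (resp. $\tau_L^-$) with the observation that the excursion clock resets (resp. starts) there, and sum over $y$ for the second identity. The additional care you flag about the clock resetting at $\tau_L^+$ and about paths with $\tau_L^-=\infty$ is sound but not treated more explicitly in the paper either.
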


\begin{proof}
First, consider the case $x<L$. We have
\begin{equation}
	h_n(q,x;y) = \mathbb{E}_x\left[e^{-q\tau_{L, D}^{-}} 1_{\{ Y_{\tau_{L, D}^{-}}  = y \}} 1_{\{\tau^+_{L}\ge D\}}\right] +  \mathbb{E}_x\left[e^{-q\tau_{L, D}^{-}} 1_{\{ Y_{\tau_{L, D}^{-}}  = y \}} 1_{\{\tau^+_{L}< D\}}\right].
\end{equation}

The first term represents the case that $Y$ does not cross $L$ from below by time $D$. In this case, $\tau_{L, D}^{-} = D$ and hence $h_n(q, x; y)=e^{-qD}v_n(D, x; y)$, where $v_n(D, x; y)$ is the probability that the first time $Y$ greater than or equal to $L$ exceeds $D$ and $Y_D = y$. 

The second term stands for the case that $Y$ crosses $L$ from below before time $D$. Using the strong Markov property, the process restarts at $\tau_L^{+}$ from $Y_{\tau_L^{+}}$ and the clock $t - g_{L, t}^{-}$ is reset to zero. Thus, the second term becomes 
\begin{align}
	&\sum_{z \ge L}\mathbb{E}_x\left[e^{-q\tau_{L}^{+}}1_{\{\tau^+_{L}< D\}}1_{\{ Y_{\tau_{L}^{+}} = z \}}\mathbb{E}\left[e^{-q\tau_{L, D}^{-}} 1_{\{ Y_{\tau_{L, D}^{-}}  = y \}} \Big| \tau^+_{L}< D, Y_{\tau_L^{+}} = z \right]\right]\\
	&=\sum_{z \ge L}\mathbb{E}_x\left[e^{-q\tau_{L}^{+}}1_{\{\tau^+_{L}< D\}}1_{\{ Y_{\tau_{L}^{+}} = z \}}h_n(q,z;y)\right]=\sum_{z \ge L}  u^+_n(q, D, x; z) h_n(q, z; y).
\end{align}

Next, consider the case $x\ge L$. The clock $t - g_{L, t}^{-}$ would not tick until $Y$ crosses $L$ from above, at which time the process restarts at $Y_{\tau_{L}^{-}}$ by the strong Markov property and $t - g_{L, t}^{-}$ starts ticking. Thus, we have
\begin{align}
	h_n(q,x;y)&=\sum_{z < L}\mathbb{E}_x\left[e^{-q\tau_{L}^{-}}1_{\{ Y_{\tau_{L}^{-}} = z \}}\mathbb{E}\left[e^{-q\tau_{L, D}^{-}} 1_{\{ Y_{\tau_{L, D}^{-}}  = y \}} \Big|  Y_{\tau_L^{-}} = z \right]\right]\\
	&=\sum_{z<L}\mathbb{E}_x\left[e^{-q\tau_{L}^{-}}1_{\{ Y_{\tau_{L}^{-}} = z \}}h_n(q,z;y)\right]=\sum_{z < L}  u^-_n(q, x; z) h_n(q, z; y).
\end{align}
Putting these results together yields the equation for $h_n(q,x;y)$. The equation for $h_n(q,x)$ is obtained by summing the equation for $h_n(q,x;y)$ over $y$. 
\end{proof}

Solving \eqref{eq:hqxy-2} and \eqref{eq:hqx-2} requires the quantities for first passage times, which we show are solutions to some equations.
\begin{proposition}\label{prop:wv-equation}
	We have the following equations for $v_n(D, x; y)$, $u^+_n(q, D, x; z)$ and $u^-_n(q, x; z)$.
	\begin{enumerate}
		\item $v_n(D, x; y)$ satisfies
		\begin{align}\label{eq:vn-dxy}
			\begin{cases}
				\frac{\partial v_n}{\partial D}(D, x; y) = {\pmb G} v_n(D, x; y),\ D > 0,\ x \in \mathbb{S}_n \cap (-\infty, L),\\
				v_n(D, x; y) = 0,\ D > 0,\ x \in \mathbb{S}_n \cap [L, \infty),\\
				v_n(0, x; y) = 1_{\{ x = y \}},\ x \in \mathbb{S}_n.
			\end{cases}
		\end{align}
		
		\item $u_n^+(q, D, x; z)$ satisfies
		\begin{align}\label{eq:un-p}
			\begin{cases}
				\frac{\partial u^+_n}{\partial D}(q, D, x; z) = ({\pmb G} - q{\pmb I}) u^+_n(q, D, x; z),\ D > 0,\ x \in \mathbb{S}_n \cap (-\infty, L),\\
				u^+_n(q, D, x; z) = 1_{\{  x = z \}},\ D > 0,\ x \in \mathbb{S}_n \cap [L, \infty),\\
				u^+_n(q, 0, x; z) = 0,\ x \in \mathbb{S}_n.
			\end{cases}
		\end{align}
		
		% v is a smooth function while u^+ is not, the difference is in the consistency of boundary condition
		
			Here $u^+_n(q, D, x; z)$ can be written as $u^+_{1, n}(q, x; z) - u^+_{2, n}(q, D, x; z)$ with $u^+_{1, n}(q, x; z)$ and $u^+_{2, n}(q, D, x; z)$ as solutions to the following equations:
		\begin{align}
			&\begin{cases}
				({\pmb G} - q{\pmb I}) u_{1, n}^+(q, x; z) = 0,\ x \in \mathbb{S}_n \cap (-\infty, L),\\
				u_{1, n}^+(q, x; z) = 1_{\{ x = z \}},\ x \in \mathbb{S}_n \cap [L, \infty).
			\end{cases}\\
			&\begin{cases}
				\frac{\partial u^+_{2,n}}{\partial D}(q, D, x; z) = ({\pmb G} - q{\pmb I}) u_{2, n}^+(q, D, x; z),\ D > 0,\ x \in \mathbb{S}_n \cap (-\infty, L),\\
				u_{2, n}^+(q, D, x; z) = 0,\ D > 0,\ x \in \mathbb{S}_n \cap [L, \infty),\\
				u_{2, n}^+(q, 0, x; z) = u_{1, n}^+(q, x; z),\ x \in \mathbb{S}_n.
			\end{cases}
		\end{align}
		
		\item $u^-_n(q, x; z)$ satisfies
		\begin{align}
			\begin{cases}\label{eq:un-m}
				({\pmb G} - q{\pmb I}) u^-_n(q, x; z) = 0,\ x \in \mathbb{S}_n \cap [L, \infty),\\
				u^-_n(q, x; z) = 1_{\{ x = z \}},\ x \in \mathbb{S}_n \cap (-\infty, L).
			\end{cases}
		\end{align}
	\end{enumerate}
\end{proposition}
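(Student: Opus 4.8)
The plan is to obtain all three characterizations by a first-step analysis of the CTMC $Y$: condition on the first jump time $\sigma_1$ out of the initial state $x$ (exponentially distributed with rate $\lambda_x := -{\pmb G}(x,x)$) and on the destination $Y_{\sigma_1}$ (equal to $w\ne x$ with probability ${\pmb G}(x,w)/\lambda_x$), then invoke the strong Markov property. The ``boundary'' values are read off directly from the definitions: when $x\ge L$ one has $\tau_L^+=0$, so $v_n(D,x;y)=0$ and $u_n^+(q,D,x;z)=1_{\{x=z\}}$ for $D>0$; when $x<L$ one has $\tau_L^-=0$, so $u_n^-(q,x;z)=1_{\{x=z\}}$; and the values at $D=0$ are immediate. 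It then remains to derive the evolution equations on the ``live'' set $A:=\mathbb{S}_n\cap(-\infty,L)$ (for $v_n$ and $u_n^+$) and on $\mathbb{S}_n\cap[L,\infty)$ (for $u_n^-$).

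For $x\in A$, conditioning on $(\sigma_1,Y_{\sigma_1})$ and using the Markov property gives the renewal-type identity
\begin{equation}
v_n(D,x;y)=e^{-\lambda_x D}1_{\{x=y\}}+\int_0^D\lambda_x e^{-\lambda_x s}\sum_{w\ne x}\frac{{\pmb G}(x,w)}{\lambda_x}\,v_n(D-s,w;y)\,ds,
\end{equation}
in which a jump onto $\mathbb{S}_n\cap[L,\infty)$ contributes nothing because $v_n$ vanishes there. Multiplying by $e^{\lambda_x D}$, substituting $r=D-s$, and differentiating in $D$ turns this into $\partial_D v_n(D,x;y)=-\lambda_x v_n(D,x;y)+\sum_{w\ne x}{\pmb G}(x,w)v_n(D,w;y)={\pmb G}v_n(D,x;y)$, using $-\lambda_x={\pmb G}(x,x)$; this is exactly \eqref{eq:vn-dxy}, and the action of ${\pmb G}$ over all of $\mathbb{S}_n$ agrees here with that of its restriction to $A$ precisely because $v_n(D,w;y)=0$ for $w\ge L$. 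Running the same argument with the discount factor $e^{-qs}$ carried through the waiting period yields $\partial_D u_n^+=({\pmb G}-q{\pmb I})u_n^+$ on $A$ with $u_n^+(q,0,\cdot;z)=0$ (the ``survival past $D$'' term now contributes nothing, since $\{\tau_L^+\ge D\}$ forces $1_{\{\tau_L^+<D\}}=0$); and for $u_n^-$, conditioning on the first jump from $x\ge L$ and integrating out $\sigma_1$ gives $(\lambda_x+q)u_n^-(q,x;z)=\sum_{w\ne x}{\pmb G}(x,w)u_n^-(q,w;z)$, i.e. $({\pmb G}-q{\pmb I})u_n^-(q,x;z)=0$. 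Uniqueness is standard: the evolution equations are linear ODE systems in finitely many unknowns with prescribed initial data, while the stationary systems are uniquely solvable because the relevant principal submatrix of ${\pmb G}-q{\pmb I}$ is invertible for $\Re(q)>0$ --- a principal submatrix of ${\pmb G}$ is a sub-generator, so by Gershgorin its spectrum lies in $\{\lambda:\Re(\lambda)\le 0\}$.

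For the decomposition $u_n^+=u_{1,n}^+-u_{2,n}^+$, I would assign each summand a probabilistic meaning: $u_{1,n}^+(q,x;z):=\mathbb{E}_x[e^{-q\tau_L^+}1_{\{Y_{\tau_L^+}=z\}}]$ (the same transform without the time window, well defined for $\Re(q)>0$ since $e^{-q\tau_L^+}=0$ on $\{\tau_L^+=\infty\}$) and $u_{2,n}^+(q,D,x;z):=\mathbb{E}_x[e^{-q\tau_L^+}1_{\{\tau_L^+\ge D\}}1_{\{Y_{\tau_L^+}=z\}}]$, so that $u_n^+=u_{1,n}^+-u_{2,n}^+$ follows from $1_{\{\tau_L^+<D\}}=1-1_{\{\tau_L^+\ge D\}}$; each summand then satisfies its stated system by the boundary and first-step arguments above (for $u_{2,n}^+$ the datum at $D=0$ is exactly $u_{1,n}^+$). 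Equivalently, one may verify directly that $u_{1,n}^+-u_{2,n}^+$ solves the defining system \eqref{eq:un-p} and conclude by uniqueness.

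I expect the only real friction to be bookkeeping rather than anything conceptual: in each first-step decomposition one must keep straight which jump destinations trigger the first-passage event in question, and hence where the boundary values of $v_n$ or $u_n^\pm$ are to be substituted, and one must justify differentiating the integral identities --- both routine since $\mathbb{S}_n$ is finite and all integrands are bounded and continuous in $D$. A reader preferring an operator-theoretic route may instead identify $v_n(D,\cdot;y)$ with the action on $1_{\{\cdot=y\}}$ of the semigroup of $Y$ killed on entering $\mathbb{S}_n\cap[L,\infty)$, whose generator is the restriction of ${\pmb G}$ to $A$, and read \eqref{eq:vn-dxy} off the Kolmogorov backward equation; the Feynman--Kac analogue then yields the equations for $u_n^\pm$.
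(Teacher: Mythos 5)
Your proposal is correct, and it reaches the conclusion by a genuinely different route from the paper. The paper proves only the system for $v_n$ (declaring the others analogous) and does so by discretizing time: it introduces $\tau_L^{\delta,+}=\inf\{t\in\mathbb{T}_\delta: Y_t\ge L\}$, argues $v_{n,\delta}\to v_n$ by monotone convergence, expands the one-step transition probabilities as $p_n(\delta,x,z)=\pmb{G}(x,z)\delta+o(\delta)$ to obtain a difference equation, and lets $\delta\to 0$. You instead use first-step (renewal) analysis, conditioning on the exponential holding time and the first jump destination, which yields an integral identity that differentiates directly into the backward equation. Your route has two concrete advantages: it sidesteps the limit interchange implicit in the paper's final step (passing from the difference quotient of $v_{n,\delta}$ to the derivative of $v_n$ requires more than the pointwise convergence the paper establishes), and it supplies details the paper omits — the explicit probabilistic identification of $u_{1,n}^+$ and $u_{2,n}^+$ via $1_{\{\tau_L^+<D\}}=1-1_{\{\tau_L^+\ge D\}}$, and the uniqueness of the stationary systems via the sub-generator/Gershgorin observation, which the paper only uses implicitly when writing the matrix inverses in \eqref{eq:u1-matrix} and \eqref{eq:um-matrix}. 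The paper's discretization, for its part, is the more self-contained justification of the Markov-chain mechanics if one does not want to invoke the exponential-holding-time description of the CTMC. The one point worth making explicit in your write-up is the differentiation of the integral term at $r=D-s$ near $0$ for destinations $w\ge L$, where $v_n(r,w;y)$ jumps at $r=0$; as you note, those summands vanish identically for $r>0$, so the fundamental theorem of calculus still applies.
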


\begin{proof}
We derive the equation for $v_n(D, x; y)$ and the others can be obtained in a similar manner. Let $\mathbb{T}_\delta = \{ i\delta: i = 0, 1, 2, \cdots \}$ and $\tau_L^{\delta,+} = \inf\{ t \in \mathbb{T}_\delta: Y_t \ge L \}$. As $Y$ is a piecewise constant process, we have $\tau_L^{\delta,+} \downarrow \tau_L^{+}$ as $\delta \to \infty$. Using the monotone convergence theorem, we have
\begin{align}\label{eq:vn-delta-to-vn}
	v_{n,\delta}(D, x; y) := \mathbb{E}_x\left[ 1_{\{ \tau_L^{\delta,+} \ge D \}} 1_{\{ Y_D = y \}} \right] \to v_n(D, x; y).
\end{align}
Denote the transition probability of $Y$ by $ p_n(\delta, x, z)$. We have that $p_n(\delta, x, z) = {\pmb G}(x, z) \delta + o(\delta)$ for $z \ne x$ and $ p_n(\delta, x, x) = 1 + {\pmb G}(x, x) \delta $. Then for $x < L$,   
\begin{align}
	v_{n,\delta}(D, x; y) &= \sum_{z \in \mathbb{S}_n} p_n(\delta, x, z) v_{n,\delta}(D-\delta, z; y) \\
	&= ({\pmb I} + {\pmb G}\delta + o(\delta)) v_{n,\delta}(D-\delta, x; y).
\end{align}
Therefore,
\begin{align}\label{eq:vn-delta}
	\frac{v_{n,\delta}(D, x; y) - v_{n,\delta}(D-\delta, x; y)}{\delta} = {\pmb G}v_{n,\delta}(D-\delta, x; y) + o(1). 
\end{align}
Taking $\delta$ to $0$ shows the ODE. The boundary and initial conditions ara obvious. 
\end{proof}

We introduce some matrices and vectors:  ${\pmb I}$ is the $n$-by-$n$ identity matrix. Let ${\pmb I}_L^+ = \operatorname{diag}(1_{\{ x \ge L \}})_{x \in \mathbb{S}_n}$,  ${\pmb I}_L^- = \operatorname{diag}(1_{\{ x < L \}})_{x \in \mathbb{S}_n}$, where diag means creating a diagonal matrix with the given vector as the diagonal. ${\pmb e}$ is the $n$-dimensional vector of ones, and ${\pmb e}_{y} = (1_{\{ x = y \}})_{x \in \mathbb{S}_n}$.

It is not difficult to see that the equations in Proposition \ref{prop:wv-equation} can be solved analytically and the solutions are expressed using matrix operations as follows: for $D>0$,
\begin{align}
	&{\pmb V} = \left( v_n(D, x; y) \right)_{x, y \in \mathbb{S}_n} = \exp\left( {\pmb I}_L^- {\pmb G} D \right)\pmb{I}_L^-, \label{eq:v-matrix}\\
	&{\pmb U}^+_1(q) = \left( u^+_{1, n}(q, x; z) \right)_{x, z \in \mathbb{S}_n} = (q{\pmb I}_L^- - {\pmb I}_L^- {\pmb G} + {\pmb I}_L^+ )^{-1} {\pmb I}_L^+, \label{eq:u1-matrix}\\
	&{\pmb U}^+_2(q) = \left( u^+_{2, n}(q,D, x; z) \right)_{x, z \in \mathbb{S}_n} = e^{-qD}\exp\left( {\pmb I}_L^- {\pmb G}  D \right) {\pmb I}_L^- {\pmb U}^+_1(q), \label{eq:u2-matrix} \\
	&{\pmb U}^-(q) = \left( u^-_n(q, x; z) \right)_{x, z \in \mathbb{S}_n} = (q{\pmb I}_L^+ - {\pmb I}_L^+ {\pmb G} + {\pmb I}_L^- )^{-1} {\pmb I}_L^-. \label{eq:um-matrix}
\end{align}

Let
\begin{align}
	&{\pmb H}(q) = \left( h_n(q, x; y) \right)_{x, y \in \mathbb{S}_n}, \\
	& {\pmb U}(q) = \pmb I^-_L {\pmb U}_1^+(q) - \pmb I^-_L {\pmb U}_2^+(q) + \pmb I_L^+ {\pmb U}^-(q).
\end{align} 
Then, the equation in Theorem \ref{thm:h-linear-system} becomes
\begin{align}
	{\pmb H}(q) = e^{-qD}\pmb{I}_L^-{\pmb V} + {\pmb U}(q) {\pmb H}(q).
\end{align}
The solution is
\begin{align}
	{\pmb H}(q) = e^{-qD} ({\pmb I} - {\pmb U}(q))^{-1}\pmb{I}_L^-{\pmb V}. \label{eq:H-matrix}
\end{align}
Subsequently, ${\pmb h}(q) = (h(q, x))_{x \in \mathbb{S}_n}$ can be calculated as
\begin{align}
	{\pmb h}(q) = {\pmb H}(q) {\pmb e} = e^{-qD} ({\pmb I} - {\pmb U}(q))^{-1} {\pmb V} {\pmb e}, \label{eq:h-matrix}
\end{align}
where ${\pmb e}$ is a $n$-dimensional vector of ones. 

\begin{remark}[Time complexity for a general CTMC]\label{rmk:complexity-lt}
	Calculating ${\pmb V}$, ${\pmb U}_1^+(q)$, ${\pmb U}_2^+(q)$ and ${\pmb U}^-(q)$ involves matrix exponentiation or inversion whose time complexity is generally $\mathcal{O}(n^3)$. In addition, calculating ${\pmb H}(q)$ also involves matrix inversion so its cost is also $\mathcal{O}(n^3)$. Aggregating the costs over all steps yields an overall time complexity of $\mathcal{O}(n^3)$ to calculate ${\pmb H}(q)$ and ${\pmb h}(q)$.
\end{remark}

%\begin{remark}[Complexity reduction]
%	Further reduction can be made observing that ${\pmb U}(q)$ is a block matrix and the property of diffusion models..
%	\begin{align}
%	\left[\begin{array}{cc}{\pmb{A}} & {\pmb{B}} \\ {\pmb{c}} & {\pmb{D}}\end{array}\right]^{-1}=\left[\begin{array}{cc}{\pmb{A}^{-1}+\pmb{A}^{-1} \pmb{B}\left(\pmb{D}-\pmb{C A}^{-1} \pmb{B}\right)^{-1} \pmb{C A}^{-1}} & {-\pmb{A}^{-1} \pmb{B}\left(\pmb{D}-\pmb{C A}^{-1} \pmb{B}\right)^{-1}} \\ {-\left(\pmb{D}-\pmb{C A}^{-1} \pmb{B}\right)^{-1} \pmb{C A}^{-1}} & {\left(\pmb{D}-\pmb{C A}^{-1} \pmb{B}\right)^{-1}}\end{array}\right].
%	\end{align}
%\end{remark}

\begin{remark}[Time complexity for birth-and-death processes]\label{rmk:diffusion-case}
	Significant savings in computations are possible when $Y$ is a birth-and-death process. In this case, it is easy to see that $u_n^+(q, D, x, z) \ne 0$ only when $z = L^+$ and $u^-_n(q, x, z) \ne 0$ only when $z = L^-$. Consequently, \eqref{eq:hqxy-2} reduces to
	\begin{align}
		h_n(q, x; y) &= 1_{\{ x < L \}} e^{-qD} v_n(D, x; y) + 1_{\{ x < L \}} u_n^+(q, D, x; L^+) h_n(q, L^+; y) \\
		&\quad + 1_{\{ x \ge L \}} u^-_n(q, x; L^-) h_n(q, L^-; y).\label{eq:hqxy-diff}
	\end{align}
	Setting $x = L^+, L^-$ respectively in \eqref{eq:hqxy-diff}, we obtain
	\begin{align}
		&h_n(q, L^-; y) = v_n(D, L^-; y) + u^+_n(q, D, L^-; L^+) h_n(q, L^+; y), \label{eq:hLLm-1} \\
		&h_n(q, L^+; y) = u^-_n(q, L^+; L^-) h_n(q, L^-; y). \label{eq:hLLm-2}
	\end{align}
	Solving this linear system for $h_n(q, L^-; y)$ and $h_n(q, L^+; y)$ yields
	\begin{align}
		&h_n(q, L^-; y) = \frac{e^{-qD} v_n(D, L^-; y)}{1 - u^-_n(q, L^+; L^-) u^+_n(q, D, L^-; L^+)}, \label{eq:hLLm-11} \\
		&h_n(q, L^+; y) = \frac{e^{-qD} u^-_n(q, L^+; L^-) v_n(D, L^-; y)}{1 - u^-_n(q, L^+; L^-) u^+_n(q, D, L^-; L^+)}.  \label{eq:hLLm-22} 
	\end{align}
	For a general $x\in \mathbb{S}_n$, $h_n(q, x; y)$ can be found by substituting these two equations back to \eqref{eq:hqxy-diff}.
	
	Recall that ${\pmb e}_{y} = (1_{\{ x = y \}})_{x \in \mathbb{S}_n}$. We have
	\begin{align}
		&{\pmb u}_1^+(q) = \left( u_{1, n}^+(q, D, x; L^+) \right)_{x \in \mathbb{S}_n} = (q{\pmb I}_L^- - {\pmb I}_L^- {\pmb G} + {\pmb I}_L^+ )^{-1} {\pmb e}_{L^+}, \\
		&{\pmb u}_2^+(q) = \left( u_{2, n}^+(q, D, x; L^+) \right)_{x \in \mathbb{S}_n} = \exp\left( {\pmb I}_L^- {\pmb G} D \right) {\pmb I}_L^- {\pmb u}^+_1(q),\\
		&{\pmb u}^+(q) = {\pmb u}_1^+(q) - {\pmb u}_2^+(q),\\
		&{\pmb u}^-(q) = \left( u^-_n(q, x; L^-) \right)_{x \in \mathbb{S}_n} = (q{\pmb I}_L^+ - {\pmb I}_L^+ {\pmb G} + {\pmb I}_L^- )^{-1} {\pmb e}_{L^-}.
	\end{align}
	
	Then, $u^-_n(q, L^+; L^-) = {\pmb e}_{L^+}^\top {\pmb u}^-(q)$, $u^+_n(q, D, L^-; L^+) = {\pmb e}_{L^-}^\top {\pmb u}^+(q)$ and $\left(v_n(D, L^-; y)\right)_{y \in \mathbb{S}_n} = {\pmb e}_{L^-}^\top {\pmb V} \in \mathbb{R}^{1 \times n}$. Using \eqref{eq:hLLm-11} and \eqref{eq:hLLm-22}, we get
	\begin{align}
		&{\pmb h}^-(q) = \left( h_n(q, L^-; y) \right)_{y \in \mathbb{S}_n} = \frac{e^{-qD} {\pmb e}_{L^-}^\top {\pmb V}}{1 - {\pmb e}_{L^+}^\top {\pmb u}^-(q) {\pmb e}_{L^-}^\top {\pmb u}^+(q)} \in \mathbb{R}^{1 \times n},\\
		&{\pmb h}^+(q) = \left( h_n(q, L^+; y) \right)_{y \in \mathbb{S}_n}  = \frac{e^{-qD} {\pmb e}_{L^+}^\top {\pmb u}^-(q) {\pmb e}_{L^-}^\top {\pmb V}}{1 - {\pmb e}_{L^+}^\top {\pmb u}^-(q) {\pmb e}_{L^-}^\top {\pmb u}^+(q)} \in \mathbb{R}^{1 \times n}.
	\end{align}
	The, by \eqref{eq:hqxy-diff} we obtain
	\begin{align}
		{\pmb H}(q) = e^{-qD} {\pmb I}_L^- {\pmb V} + {\pmb I}_L^- {\pmb u}^+(q) {\pmb h}^+(q) + {\pmb I}_L^+  {\pmb u}^-(q) {\pmb h}^-(q).
	\end{align}
	Subsequently, we get
	\begin{align}
		{\pmb h}(q) = {\pmb H}(q) {\pmb e} =  e^{-qD} {\pmb I}_L^- {\pmb V} {\pmb e} + {\pmb I}_L^- {\pmb u}^+(q) {\pmb h}^+(q) {\pmb e} + {\pmb I}_L^+  {\pmb u}^-(q) {\pmb h}^-(q) {\pmb e}.
	\end{align}
	Note that ${\pmb G}$ is a tridiagonal matrix and so is ${\pmb I}_L^- {\pmb G}$. Hence calculating ${\pmb u}^-(q)$ and ${\pmb u}_1^+(q)$ only costs $O(n)$ as opposed to $O(n^3)$ in the general case. Moreover, for any ${\pmb b} \in \mathbb{R}^n$, $\exp\left({\pmb I}_L^- {\pmb G} D\right) {\pmb b}$ can be calculated as,
	\begin{align}\label{eq:exp-mat-discretize}
	\exp\left({\pmb I}_L^- {\pmb G} D\right) {\pmb b} \approx  \left( {\pmb I} -  {\pmb I}_L^- {\pmb G} D/m \right)^{-m} {\pmb b},
	\end{align}  
	where the RHS can be found by solving a sequence of linear systems with the LHS given by ${\pmb I} -  {\pmb I}_L^- {\pmb G} D/m$ and the total cost is $O(mn)$. Hence the cost of computing ${\pmb u}_2^+(q)$ and ${\pmb V} {\pmb e}$ is $O(mn)$. Calculating ${\pmb h}^+(q) {\pmb e}$ and ${\pmb h}^-(q) {\pmb e}$ involves ${\pmb V} {\pmb e}$, ${\pmb u}^-(q)$ and ${\pmb u}^+(q) = {\pmb u}_1^+(q) - {\pmb u}_2^+(q)$ which can be done with complexity of either $O(mn)$ or $O(n)$. Therefore, it costs $O(mn)$ to calculate ${\pmb h}(q)$. For ${\pmb H}(q)$, we need to compute the matrix $\exp\left({\pmb I}_L^- {\pmb G} D\right)$ and it costs $O(mn^2)$ using the approximation in \eqref{eq:exp-mat-discretize}. In our implementation, we use the extrapolation technique in \cite{feng2008pricing}, which achieves good accuracy with a small $m$ ($m$ is much smaller than $n$).
 
\end{remark}

%\begin{remark}[Complexity analysis]\label{rmk:complexity-pr-lt}
%	To find the full matrix ${\pmb H}(q)$, we need to find ${\pmb V}(q)$, ${\pmb U}(q)$ as well as solve a linear system, each of which incur a time complexity of $\mathcal{O}(n^3)$ in general. The same complexity applies to ${\pmb h}(q)$. When $X_t$ is a diffusion, $Y_t$ is a birth-and-death process, $q{\pmb I} - {\pmb G}$ and ${\pmb G}$ are tridiagonal matrices, we can use an efficient matrix decomposition algorithm proposed by \cite{li2016option} to compute ${\pmb V}$ and ${\pmb U}$ multiplied by a vector and finally found ${\pmb h}(q)$ with complexity $\mathcal{O}(n^2)$. It is also possible to apply time-stepping to compute the involved matrix-exponential multiplied by a vector and the total complexity is $\mathcal{O}(mn)$ if there are $m$ time steps. {\color{red} find an algorithm for complex-valued matrix.}
%	% Seems not complexity reduction for diffusion models.
%%    The time complexity can be reduced if we are only concerned about the marginal Laplace transform ${\pmb h}(q)$ under the diffusion model. In this case ${\pmb G}$ is a tridiagonal matrix, ${\pmb V}(q) {\pmb e}$ can be found with time complexity $\mathcal{O}(n^2)$ via an efficient matrix decomposition algorithm proposed in \cite{li2016option}, the linear system also involves a tridiagonal matrix ${\pmb I - {\bf }}$ 
%\end{remark}

\subsection{Laplace Transform of The Option Price}

We derive the Laplace transform of the price of a Parisian down-and-in option w.r.t. maturity under the CTMC $Y$. The option price is given by 
\begin{align}
	u_n(t, x) = \mathbb{E}_x \left[ 1_{\{\tau_{L, D}^{-} \le t \}}f(Y_t) \right].
\end{align} 
\begin{theorem}\label{thm:opt-price-lt}
	The Laplace transform of $u_n(t, x)$ w.r.t. $t$ has the following representation:
	\begin{align}
		\widetilde{u}_n(q, x) = \int_{0}^{\infty} e^{-q t} u_n(t, x) dt = \sum_{z<L, z\in \mathbb{S}_n} h_n(q, x; z) \widetilde{w}_n(q, z), \Re(q) > 0.
	\end{align}
	where $\widetilde{w}_n(q, z) = \int_{0}^{\infty} e^{-q t} \mathbb{E}_z\left[ f\left( Y_t \right) \right] dt$ is the Laplace transform of the price of a European option with payoff $f$ w.r.t. maturity and it satisfies the linear system
	\begin{align}
		q\widetilde{w}_n(q, z) - {\pmb G}\widetilde{w}_n(q, z) = f(z),\ z \in \mathbb{S}_n. \label{eq:wn-linear-system}
	\end{align}
\end{theorem}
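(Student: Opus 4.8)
The plan is to split the expectation defining $u_n(t,x)$ according to the value of the Parisian stopping time and the state at that time, then use the strong Markov property. First I would write
\begin{align}
	u_n(t, x) = \mathbb{E}_x\left[ 1_{\{\tau_{L,D}^- \le t\}} f(Y_t) \right] = \mathbb{E}_x\left[ 1_{\{\tau_{L,D}^- \le t\}} \mathbb{E}_{Y_{\tau_{L,D}^-}}\left[ f(Y_{t - s}) \right]\Big|_{s = \tau_{L,D}^-} \right],
\end{align}
which is justified by the strong Markov property of $Y$ at the stopping time $\tau_{L,D}^-$, together with the fact that on the event $\{\tau_{L,D}^- \le t\}$ we have $t - \tau_{L,D}^- \ge 0$. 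Note that by definition $Y_{\tau_{L,D}^-} < L$ always (the Parisian excursion that triggers the clock is an excursion strictly below $L$, so the process is below $L$ at the trigger time), so the sum over the state $z$ at $\tau_{L,D}^-$ ranges only over $z \in \mathbb{S}_n \cap (-\infty, L)$. Introducing $w_n(t, z) := \mathbb{E}_z[f(Y_t)]$, the above reads $u_n(t, x) = \sum_{z < L} \mathbb{E}_x[1_{\{\tau_{L,D}^- \le t\}} w_n(t - \tau_{L,D}^-, z) 1_{\{Y_{\tau_{L,D}^-} = z\}}]$.

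Next I would take Laplace transforms in $t$ and use the convolution structure. Since $w_n(t - \tau, z) = 0$ is the natural convention for $t < \tau$ (extending $w_n(\cdot, z)$ by zero on the negatives, equivalently absorbing the indicator), the integral $\int_0^\infty e^{-qt} u_n(t,x)\, dt$ becomes, after Fubini (justified by $|f|$ bounded on the finite state space and $\Re(q) > 0$, so everything is absolutely integrable) and the substitution $t = \tau_{L,D}^- + r$,
\begin{align}
	\widetilde{u}_n(q, x) = \sum_{z < L} \mathbb{E}_x\left[ e^{-q \tau_{L,D}^-} 1_{\{Y_{\tau_{L,D}^-} = z\}} \right] \int_0^\infty e^{-qr} w_n(r, z)\, dr = \sum_{z < L} h_n(q, x; z)\, \widetilde{w}_n(q, z),
\end{align}
which is the claimed representation. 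The remaining assertion is the linear system for $\widetilde{w}_n$: this is the standard resolvent identity for the CTMC. Since $\partial_t w_n(t, z) = {\pmb G} w_n(t, z)$ (Kolmogorov backward equation, with the generator acting in the $z$ variable) and $w_n(0, z) = f(z)$, integrating $\int_0^\infty e^{-qt} \partial_t w_n(t,z)\, dt = q\widetilde{w}_n(q,z) - f(z)$ by parts and equating with $\int_0^\infty e^{-qt} {\pmb G} w_n(t,z)\, dt = {\pmb G}\widetilde{w}_n(q,z)$ gives $q\widetilde{w}_n(q,z) - {\pmb G}\widetilde{w}_n(q,z) = f(z)$; one should note the boundary term at infinity vanishes because $\Re(q)>0$ and $w_n$ is bounded.

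The main obstacle, though largely bookkeeping, is making the strong Markov step fully rigorous: one must check that $\tau_{L,D}^-$ is indeed a stopping time with respect to the natural filtration of $Y$ (it is, being defined through hitting-type conditions on the path and the running supremum $g_{L,\cdot}^-$), and that the shifted functional $f(Y_t) 1_{\{\tau_{L,D}^- \le t\}}$ decomposes as claimed — i.e., that on $\{\tau_{L,D}^- \le t\}$ the post-$\tau_{L,D}^-$ evolution of $Y$ is an independent copy started from $Y_{\tau_{L,D}^-}$ and $f(Y_t)$ depends only on that post-$\tau$ path through the elapsed time $t - \tau_{L,D}^-$. Since we only evaluate $f$ at the single time $t$ (not along the excursion), no further path information about the excursion structure is needed, so this is exactly the ordinary strong Markov property applied at $\tau_{L,D}^-$; the only care required is the measurability of $s \mapsto w_n(t-s, z)$ and interchanging it with the conditional expectation, which is immediate since $w_n$ is continuous. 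I would also remark that $\widetilde{w}_n(q, \cdot)$ coincides with the $q$-resolvent $(q{\pmb I} - {\pmb G})^{-1} f$ of $Y$ evaluated at $z$, which both gives existence/uniqueness of the solution to \eqref{eq:wn-linear-system} (as $q{\pmb I} - {\pmb G}$ is invertible for $\Re(q) > 0$) and connects to the matrix-form computations used elsewhere in the paper.
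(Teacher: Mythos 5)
Your proposal is correct and follows essentially the same route as the paper: conditioning on $\mathcal{F}_{\tau_{L,D}^-}$ via the strong Markov property, decomposing over the landing state $z<L$, and exchanging the Laplace integral with the expectation to factor out $h_n(q,x;z)\widetilde{w}_n(q,z)$, with the linear system for $\widetilde{w}_n$ being the standard resolvent identity $(q\pmb{I}-\pmb{G})^{-1}\pmb{f}$ (which the paper obtains by citing the matrix-exponential formula rather than integrating the backward equation by parts, an inessential difference).
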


\begin{proof}
	By the strong Markov property of $Y$, we get
	\begin{align}\label{eq:option-LP-CTMC}
		\widetilde{u}_n(q, x) &= \int_{0}^{\infty} e^{-q t} u_n(t, x) dt =\int_{0}^{\infty} e^{-qt} \mathbb{E}_x \left[ 1_{\{{\tau_{L, D}^-} \le t \}}f(Y_t) \right] dt \\
		&= \int_{0}^{\infty} e^{-qt} \mathbb{E}_x \left[\mathbb{E} \left[ 1_{\{{\tau_{L, D}^-} \le t \}}f(Y_t)  | \mathcal{F}_{\tau_{L, D}^-}\right]\right] dt \\
		&= \int_{0}^{\infty}  \sum_{z \in \mathbb{S}_n} e^{-qt} \mathbb{E}_x\left[ 1_{\{ Y_{\tau_{L, D}^-} = z, {\tau_{L, D}^-} \le t \}} \mathbb{E} \left[ f(Y_t) | Y_{\tau_{L, D}^-} = z \right] \right] dt \\
		&= \sum_{z \in \mathbb{S}_n}\mathbb{E}_x \left[ \left(\int_{\tau_{L, D}^-}^{\infty} e^{-q(t - {\tau_{L, D}^-})} \mathbb{E}_z\left[ f(Y_{t - \tau_{L, D}^-}) \right] dt\right) e^{-q{\tau_{L, D}^-}} 1_{\{ Y_{{\tau_{L, D}^-}} = z \}} \right] \\
		&= \sum_{z \in \mathbb{S}_n} \widetilde{w}_n(q, z) \mathbb{E}_x\left[ e^{-q{\tau_{L, D}^-}} 1_{\{ Y_{{\tau_{L, D}^-}} = z \}} \right] \\
		&= \sum_{z \in \mathbb{S}_n} h_n(q, x; z) \widetilde{w}_n(q, z).
	\end{align}
We can deduce $h_n(q, x; z)=0$ for $z\ge L$ from its definition, so the summation above is done only over those $z<L$.
	
	Using the expression in Theorem 3.1 of \cite{mijatovic2013continuously} for the European option price, we can calculate the Laplace transform of the option price as 
	\begin{align}
		\widetilde{{\pmb w}}(q) = \left(\widetilde{w}_n(q, z)\right)_{z \in \mathbb{S}_n} = \int_{0}^{\infty} e^{-q t} e^{{\pmb G} t} {\pmb f} dt = \left(q {\pmb I} - {\pmb G}  \right)^{-1} {\pmb f},
	\end{align}
	where ${\pmb f} = \left( f(z) \right)_{z \in \mathbb{S}_n}$. Thus, $\widetilde{w}_n(q, z)$ solves the linear system as stated. 
\end{proof}

Let $\widetilde{{\pmb u}}(q)=(\widetilde{u}_n(q,x))_{x\in\mathbb{S}_n}$. Theorem \ref{thm:opt-price-lt} shows that
\begin{equation}
	\widetilde{{\pmb u}}(q)={\pmb H}(q)\widetilde{{\pmb w}}(q).
\end{equation}
In general, the cost of computing ${\pmb H}(q)$ and $\widetilde{\pmb w}(q)$ is $\mathcal{O}(n^3)$. So the time complexity for computing $\widetilde{{\pmb u}}(q)$ is $\mathcal{O}(n^3)$. Significant savings are possible for birth-and-death processes. The cost of computing $\widetilde{\pmb w}(q)$ is only $\mathcal{O}(n)$ because $q {\pmb I} - {\pmb G}$ is diagonal. For ${\pmb H}(q) \widetilde{\pmb w}(q)$, it is given by
\begin{align}
	{\pmb H}(q) \widetilde{\pmb w}(q) = e^{-qD} {\pmb I}_L^- {\pmb V} \widetilde{\pmb w}(q) + {\pmb I}_L^- {\pmb u}^+(q) {\pmb h}^+(q) \widetilde{\pmb w}(q) + {\pmb I}_L^+  {\pmb u}^-(q) {\pmb h}^-(q) \widetilde{\pmb w}(q).
\end{align}
Computing ${\pmb V} \widetilde{\pmb w}(q)$, ${\pmb h}^+(q) \widetilde{\pmb w}(q)$ and ${\pmb h}^-(q) \widetilde{\pmb w}(q)$ involves calculating the exponential of a tridiagonal matrix times a vector, which can be done at cost $O(mn)$ using the approximation \eqref{eq:exp-mat-discretize} with $m$ time steps. Thus, the total cost for a birth-and-death process is only $O(mn)$.

\begin{remark}[Parisian ruin probability]
In an insurance risk context, $\tau_{L, D}^{-}$ is known as the Parisian ruin time (see e.g., \cite{czarna2011ruin}, \cite{loeffen2013parisian} and \cite{dassios2008parisian}) for some pre-specified $L$ and $D$. One can calculate $\mathbb{P}_x\left[ \tau_{L, D}^{-} \le t \right]$ with a finite $t$ and $\mathbb{P}_x\left[ \tau_{L, D}^{-} < \infty \right]$ for the ruin probability in a finite horizon and in an infinite horizon, respectively. 

To apply our method, set $f(x) = 1$ and thus $u_n(t, x) = \mathbb{P}_x\left[ \tau_{L, D}^{-} \le t \right]$. In this case, $\widetilde{w}_n(q, z)  = 1/q$, and hence ${\pmb u}(q) = {\pmb H}(q) {\pmb e} /q$. The finite horizon ruin probability can then be obtained by evaluating the Laplace inversion of ${\pmb u}(q) = {\pmb H}(q) {\pmb e} /q$ at $t$ as described in the next subsection. For the infinite horizon ruin probability, we apply the Final Value Theorem of Laplace transform, which implies
	\begin{align}
		\mathbb{P}_x\left[ \tau_{L, D}^{-} < \infty \right] = \lim_{q \downarrow 0} q {\pmb u}(q) = \lim_{q \downarrow 0} {\pmb H}(q) {\pmb e},
	\end{align}
The limit can be well approximated by ${\pmb H}(q) {\pmb e}$ for a sufficiently small $q$.
\end{remark}

\subsection{Numerical Laplace Inversion}

We utilize the numerical Laplace inversion method in \cite{abate1992fourier} with Euler summation to recover the distribution function of a Parisian stopping time and the price of a Parisian option. Suppose a function $g(\cdot)$ has known Laplace transform $\hat{g}(\cdot)$. Then $g(\cdot)$ can be recovered from $\hat{g}(\cdot)$ as
\begin{align}\label{eq:LapInv}
	g(t) \approx \frac{e^{A/2}}{2t} \Re\left( \hat{g}\left( \frac{A}{2t} \right) \right) + \frac{e^{A/2}}{t} \sum_{j = 1}^{k_1 + k_2} (-1)^j \left( \sum_{l = 0 \vee (j - k_1)}^{k_1} \begin{pmatrix}
		k_1 \\ l
	\end{pmatrix} 2^{-k_1} \right) \Re\left( \hat{g}\left( \frac{A + 2j\pi\mathrm{i}}{2t} \right) \right).
\end{align}
In our numerical examples, we set $k_1 = k_2 = 20$ and $A = 15$ and this choice already provides very accurate results. We summarize our algorithm for a CTMC combined with numerical Laplace inversion in Algorithm \ref{algo:option-pricing}. Modifications can be made following Remark \ref{rmk:diffusion-case} to simplify the computation for birth-and-death processes.

\begin{remark}[]\label{rmk:overall-complexity}
	Suppose $n_l$ terms are used in \eqref{eq:LapInv} for the Laplace inversion. Then the overall complexity of calculating the distribution of a Parisian stopping time or the price of a Parisian option is $O(n_l n^3)$ for a general CTMC and $O(n_l m n)$ for a birth-and-death process.
\end{remark}

\begin{algorithm}
	\caption{Calculate $u_n(t, x_0) = \mathbb{E}_x \left[ f(Y_t) 1_{\{ \tau_{L, D}^{-} \le t \}} \right]$ under a CTMC $Y_t$}
	\begin{algorithmic} 
		\REQUIRE ${\pmb G}, {\pmb x}, n, L, D, t, f(\cdot), k_1, k_2, A$
		
		\ENSURE $u(t, x_0) \leftarrow {\pmb u}_{i_0}$
		\STATE ${\pmb q} \leftarrow \pmb 0_{k_1 + k_2 + 1}$, ${\pmb w} \leftarrow \pmb 0_{k_1 + k_2 + 1}$, ${\pmb q}_1 \leftarrow \frac{A}{2t}$, ${\pmb w}_1 \leftarrow \frac{e^{A/2}}{2t}$
		\FOR{$j \in \{1, 2, \cdots, k_1 + k_2\}$}
		\STATE ${\pmb q}_{j + 1} = \frac{A + 2j\pi \mathrm{i} }{2t}$
		\STATE ${\pmb w}_{j + 1} = \frac{e^{A/2}}{t} (-\mathrm{i})^j \sum_{l = 0 \vee  (j - k_2)}^{k_1} \begin{pmatrix}
			k_1 \\ l
		\end{pmatrix} 2^{-k_1} $
		\ENDFOR	
		
		\STATE $i_0 \leftarrow \{ i: {\pmb x}_i = x_0 \}$, ${\pmb I}^- \leftarrow \operatorname{diag}\left(1_{\{ \pmb x < L \}}\right)$, ${\pmb I}^+ \leftarrow \operatorname{diag}\left(1_{\{ \pmb x \ge L \}}\right)$
		\STATE ${\pmb u} \leftarrow {\pmb 0}_n$, ${\pmb f} \leftarrow f(\pmb x)$
		\FOR{$j \in \{ 1, 2, \cdots, k_1 + k_2 + 1 \}$}
		\STATE ${\pmb V} \leftarrow \exp\left( {\pmb I}^- {\pmb G} D \right) {\pmb I}^-$
		\STATE ${\pmb U}_1 \leftarrow \left( {\pmb q}_j {\pmb I}^- - {\pmb I}^- {\pmb G} + {\pmb I}^+ \right)^{-1} {\pmb I}^+$
		\STATE ${\pmb U}_2 \leftarrow \exp\left( {\pmb I}^- {\pmb G} D \right) {\pmb I}^- {\pmb U}_1$
		\STATE ${\pmb U}^- \leftarrow ({\pmb q}_j {\pmb I}^+ - {\pmb I}^+ {\pmb G} + {\pmb I}^-)^{-1} {\pmb I}^-$
		\STATE ${\pmb U} \leftarrow {\pmb I}^- ({\pmb U}_1 - {\pmb U}_2) + {\pmb I}^+ {\pmb U}^-$
		\STATE ${\pmb u} \leftarrow {\pmb u} + {\pmb w}_j e^{-{\pmb q}_j D} ({\pmb I} - {\pmb U})^{-1} \pmb{I}^- {\pmb V} ({\pmb q}_j {\pmb I} - {\pmb G})^{-1} {\pmb f}$
		\ENDFOR
		\STATE $u_n(t, x_0) \leftarrow {\pmb u}_{i_0}$
	\end{algorithmic}
	\label{algo:option-pricing}
\end{algorithm}

\section{Markov Chain Approximation and Its Convergence}\label{sec:CTMC}
In order to make the paper self-contained, we briefly review the construction of CTMC approximation below and refer readers to \cite{mijatovic2013continuously} for more detailed discussions. We then prove its convergence in our problem. 

\subsection{Construction of CTMC Approximation}

We construct a CTMC on the space $\mathbb{S}_n = \{y_0, y_1, y_2, \cdots, y_n \}$ with $y_0 < y_1 < \cdots < y_n$ to approximate the Markov process with its generator given by \eqref{eq:model-specification}. Let $\mathbb{S}_n^o = \mathbb{S} \backslash \{ y_0, y_n \}$, which is the set of interior states. For each $y_j \in \mathbb{S}_n^o$, define $y_j^\pm := y_{j \pm 1}$, $I_{y_j} := ((y_j^- + y_j) / 2, (y_j + y_j^+)/2]$, and set $I_{y_0} = (-\infty, (y_0 + y_1)/2]$, $I_{y_n} = ((y_{n - 1} + y_n)/2, \infty)$ and $A - x = \{ y - x: y \in A \}$ for any $A \subseteq \mathbb{R}$ and $x \in \mathbb{R}$. For any $x \in \mathbb{S}_n^o$ and $y \in \mathbb{S}_n$, let
\begin{align}
	&\Lambda(x, y) := \int_{I_y - x} \nu(x, dz),\\
	&\bar{\sigma}^2(x) := \int_{I_x - x} z^2 1_{\{ |z| \le 1 \}} \nu(x, dz),\ \bar{\mu}(x) := \sum_{y \in \mathbb{S}_n \backslash x} (y - x) \int_{I_y - x} 1_{\{ |z| \le 1 \}} \nu(x, dz). 
\end{align}
Then $\mathcal{G}$ can be approximate as,
\begin{align}
	\mathcal{G} f(x) \approx \frac{1}{2}\widetilde{\sigma}^2(x) f^{\prime \prime}(x)+ \widetilde{\mu}(x) f^{\prime}(x)+\sum_{y \in \mathbb{S}_n \backslash x}\left(f\left(y\right)-f(x)\right) \Lambda\left(x, y\right)
\end{align}
where $\widetilde{\sigma}^2(x) = \sigma^{2}(x)+\bar{\sigma}^{2}(x)$, $\widetilde{\mu}(x) = \mu(x)-\bar{\mu}(x)$. For each $x \in \mathbb{S}_n^o$, let 
\begin{align}
	\delta^+ x = x^+ - x,\ \delta^- x = x - x^-,\ \delta x = (\delta^+ x + \delta^- x)/2.
\end{align}
We further approximate $f'(x)$ and $f''(x)$ using central difference. The resulting transition rate matrix ${\pmb G}$ for $Y_t^{(n)}$ is as follows. For each $x \in \mathbb{S}_n^o$, 
\begin{align}
	&{\pmb G}(x, x^+) = \frac{ \widetilde{\mu}(x) \delta^- x + \widetilde{\sigma}^2(x)}{2 \delta^+ x \delta x} + \Lambda(x, x^+),\\
	&{\pmb G}(x, x^-) = \frac{-\widetilde{\mu}(x) \delta^+ x + \widetilde{\sigma}^2(x)}{2 \delta^- x \delta x} + \Lambda(x, x^-), \\
	&{\pmb G}(x, y) = \Lambda(x, y),\ y \in \mathbb{S}_n \backslash \{ x, x^+, x^- \},\\
	&{\pmb G}(x, x) = -\sum_{z \in \mathbb{S}_n \backslash \{ x \}} {\pmb G}(x, y).
\end{align}
We set $y_0$ and $y_n$ as absorbing states for $Y^{(n)}_t$ and hence ${\pmb G}(y_0, z) = {\pmb G}(y_n, z) = 0$ for all $z \in \mathbb{S}_n$.

\subsection{Convergence}

We study the convergence of CTMC approximation for a general Markov process. Let $X$ be the original process and $\{Y^{(n)}\}$ is a sequence of CTMCs that converges to $X$ weakly on $D(\mathbb{R})$, the Skorokhod space of c\`adl\`ag real-valued functions endowed with the Skorokhod topology. Sufficient conditions for the weak convergence $Y^{(n)}\Rightarrow X$ can be found in e.g., \cite{mijatovic2013continuously}, which we do not repeat here for the sake of brevity. In the following, we prove convergence for the distribution of Parisian stopping times and Parisian option prices.

The key of the proof is to establish the continuity of $\tau_{L, D}^-: D(\mathbb{R}) \to \mathbb{R}$ on a subset of $D(\mathbb{R})$ which has probability one. To characterize such sets, we introduce the following definitions.
\begin{align}
	&\sigma_0^+(\omega) := 0,\ \sigma_k^-(\omega) := \inf\{ t > \sigma_{k-1}^+(\omega): \omega_t < L \},\ \sigma_k^+(\omega) := \inf\{ t > \sigma_k^-(\omega): \omega_t \ge L \},\ k \ge 1.
\end{align}
In this way, $\sigma_k^+(\omega) - \sigma_k^-(\omega)$ tracks the length of each excursion. We also consider the following subsets of $D(\mathbb{R})$.
\begin{align}
	&V := \{ \omega \in D(\mathbb{R}): \lim_{k \to \infty} \sigma^{\pm}_k(\omega) = \infty \},\\
	&W := \{ \omega \in D(\mathbb{R}): \sigma_k^+(\omega) - \sigma_k^-(\omega) \ne D \text{ for all } k \ge 1 \},\\
	&U^+ := \{ \omega \in D(\mathbb{R}): \inf\{ t \ge u: \omega_t \ge L \} = \inf\{ t \ge u: \omega_t > L \} \text{ for all } u \ge 0\},\\
	&U^- := \{ \omega \in D(\mathbb{R}): \inf\{ t \ge u: \omega_t \le L \} = \inf\{ t \ge u: \omega_t < L \} \text{ for all } u \ge 0\}.
\end{align}

\begin{theorem}\label{thm:conv}
	Suppose $X$ is c\`adl\`ag (right continuous with left limits), $\mathbb{P}[X \in V] = \mathbb{P}[X \in W] = \mathbb{P}[X \in U^+] = \mathbb{P}[X \in U^-] = 1$, and $Y^{(n)} \Rightarrow X$. Then, for any bounded function $f(\cdot)$ whose discontinuity points are $\mathcal{D}$ and $t > 0$ such that $\mathbb{P}[X_t \in \mathcal{D}] = 0$ and $\mathbb{P}[\tau_{L, D}^- = t] = 0$,
	\begin{align}
		\mathbb{E}\left[ 1_{\{ \tau_{L, D}^{(n), -} \le t \}} f(Y^{(n)}_t) \right] \to \mathbb{E}\left[ 1_{\{ \tau_{L, D}^{-} \le t \}} f(X_t) \right], \text{ as\ } n \to \infty.
	\end{align}
\end{theorem}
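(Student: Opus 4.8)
The plan is to prove this by establishing that the map $\omega \mapsto 1_{\{\tau_{L,D}^-(\omega) \le t\}} f(\omega_t)$ is continuous (in the Skorokhod topology) at $\mathbb{P}$-almost every path of $X$, and then invoke the continuous mapping theorem together with $Y^{(n)} \Rightarrow X$. The natural decomposition is into two pieces: first the functional $\tau_{L,D}^- : D(\mathbb{R}) \to [0,\infty]$, and then the evaluation map $\omega \mapsto \omega_t$. For the latter, it is standard that $\omega \mapsto \omega_t$ is continuous at every $\omega$ that is continuous at $t$; since $\mathbb{P}[X_t \in \mathcal{D}] = 0$ handles the discontinuity set of $f$, and the product structure means we only need joint continuity at paths where both factors are well-behaved, the bulk of the work is the continuity of $\tau_{L,D}^-$.

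The heart of the argument is: \emph{on the set $V \cap W \cap U^+ \cap U^-$, the map $\tau_{L,D}^-$ is continuous, and moreover $\tau_{L,D}^- < \infty$ there.} First I would argue $\tau_{L,D}^- < \infty$ on $V \cap W$: the excursions below $L$ have lengths $\sigma_k^+ - \sigma_k^-$, none of which equals $D$ by membership in $W$; if all were strictly less than $D$ then, combined with $\sigma_k^\pm \to \infty$ from $V$ and the fact that between excursions the path spends positive time at or above $L$, one gets a contradiction with... actually one needs to rule out the pathological case, so I would instead directly observe that $\tau_{L,D}^-$ is finite iff some excursion below $L$ reaches length $D$, and argue using $V$ and $W$ that this must occur (if not, every excursion has length $< D$, but this is consistent with finiteness only if... here care is needed — in fact $\tau_{L,D}^-$ could be $+\infty$ legitimately; re-reading, the theorem only needs continuity at $t$ with $\mathbb{P}[\tau_{L,D}^- = t] = 0$, and $1_{\{\tau_{L,D}^- \le t\}}$ is what must be continuous, so I would show: if $\tau_{L,D}^-(\omega) \ne t$ and $\omega \in V\cap W\cap U^+\cap U^-$, then $\tau_{L,D}^-$ is continuous at $\omega$, which suffices). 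Concretely, for a sequence $\omega^{(n)} \to \omega$ in Skorokhod topology, I would use the $U^\pm$ conditions to ensure the hitting times $\sigma_k^\pm(\omega^{(n)})$ converge to $\sigma_k^\pm(\omega)$ (the $U^\pm$ sets exclude paths that touch $L$ without crossing, which is exactly what makes level-hitting times discontinuous), the $W$ condition to ensure the excursion lengths are not exactly at the critical threshold $D$ (so the indicator $1_{\{\sigma_k^+ - \sigma_k^- \ge D\}}$ is locally constant), and the $V$ condition so that only finitely many excursions matter up to any finite time.

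The key steps in order: (1) recall/cite that $\mathbb{P}[\tau_{L,D}^- = t]=0$ and $\mathbb{P}[X_t \in \mathcal{D}]=0$ and $\mathbb{P}[X \in V\cap W \cap U^+ \cap U^-]=1$, so it suffices to prove continuity of $\omega \mapsto 1_{\{\tau_{L,D}^-(\omega)\le t\}}f(\omega_t)$ on the event $E := \{X \in V\cap W\cap U^+\cap U^-\} \cap \{\tau_{L,D}^-(X) \ne t\} \cap \{X \text{ continuous at } t\} \cap \{X_t \notin \mathcal{D}\}$, which has probability one; (2) fix $\omega \in E$ and $\omega^{(n)} \xrightarrow{J_1} \omega$; show the successive crossing times $\sigma_k^\pm(\omega^{(n)}) \to \sigma_k^\pm(\omega)$ using the $U^\pm$ membership of $\omega$ — this is a lemma about continuity of level crossing times in $D(\mathbb{R})$ for paths that cross transversally; (3) deduce that for each fixed $k$, $\mathbf{1}\{\sigma_k^+(\omega^{(n)}) - \sigma_k^-(\omega^{(n)}) \ge D\} \to \mathbf{1}\{\sigma_k^+(\omega) - \sigma_k^-(\omega) \ge D\}$ using $W$ (no excursion length equals $D$), and that the corresponding candidate Parisian time $\sigma_k^-(\omega^{(n)}) + D$ converges; (4) use $V$ to truncate: only finitely many $k$ are relevant before time $t+1$, so $\tau_{L,D}^-(\omega^{(n)}) \to \tau_{L,D}^-(\omega)$, hence $1_{\{\tau_{L,D}^-(\omega^{(n)})\le t\}} \to 1_{\{\tau_{L,D}^-(\omega)\le t\}}$ since $\tau_{L,D}^-(\omega) \ne t$; (5) combine with $\omega^{(n)}_t \to \omega_t$ (Skorokhod convergence plus continuity of $\omega$ at $t$) and continuity of $f$ at $\omega_t$ to get the product converges; (6) apply the continuous mapping theorem, noting $f$ bounded gives uniform integrability for the convergence of expectations. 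The main obstacle is step (2)–(3): carefully proving that the iterated first-passage times $\sigma_k^\pm$ are continuous functionals at paths in $U^\pm$ under the $J_1$ topology, including handling the subtlety that Skorokhod convergence allows small time-deformations, so one must show the crossings of $\omega^{(n)}$ occur near those of $\omega$ and not spuriously early or late — this requires exploiting that $\omega$ genuinely crosses the level $L$ (strictly from one side to the other) at each $\sigma_k^\pm$, which is the content of the $U^\pm$ conditions, together with right-continuity of $X$.
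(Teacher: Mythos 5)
Your overall architecture coincides with the paper's: both reduce the theorem to the $\mathbb{P}$-a.s.\ continuity of the functional $\omega\mapsto 1_{\{\tau_{L,D}^-(\omega)\le t\}}f(\omega_t)$ via the continuous mapping theorem, both correctly assign the roles of $U^\pm$ (transversal crossings), $W$ (excursion lengths avoid the critical value $D$) and $V$ (finitely many relevant excursions), and both use $\mathbb{P}[\tau_{L,D}^-=t]=0$ to neutralize the discontinuity of the indicator. Where you genuinely diverge is in how the continuity of $\tau_{L,D}^-$ on $V\cap W\cap U^+\cap U^-$ is established. You propose to prove that the iterated crossing times satisfy $\sigma_k^\pm(\omega^{(n)})\to\sigma_k^\pm(\omega)$ and then pass to the excursion lengths. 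The paper never proves (or needs) this stronger statement: it runs a $\limsup$/$\liminf$ sandwich around an arbitrary $s$, applying the $J_1$-continuity of the running supremum and infimum (Proposition 2.4 of \cite{jacod2013limit}) on deterministic intervals $[\sigma_k^-(\omega)+\varepsilon,\sigma_k^+(\omega)\wedge s-\varepsilon]$ and $[\sigma_k^+(\omega)+\varepsilon,\sigma_{k+1}^-(\omega)-\varepsilon]$ shrunk slightly inside the excursion and inter-excursion windows of the \emph{limit} path. This buys robustness: a long excursion of $\omega$ forces a long excursion of $\omega^{(n)}$ (upper bound on $\tau$), and strict positivity of $\omega-L$ between excursions confines every excursion of $\omega^{(n)}$ to an $\varepsilon$-fattened window (lower bound on $\tau$), without ever having to match the $k$-th crossing of $\omega^{(n)}$ to the $k$-th crossing of $\omega$.

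Be aware that your step (2)--(3), which you rightly flag as the main obstacle, is strictly harder than what is needed and is where your plan is most fragile. Convergence of \emph{iterated} passage times requires an induction in which the restart time itself varies, and it is sensitive to spurious short dips of $\omega^{(n)}$ below $L$ (or missed short excursions), which would shift the entire indexing $k\mapsto\sigma_k^\pm(\omega^{(n)})$ even though they are harmless for $\tau_{L,D}^-$ itself. The paper's interval-based argument absorbs such perturbations automatically. If you pursue your route, you must either prove the iterated-crossing-time continuity in full (handling the varying restart times and the index-matching), or weaken your intermediate claim to something like the paper's: that for each $k$ the excursions of $\omega^{(n)}$ below $L$ up to time $s$ are eventually contained in $\varepsilon$-neighborhoods of those of $\omega$ and have lengths within $2\varepsilon$ of $\sigma_k^+(\omega)\wedge s-\sigma_k^-(\omega)$. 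As written, the proposal is a sound plan with the correct ingredients, but its central lemma is left unproven and, in the form stated, is more delicate than the statement the paper actually verifies.
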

	
\begin{proof}
		$Y^{(n)}\Rightarrow X$ implies that if $g: D(\mathbb{R}) \to \mathbb{R}$ is bounded and continuous on some subset $C$ of $D(\mathbb{R})$ such that $\mathbb{P}[X \in C] = 1$, then it holds that
		\begin{align}
			\mathbb{E}[g(Y^{(n)})] \to \mathbb{E}[g(X)], \text{ as } n \to \infty.
		\end{align}
		With the assumption that $\mathbb{P}[X_t \in \mathcal{D}] = 0$, $\mathbb{P}[\tau_{L, D}^- = t] = 0$ and $\mathbb{P}[X \in V] = \mathbb{P}[X \in W] = \mathbb{P}[X \in U^+] = \mathbb{P}[X \in U^-] = 1$, it suffices to establish the continuity of $\tau_{L, D}^-(\omega)$ on $V \cap W \cap U^+ \cap U^-$. 	
		For $\omega \in V \cap W \cap U^+ \cap U^-$, suppose $\omega^{(n)} \to \omega$ as $n \to \infty$ on $D(\mathbb{R})$. Note that for $\omega\in U^+ \cap U^-$, we have $\sigma_{k+1}^-(\omega)>\sigma_k^+(\omega)>\sigma_k^-(\omega)$ for $k\ge 1$.
		\begin{itemize}
			\item If $\tau_{L, D}^-(\omega) < s$. Then there exists $k 
			\ge 1$ such that $\sigma_k^+(\omega) \wedge s - \sigma^-_k(\omega)  > D$. Let $J(\omega)=\{t: \omega(t)\ne \omega(t-)\}$. Since $\mathbb{R}^+\backslash J(\omega)$ is dense, we can find a $\varepsilon > 0$ that is small enough such that $\sigma_k^+(\omega) \wedge s - \sigma^-_k(\omega) - 2\varepsilon > D$, $\omega$ is continuous at $\sigma_k^+(\omega) \wedge s - \varepsilon$ and $\sigma^-_k(\omega)+\varepsilon$, and $\sup\{ \omega_u: \sigma^-_k(\omega) + \varepsilon \le u \le \sigma_k^+(\omega) \wedge s - \varepsilon  \} < L$. By Proposition 2.4 in  \cite{jacod2013limit} which shows the continuity of the supremum process, $\sup\{ \omega_u^{(n)}: \sigma^-_k(\omega) + \varepsilon \le u \le \sigma_k^+(\omega) \wedge s - \varepsilon  \} < L$. Since $\sigma_k^+(\omega) \wedge s - \sigma^-_k(\omega) - 2\varepsilon > D$, we conclude that $\tau_{L, D}^{ -}(\omega^{(n)}) < s$ for sufficiently large $n$. Therefore, we have $\limsup_{n \to \infty} \tau_{L, D}^{ -}(\omega^{(n)}) \le \tau_{L, D}^-(\omega)$.
			
			\item If $\tau_{L, D}^-(\omega) > s$. Since $\omega \in V$, there is $\overline{k} \ge 1$ such that $\sigma_{\overline{k}}^+(\omega) \ge s $ and $\sigma_{\overline{k}}^-(\omega) < s$. As $\omega \in W$, $\max\{ \sigma_k^+(\omega) \wedge s - \sigma_k^-(\omega) : 1\le k\le \overline{k}\} < D$. Since $\mathbb{R}^+\backslash J(\omega)$ is dense, we can find a $\varepsilon > 0$ that is small enough such that $\sigma_k^+(\omega) \wedge s - \sigma_k^-(\omega) + 2\varepsilon < D$, $\omega$ is continuous at $\sigma_k^+ \wedge s + \varepsilon$ and $\sigma_k^--\varepsilon$ for any $1 \le k \le \overline{k}$, and $\inf\{ \omega_u:  \sigma_k^+(\omega) + \varepsilon \le  u \le \sigma_{k+1}^-(\omega) -\varepsilon \} > L$ for any $1 \le k < \overline{k}$. In the same spirit of Proposition 2.4 in \cite{jacod2013limit}, we can show that the continuity of the infimum process. Hence we have for sufficiently large $n$, $\inf\{ \omega_u^{(n)}:  \sigma_k^+(\omega) + \varepsilon \le  u \le \sigma_{k+1}^-(\omega) -\varepsilon \} > L$ for any $1 \le k < \overline{k}$. It follows that the age of excursion below $L$ of $\omega^{(n)}$ up to time $s$ cannot exceed $\max\{ \sigma_k^+(\omega) \wedge s - \sigma_k^-(\omega) + 2\varepsilon : 1\le k\le \overline{k}\} < D$. This implies $\tau_{L, D}^-(\omega^{(n)}) > s$ for sufficiently large $n$. Therefore, $\liminf_{n \to \infty} \tau_{L, D}^{ -}(\omega^{(n)}) \ge \tau_{L, D}^-(\omega)$.
		\end{itemize}
		Combing the arguments above, we obtain the continuity of $\tau_{L,D}^-(\cdot)$ on $V \cap W \cap U^+ \cap U^-$. This concludes the proof.
	\end{proof}

\section{Convergence Rate Analysis for Diffusion Models}\label{sec:conv-rate}

We assume the target diffusion process $X$ lives on a finite interval $[l,r]$ with absorbing boundaries. For diffusions that are unbounded, it means that we consider their localized version. The absolute value of the localization levels can be chosen sufficiently large so that localization error is negligible compared with error caused by CTMC approximation. The latter is the focus of our analysis and the former is ignored. 

We consider a sequence of CTMCs $\{Y^{(n)}_t, n = 1, 2, \cdots\}$ with grid size $\delta_n = \max_{x \in \mathbb{S}_n^-} \delta^+ x$ ($\mathbb{S}_n^- = \mathbb{S}_n \backslash \{ y_n \}$) and fixed lower and upper levels, i.e., $y_0 = l$, $y_n = r$. Let $\mathbb{S}_n^o=\mathbb{S}_n \backslash \{ y_0, y_n \}$. We impose the following condition on $\mathbb{S}_n$.
\begin{assumption}\label{assump:grid}
	There is a constant $C > 0$ independent $n$ such that for every $\mathbb{S}_n$,
	\begin{align}
		\frac{\max_{x \in \mathbb{S}_n^-} \delta^+ x}{\min_{x \in \mathbb{S}_n^-} \delta^+ x} \le C.
	\end{align}
\end{assumption}
This assumption requires the minimum step size cannot converge to zero too fast compared with the maximum step size, which is naturally satisfied by grids used for applications.  

Let $\pmb{G}_n$ denote the generator matrix of $Y^{(n)}$. 
We specify $l$ and $r$ as absorbing boundaries for the CTMC to match the behavior of the diffusion. Throughout this section, we impose the following conditions on the drift and diffusion coefficient of $X$ and the payoff function of the option.
\begin{assumption}\label{assump:smoothness}
	Suppose $\mu(x) \in C^3([l, r]),\ \sigma(x) \in C^4([l, r]),\ \min_{x \in [l, r]} \sigma(x) > 0$. The payoff function $f(x)$ is Lipschitz continuous on $[l, r]$ except for a finite number of points $\mathcal{D} = \{ \xi_1, \xi_2, \cdots, \xi_d \}$ in $(l, r)$.
\end{assumption} 
The coefficients of many diffusion models used in financial applications are sufficiently smooth, so they satisfy Assumption \ref{assump:smoothness}. It is possible to extend our analysis to handle diffusions with nonsmooth coefficients using the results in \cite{zhang2019analysis}, but for simplicity we only consider the smooth coefficient case in this paper.

For some diffusions, $\sigma(x)$ vanishes at $x = 0$. To fit them into our framework, we can localize the diffusion at $l = \varepsilon > 0$. By choosing a very small $\varepsilon$, the approximation error is negligible. 

We introduce the notation $\mathcal{O}(g(k, n))$ for quantities bounded by $c |g(k, n)|$ for some constant $c> 0$ which is independent of $k$ and $n$.

\subsection{Outline of the Proofs}

Recall that the Laplace transform of the option price is given by
\begin{align}
	\widetilde{u}_n(q, x) = \sum_{z \in \mathbb{S}_n, z < L} h_n(q, x; z) \widetilde{w}_n(q, z), \label{eq:un-opt-recall}
\end{align}
where $\widetilde{w}_n(q, z)$ satisfies the linear system
\begin{align}
	(q{\pmb I} - {\pmb G}_n)\widetilde{w}_n(q, z) = 0,
\end{align}
and $h_n(q, x; y)$ is given by
\begin{align}\label{eq:hqxy-diff-recall}
	h_n(q, x; y) &= 1_{\{ x < L \}} e^{-qD} v_n(D, x; y) + 1_{\{ x < L \}} u_n^+(q, D, x; L^+) h_n(q, L^+; y) \\
	&\quad + 1_{\{ x \ge L \}} u^-_n(q, x; L^-) h_n(q, L^-; y),
\end{align}
where
\begin{align}
	&h_n(q, L^-; y) = \frac{e^{-qD} v_n(D, L^-; y)}{1 - u^-_n(q, L^+; L^-) u^+_n(q, D, L^-; L^+)}, \label{eq:hLLm-11-recall} \\
	&h_n(q, L^+; y) = \frac{e^{-qD} u^-_n(q, L^+; L^-) v_n(D, L^-; y)}{1 - u^-_n(q, L^+; L^-) u^+_n(q, D, L^-; L^+)}.  \label{eq:hLLm-22-recall} 
\end{align}

In the following, we will analyze the convergence of these quantities to their limits, which are the corresponding quantities for the diffusion model as ensured by Theorem \ref{thm:conv}, which proves convergecence of CTMC approximation. 
\begin{enumerate}
	\item Analysis of $v_n(D, x; y)$ (see \eqref{eq:vn-x-conv} and \eqref{eq:vn-L-conv}): The quantity is essentially the up-and-out barrier option price with $L^+$ as the effective upper barrier and payoff function $1_{\{ x = y \}}$. By \cite{zhang2018analysis}, $v_n(D, x; y)$ is represented by a discrete eigenfunction expansion based on the eigenvalues and eigenvectors from a matrix eigenvalue problem. Its continuous counterpart also admits an eigenfunction expansion representation where the eigenvalues and eigenfunctions are solutions to a Sturm-Liouville eigenvalue problem on the interval $(l, L^+)$. To analyze the approximation error, we can first analyze the errors for the eigenvalues and eigenfunctions and then exploit the eigenfunction expansions. However, there is one catch. In general, $L^+$ is not equal to $L$ unless $L$ is put on the grid. Therefore, the eigenpairs are dependent on the grid and we must carefully analyze the sensitivities of eigenpairs with respect to the boundary (see the second part of Lemma \ref{lmm:priors-eigenfunctions}). Hence we have additional error terms like $\mathcal{O}(L^+ - L)$ in \eqref{eq:vn-x-conv} and \eqref{eq:vn-L-conv} caused by $L$ not on the grid. 
	
	\item Analysis of $u_n^+(q, D, x; L^+)$ (see \eqref{eq:unp-x-conv} and \eqref{eq:unp-L-conv}): It can be split into two parts $u_{1, n}^+(q, x; L^+)$ and $u_{2, n}^+(q, D, x; L^+)$ as in \eqref{eq:un-plus-split}. $u_{1, n}^+(q, x; L^+)$ satisfies a linear system which is essentially a finite difference approximation to a two-point boundary value problem with effective boundaries $l$ and $L^+$. The error caused by applying finite difference approximation can be analyzed by standard approaches. However, we need to derive the dependence of the solution to the two-point boundary value problem on the right boundary as $L^+$ is not necessarily on the grid. $u_{2, n}^+(q, D, x; L^+)$ can be analyzed similarly as $v_n(D, x; y)$ once we obtain the error of $u_{1, n}^+(q, x; L^+)$. 
	
	\item Analysis of $u_n^-(q, x; L^-)$ (see \eqref{eq:conv-umn-x} and \eqref{eq:conv-umn-L}): $u_n^-(q, x; L^-)$ satisfies a linear system which results from finite difference approximation to a two-point boundary value problem with boundaries $L^-$ and $r$. Like $u_{1, n}^+(q, x; L^+)$, the error caused by finite difference approximation can be analyzed by standard approaches. In this case, $L^-$ is dependent on the grid and hence we need to derive the sensitivity of the solution to the two-point boundary value problem w.r.t. the left boundary. In particular, we show that the first and second order derivatives w.r.t. the left boundary form a nonlinear differential equation \eqref{eq:wm-b-boundary}.
	
	\item Analysis of $h_n(q, x; y)$ (see \eqref{eq:hn-xy-error}): By setting $x= L^-$ in $v_n(D, x; y)$, $u_n^+(q, D, x; L^+)$ and $x = L^+$ in $u_n^-(q, x; L^-)$ and expanding them around $x=L$, we have the error estimates for $v_n(D, L^-; y)$, $u_n^+(q, D, L^-; L^+)$ and $u_n^-(q, L^+; L^-)$ as shown in \eqref{eq:vn-L-conv}, \eqref{eq:unp-L-conv} and \eqref{eq:conv-umn-L}. Substituting these estimates to \eqref{eq:hLLm-11-recall} and \eqref{eq:hLLm-22-recall} yields the error estimates of $h_n(q, L^-; y)$ and $h_n(q, L^+; y)$. Afterwards, the error of $h_n(q, x; y)$ can be obtained by applying all the error estimates obtained so far to \eqref{eq:hqxy-diff-recall}. We would like to emphasize that the boundary dependent estimates of eigenvalues, eigenfunctions and solutions to the two-point boundary value problems are the key to cancel many lower order error terms. The final error is given by $\mathcal{O}(L^+-L) + \mathcal{O}(\delta_n^2)$, which indicates that second order convergence of $h_n(q, x,y)$ can be ensured if $L$ is put on the grid.
	
	\item Analysis of the option price $\widetilde{u}_n(q, x)$ (see \eqref{eq:un-error}): $\widetilde{w}_n(q, z)$ satisfies a linear system which is a finite different approximation to a two-point boundary value problem with a nonsmooth source term. We show that the error of $\widetilde{w}_n(q, z)$ is $\mathcal{O}(\delta_n^\gamma)$ (see \eqref{eq:wn-conv}), where $\gamma = 1$ in general and $\gamma = 2$ if all the discontinuities are placed exactly midway between two neighboring grid points. Using this estimate and the one for $h_n(q, x; y)$ in \eqref{eq:un-opt-recall} and then applying the error estimate of the trapezoid integration rule yield the error estimate of the option price.
	
\end{enumerate}

\subsection{Analysis of $v_n(D, x; y)$}
%We analyze the errors based on the equations \eqref{eq:hqxy-diff}, \eqref{eq:hLLm-1}, \eqref{eq:hLLm-2}, \eqref{eq:hLLm-11} and \eqref{eq:hLLm-22}.
%Especially, we need to establish the error of $v_n(D, x; y)$, \\ $u_n^+(q, D, x; L^+)$ and $u_n^-(q, x; L^-)$. 

By \cite{zhang2018analysis}, $v_n(D, x; y)$ admits an eigenfunction expansion for $x \in \mathbb{S}_n \cap [l, L^+],\ y \in \mathbb{S}_n \cap (l, L]$:
\begin{align}
	v_n(D, x; y) = m_n(y)\delta y \sum_{k = 1}^{n_e} e^{-\lambda_{n, k}^+ D} \varphi_{n, k}^+(x) \varphi_{n, k}^+(y),
\end{align}
where $n_e$ is the number of points in $\mathbb{S}_n^o \cap (l, L)$,
\begin{align}
	&m_n(x) = \prod_{l<y \leq x, y \in \mathbb{S}_{n}} \frac{\mu\left(y^{-}\right) \delta^{-} y^{-}+\sigma^2\left(y^{-}\right)}{-\mu(y) \delta^{+} y+\sigma^2(y)},\ x = y_2, \cdots, y_{n - 1},\\
	&m_n(y_1) = m(y_1) \exp\left( \frac{\mu(y_1)}{\sigma^2(y_1)} (\delta^+ y_1 - \delta^- y_1) \right),
\end{align}
and $(\lambda^{+}_{n, k}, \varphi^+_{n, k}(x))$, $1 \le k \le n_e$ are the pairs of solutions to the eigenvalue problem,
\begin{align}
	\begin{cases}
		{\pmb G}_n \psi(x) = -\lambda\psi(x) ,\ x \in \mathbb{S}_n \cap (l, L),\\
		\psi(l) = \psi(L^+) = 0.
	\end{cases}
\end{align}
These eigenfunctions are normalized and orthogonal under the measure $m_n(x)\delta x$. That is, 
\begin{equation} 
	\sum_{x \in \mathbb{S}_n \cap (l, L)} \varphi^+_{n, k}(x) \varphi^+_{n, l}(x) m_n(x) \delta x = \delta_{k,l}.
\end{equation}
where $\delta_{kl}$ is the Kronecker delta. 

As shown in \cite{zhang2018analysis}, $\pmb{G}_n$ admits a self-adjoint representation
	\begin{align}
		\pmb{G}_n g(x) = \frac{1}{m_n(x)} \frac{\delta^- x}{\delta x} \nabla^-\left( \frac{1}{s_n(x)} \nabla^+ g(x) \right),\ x \in \mathbb{S}_n^o,
	\end{align}
with $1/s_n(x) = m_n(x) (b(x) \delta^-x + a(x))/2$ for $x \in \mathbb{S}_n^o$, $1/s_n(x) = m_n(x^+) (-b(x^+) \delta^+x^+ + a(x^+))/2$ for $x = l$, and
\begin{align}
	\nabla^- g(x) = \frac{g(x) - g(x^-)}{\delta^- x},\ \nabla^+ g(x) = \frac{g(x^+) - g(x)}{\delta^+ x}.
\end{align}

For $x \in \mathbb{S}_n \cap [l, L^+]$, $v_n(D, x; l)$ satisfies 
\begin{align}
	\begin{cases}
		\frac{\partial v_n}{\partial D}(D, x; l) = {\pmb G}_n v_n(D, x; l),\ D> 0, x \in \mathbb{S}_n \cap (l, L),\\
		v_n(D, l; l) = 1,\ D > 0,\\
		v_n(D, L^+; l) = 0,\ D > 0,\\
		v_n(0, x; l) = 1_{\{ x = l \}}.
	\end{cases}
\end{align}
It can be written as the difference of two parts: $v_n(D, x; l) = v_{1, n}(x; l) - v_{2, n}(D, x; l)$ with $v_{1, n}(x; l)$ satisfying
\begin{align}
	\begin{cases}
		{\pmb G}_n v_{1, n}(x; l) = 0,\ x \in \mathbb{S}_n \cap (l, L),\\
		v_{1, n}(l; l) = 1,\ v_{2, n}(L^+; l) = 0, 
	\end{cases}
\end{align}
and $v_{2, n}(D, x; l)$ satisfying
\begin{align}
	\begin{cases}
		\frac{\partial v_{2,n}}{\partial D}(D, x; l) = {\pmb G}_n v_{2,n}(D, x; l),\ D> 0, x \in \mathbb{S}_n \cap (l, L),\\
		v_{2,n}(D, l; l) = 0,\ D > 0,\\
		v_{2,n}(D, L^+; l) = 0,\ D > 0,\\
		v_{2,n}(0, x; l) = v_{1, n}(x; l) - 1_{\{ x = l \}}.
	\end{cases}
\end{align}
$v_{2, n}(D, x; l)$ admits the eigenfunction expansion,
\begin{align}
	v_{2, n}(D, x; l) = \sum_{k = 1}^{n_e} d_{n, k}e^{-\lambda_{n, k}^+ D} \varphi_{n, k}^+(x),\ d_{n, k} = \sum_{y \in \mathbb{S}_n \cap (l, L)} v_{1, n}(y; l) m_n(y) \delta y.
\end{align}

To study the convergence of $v_n(D, x; y)$, we consider the boundary-dependent eigenvalue problem
\begin{align}
	\begin{cases}
		\mathcal{G} \psi(x)  = -\lambda \psi(x),\ x \in (l, b),\\
		\psi(l) = \psi(b) = 0,  
	\end{cases}
\end{align}
where $b>l$, and $\mathcal{G}$ is a second-order differential operator given by
\begin{align}
	\mathcal{G} f(x)=\frac{1}{2} \sigma^{2}(x) f^{\prime \prime}(x)+\mu(x) f^{\prime}(x).
\end{align}
The sequence of solutions are $(\lambda_k^+(b), \varphi_k^+(x, b))$, $k = 1, 2, \cdots$. Consider the speed density of the diffusion given by
\begin{equation}\label{eq:speed-density}
	m(x) = \frac{2}{\sigma^2(x)} e^{\int_{l}^{x} \frac{2\mu(y)}{\sigma^2(y)} dy}.
\end{equation}
The eigenfunctions are normalized and orthogonal under the speed measure $m(x)dx$. That is,
\begin{equation}
	\int_{l}^{b}\varphi^+_k(x, b) \varphi^+_l(x, b) m(x) dx = \delta_{k,l}.
\end{equation}
We also consider the solution $v_1(x, b)$ to the equation
\begin{align}
	\begin{cases}
	 	\mathcal{G} v_1(x, b) = 0,\ x \in [l, b],\\
	 	v_1(l, b) = 1,\ v_1(b, b) = 0,
	\end{cases}
\end{align}
and
\begin{align}
	v_2(D, x, b) = \sum_{k = 1}^{\infty} d_k(b) e^{-\lambda_k^+ D} \varphi_k^+(x),\ d_k(b) = \int_{l}^{b} v_1(y, b) m(y) dy.
\end{align}

\begin{lemma}\label{lmm:priors-eigenfunctions}
	Under Assumption \ref{assump:grid} and Assumption \ref{assump:smoothness}, the following results hold.
	\begin{enumerate}
		\item For the speed density, there holds that for $x \in \mathbb{S}_n^o$,
		\begin{align}
			m_n(x)= m(x) + m(x) \frac{b(x)}{a(x)} \left( \delta^+ x - \delta^- x \right) + \mathcal{O}(\delta_n^2). \label{eq:mn-m}
		\end{align}
		There exist constants $c_1, c_2 > 0$ independent of $h_n$ and $x$ such that
		\begin{align}
			c_1 \le m_n(x) \le c_2,\ c_1 \le m(x) \le c_2. \label{eq:m-bound}
		\end{align}
		
		\item $\partial_b^i \lambda_k(b)$ with $0 \le i \le 1$, $\partial_x^i \partial_b^j \varphi^+_k(x, b)$ with $0 \le i \le 3$, $0 \le j \le 2$, are well-defined and continuous for $x \in [l, L]$ and $b$ in any closed interval $I \subset [L, r)$. We also have
		\begin{align}
			& \lambda_{n, k}^+ = \lambda_k(L) + k^{3}\mathcal{O}(L^+ - L) + \mathcal{O}(k^{5} \delta_n^2), \label{eq:eigenvalue-error}\\
			& \varphi_{n, k}^+(x) = \varphi^+_k(x, L) + k \mathcal{O}(L^+ - L) + \mathcal{O}(k^4 \delta_n^2),\\
			& \varphi^+_{n, k}(L^-) = -\partial_x \varphi^+_k(L, L) \delta^+ L^- + \frac{1}{2} \partial_{xx} \varphi^+_k(L, L) (\delta^+ L^-)^2 \\
			&\quad\quad \quad \quad \quad \quad  + \mathcal{O}(k^{4} \delta^+ L^-)(L^+ - L) + \mathcal{O}(k^{6} \delta_n^3). \label{eq:eigenfunction-error-boundary}
		\end{align}
		Moreover, there exist constants $c_3, c_4, c_5 > 0$ independent of $k$, $h_n$ and $x$ such that
		\begin{align}
			&\left| \varphi_{n, k}^+(x) \right| \le c_3 k,\ \left| \varphi^+_k(x, L^+) \right| \le c_3,\\
			& c_4 k^2 \le \lambda_{n, k}^+ \le c_5 k^2,\ c_4 k^2 \le  \lambda_k(L^+) \le c_5 k^2.
		\end{align}
	
		\item For $v_{1, n}(x; l)$ and $v_{2, n}(D, x; l)$, we have
		\begin{align}
			& v_{1, n}(x; l) = v_1(x, L) + \mathcal{O}(L^+ - L) + \mathcal{O}(\delta_n^2),\\
			&v_{1, n}(L^-; l)= - \partial_x v_1(L, L) \delta^+ L^- + \frac{1}{2} \partial_{xx} v_1(L, L) (\delta^+ L^-)^2  + \mathcal{O}(\delta^+L^-)(L^+ - L) + \mathcal{O}(\delta_n^3),\\
			& v_{2, n}(D, x; l) = v_2(D, x, L) + \mathcal{O}(L^+  - L) + \mathcal{O}(\delta_n^2),\\
			&v_{2, n}(D, L^-; l)= - \partial_x v_2(D, L, L) \delta^+ L^- + \frac{1}{2} \partial_{xx} v_2(D, L, L) (\delta^+ L^-)^2  + \mathcal{O}(\delta^+L^-)(L^+ - L) + \mathcal{O}(\delta_n^3).
		\end{align}
		
	\end{enumerate}	
\end{lemma}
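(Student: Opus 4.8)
\emph{Overall plan.} Every discrete object in the statement is a finite-difference discretization, on the possibly non-uniform grid $\mathbb{S}_n$ with the artificial boundary $L^+$ (or $L^-$), of a continuous object attached to the true boundary $L$, so I would split each error into two pieces: (i) the discretization error at \emph{fixed} boundary $L^+$, and (ii) the boundary-perturbation error coming from $L^+\neq L$. For (i) I would use the self-adjoint (conservative) representation of $\pmb{G}_n$ recalled above: although central differences on a non-uniform grid are only first-order consistent pointwise, the $\mathcal{O}(\delta_n)$ part of the truncation error is a discrete divergence of an $\mathcal{O}(\delta_n^2)$ quantity, so summation by parts restores global second-order accuracy (supraconvergence), exactly as in \cite{zhang2018analysis}. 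For (ii) I need the smooth dependence of Sturm--Liouville eigenpairs and of two-point boundary-value solutions on the endpoint, which I would obtain through the affine pullback $t=(x-l)/(b-l)$ onto the fixed interval $(0,1)$.

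\emph{Part 1.} I would take logarithms in the product defining $m_n(x)$, expand $\sigma^2(y^-),\sigma^2(y),\mu(y^-),\mu(y)$ to second order in $\delta^\pm y$, and use $\log(1+u)=u-\tfrac12 u^2+\mathcal{O}(u^3)$. The leading contribution is a Riemann sum for $\int_l^x\tfrac{2\mu}{\sigma^2}$, giving $\log m(x)$ up to $\mathcal{O}(\delta_n^2)$; the only surviving first-order term is proportional to $\delta^+x-\delta^-x$ and yields the correction $m(x)\tfrac{b(x)}{a(x)}(\delta^+x-\delta^-x)$ in \eqref{eq:mn-m}. Since each factor equals $1+\mathcal{O}(\delta_n)$ and there are $\mathcal{O}(\delta_n^{-1})$ of them, $m_n(x)=\exp(\mathcal{O}(1))$ is bounded above and below; $c_1\le m(x)\le c_2$ is immediate from positivity and continuity of $\sigma,\mu$ on $[l,r]$.

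\emph{Part 2 (the main work).} Under the pullback to $(0,1)$ the eigenvalue problem becomes a regular self-adjoint Sturm--Liouville problem whose coefficients are jointly smooth in $(t,b)$ to precisely the order permitted by Assumption \ref{assump:smoothness}; since such a problem has simple spectrum, analytic perturbation theory yields smooth dependence of $\lambda_k(b)$ and the $m$-normalized $\varphi_k^+(\cdot,b)$ on $b$, and bootstrapping the ODE gives the stated numbers of $x$- and $b$-derivatives. Differentiating the eigenequation in $b$ and using a Hadamard-type variational identity for $\partial_b\lambda_k$, together with the Weyl asymptotics $\lambda_k(b)\asymp k^2$ and the amplitude/frequency bounds $\|\varphi_k^+(\cdot,b)\|_\infty=\mathcal{O}(1)$, $\|\partial_x\varphi_k^+(\cdot,b)\|_\infty=\mathcal{O}(k)$, would produce the $k$-dependent growth of $\partial_b^i\lambda_k$ and $\partial_b^j\varphi_k^+$ appearing in \eqref{eq:eigenvalue-error}--\eqref{eq:eigenfunction-error-boundary}; the discrete bounds $|\varphi_{n,k}^+(x)|\le c_3 k$ and $c_4 k^2\le\lambda_{n,k}^+\le c_5 k^2$ would follow from the discrete Rayleigh quotient and a discrete Weyl count, uniformly in $n$ by Assumption \ref{assump:grid}. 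I would then write $\lambda_{n,k}^+-\lambda_k(L)=(\lambda_{n,k}^+-\lambda_k(L^+))+(\lambda_k(L^+)-\lambda_k(L))$: the first bracket is the fixed-domain discrete eigenvalue error, $\mathcal{O}(k^5\delta_n^2)$ by \cite{zhang2018analysis}, and the second is $\partial_b\lambda_k(L)(L^+-L)+\mathcal{O}((L^+-L)^2)=k^3\mathcal{O}(L^+-L)+\mathcal{O}(k^6\delta_n^2)$ since $0\le L^+-L<\delta_n$; the same split treats $\varphi_{n,k}^+(x)$. For $\varphi_{n,k}^+(L^-)$ I would first replace it by $\varphi_k^+(L^-,L)$ plus the interior error, then Taylor-expand $\varphi_k^+(\cdot,L)$ about $x=L$, where the Dirichlet condition $\varphi_k^+(L,L)=0$ kills the constant term and leaves exactly the $\partial_x$- and $\partial_{xx}$-terms times powers of $\delta^+L^-$ as in \eqref{eq:eigenfunction-error-boundary}. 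The main obstacle is establishing the $b$-dependence of the eigenpairs with the sharp $k$-growth and then propagating it cleanly through the interior-error estimates so that lower-order terms cancel.

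\emph{Part 3.} Here $v_1(x,b)$ is explicit through the scale function $S$ with $S'=2/(\sigma^2 m)$, namely $v_1(x,b)=(S(b)-S(x))/(S(b)-S(l))$, so it is as smooth in $x$ and $b$ as Assumption \ref{assump:smoothness} permits and $v_1(x,L^+)-v_1(x,L)=\mathcal{O}(L^+-L)$; meanwhile $v_{1,n}(\cdot;l)$ solves a conservative scheme for $\mathcal{G}v_1=0$ on $(l,L^+)$, so supraconvergence gives $v_{1,n}(x;l)-v_1(x,L^+)=\mathcal{O}(\delta_n^2)$, and the near-boundary expansion about $x=L$ with $v_1(L,L)=0$ gives the formula for $v_{1,n}(L^-;l)$. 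Finally I would match $v_{2,n}(D,x;l)=\sum_{k=1}^{n_e}d_{n,k}e^{-\lambda_{n,k}^+D}\varphi_{n,k}^+(x)$ with $v_2(D,x,L)=\sum_{k\ge1}d_k(L)e^{-\lambda_k(L)D}\varphi_k^+(x,L)$ term by term: $d_{n,k}-d_k(L)$ is controlled by combining the $v_{1,n}$ error, the estimate \eqref{eq:mn-m}, the Riemann-sum error for $\int_l^{L^+}$, and the domain change $(l,L^+)\to(l,L)$, while the exponential and eigenfunction factors are controlled by \eqref{eq:eigenvalue-error}--\eqref{eq:eigenfunction-error-boundary}; since $\lambda_{n,k}^+\asymp k^2$, the decay $e^{-\lambda_{n,k}^+D}\lesssim e^{-ck^2D}$ dominates every polynomial-in-$k$ factor, so the $k$-sum converges with a $D$-dependent constant and the tail $k>n_e$ of the continuous series is $\mathcal{O}(e^{-c n_e^2 D})$, negligible next to $\delta_n^2$. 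This yields $v_{2,n}(D,x;l)=v_2(D,x,L)+\mathcal{O}(L^+-L)+\mathcal{O}(\delta_n^2)$, and the same boundary expansion gives the estimate for $v_{2,n}(D,L^-;l)$.
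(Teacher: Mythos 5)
Your overall architecture matches the paper's: split each error into a fixed-domain discretization error plus a boundary-perturbation error, control the $b$-dependence of the eigenpairs with $k$-explicit bounds, and let $e^{-ck^2D}$ absorb the polynomial growth in $k$ when summing the eigenfunction series. Your technical device for the $b$-dependence (affine pullback to $(0,1)$ plus perturbation theory and a Hadamard-type identity) differs from the paper's, which applies the Liouville transform to reach the normal form $-\phi''+Q\phi=-\mu\phi$, represents $\phi_k^+(y,B)$ by a Volterra integral equation, and obtains the endpoint derivative from the Kong--Zettl formula $s_k'(B)=-(\partial_y\phi_k^+(B,B))^2/\|\phi_k^+(\cdot,B)\|^2$; repeated differentiation of the integral equation plus Gronwall then gives the $k$-uniform bounds on $\partial_B^j\partial_y^i\phi_k^+$ up to the orders needed. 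Abstract perturbation theory gives you smoothness in $b$ but not, by itself, the quantitative rates $\partial_{bb}\lambda_k=\mathcal{O}(k^5)$, $\partial_b\partial_x^i\varphi_k^+=\mathcal{O}(k^{2+i})$ uniformly in $k$; you would still need the integral-equation (or an equivalent explicit) representation to extract these, so this part of your plan is under-specified though not wrong in spirit.

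The genuine gap is in your treatment of the boundary-node values \eqref{eq:eigenfunction-error-boundary} and, identically, of $v_{1,n}(L^-;l)$ and $v_{2,n}(D,L^-;l)$ in part~3. You propose to ``replace $\varphi_{n,k}^+(L^-)$ by $\varphi_k^+(L^-,L)$ plus the interior error'' and then Taylor-expand about $x=L$. But the interior error is only $\mathcal{O}(k^4\delta_n^2)$, and it is \emph{not} proportional to $L^+-L$; the statement requires the remainder beyond the two Taylor terms to be $\mathcal{O}(k^4\,\delta^+L^-)(L^+-L)+\mathcal{O}(k^6\delta_n^3)$, which is one order of $\delta_n$ smaller when $L$ is placed on the grid. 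This extra order is exactly what is consumed in the proof of Theorem \ref{thm:conv-h}, where $v_n(D,L^-;y)$ (itself $\mathcal{O}(\delta_n)$) is divided by the $\mathcal{O}(\delta_n)$ denominator $1-u_n^-u_n^+$: a bare $\mathcal{O}(\delta_n^2)$ error at $L^-$ would leave an $\mathcal{O}(\delta_n)$ error in $h_n(q,L^+;y)$ and destroy second-order convergence. The paper closes this by a superconvergence-of-the-flux argument at the Dirichlet node: $\nabla^+\varphi_{n,k}^+(L^-)=\nabla^+\varphi_k^+(L^-;L^+)+\mathcal{O}(k^6\delta_n^2)$ (Proposition~3 of \cite{zhang2018analysis}), and since $\varphi_{n,k}^+(L^+)=\varphi_k^+(L^+,L^+)=0$, multiplying the gradient error by $\delta^+L^-$ yields the pointwise error $\mathcal{O}(k^6\delta_n^3)$ at $L^-$; the analogous step for the two-point boundary-value problems is the telescoping argument in the proof of Lemma \ref{lmm:prior-u1p}, which shows $\nabla^+e_n(L^-)=\mathcal{O}(\delta_n^2)$ and hence $e_n(L^-)=\mathcal{O}(\delta_n^3)$. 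Your summation-by-parts remark addresses global second-order accuracy but not this third-order accuracy at the node adjacent to the boundary, so as written your plan cannot deliver \eqref{eq:eigenfunction-error-boundary} or the $L^-$ estimates in part~3 with the stated remainders.
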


\begin{proposition}\label{prop:conv-vn}
	Under Assumption \ref{assump:grid} and Assumption \ref{assump:smoothness}, we obtain for $x,y \in \mathbb{S}_n \cap (l, L)$,
	\begin{align}
		& v_n(D, x; y) / (m_n(y) \delta y) = \bar{v}(D, x; y) + \mathcal{O}(L^+ - L) + \mathcal{O}(\delta_n^2),\label{eq:vn-x-conv}\\
		& v_n(D, L^-; y) / (m_n(y) \delta y) = - \partial_x \bar{v}(D, L; y) \delta^+ L^- + \frac{1}{2} \partial_{xx} \bar{v}(D, L; y) (\delta^+ L^-)^2 \\
		&\quad\quad \quad \quad \quad \quad \quad\quad \quad\quad \quad + \mathcal{O}(\delta^+L^-)(L^+ - L) + \mathcal{O}(\delta_n^3), \label{eq:vn-L-conv}
	\end{align}
	and
	\begin{align}
		&v_n(D, x; l) = v(D, x; l) + \mathcal{O}(L^+ - L) + \mathcal{O}(\delta_n^2),\\
		& v_n(D, L^-; l)= - \partial_x v(D, L; l) \delta^+ L^- + \frac{1}{2} \partial_{xx} v(D, L; l) (\delta^+ L^-)^2 + \mathcal{O}(\delta^+L^-)(L^+ - L) + \mathcal{O}(\delta_n^3),
	\end{align}
	where $v(D, x; y) = m(y) \sum_{k = 1}^{\infty} e^{-\lambda_k^+(L) D} \varphi_k^+(x; L) \varphi_k^+(y; L)$ and $\bar{v}(D, x; y) = v(D, x; y) / m(y)$ for $y \in (l, L]$ and $v(D, x; l) = v_1(x, L) - v_2(D, x, L)$. 
	
	Furthermore, $\bar{v}(D, x; y)$ and $v(D, x; l)$ satisfy the following equations at $x=L$:
	\begin{align}
		&\mu(L) \partial_x \bar{v}(D, L; y) + \frac{1}{2} \sigma^2(L) \partial_{xx} \bar{v}(D, L; y) = 0, \ y \in (l, L), \label{eq:v-bar-boundary}\\
		&\mu(L) \partial_x v(D, L; l) + \frac{1}{2} \sigma^2(L) \partial_{xx} v(D, L; l) = 0. 
	\end{align}
\end{proposition}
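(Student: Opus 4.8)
The plan is to base the whole argument on the eigenfunction expansions, namely $v_n(D,x;y)/(m_n(y)\delta y)=\sum_{k=1}^{n_e}e^{-\lambda_{n,k}^+D}\varphi_{n,k}^+(x)\varphi_{n,k}^+(y)$ for the chain and $\bar v(D,x;y)=\sum_{k=1}^{\infty}e^{-\lambda_k^+(L)D}\varphi_k^+(x;L)\varphi_k^+(y;L)$ for the diffusion, thereby reducing the proposition to a term-by-term comparison driven by the spectral estimates of Lemma \ref{lmm:priors-eigenfunctions}. The structural fact that makes this work is that for fixed $D>0$ the weights decay like $e^{-c_4k^2D}$ — since $\lambda_{n,k}^+\ge c_4k^2$ and $\lambda_k^+(L)\ge\lambda_k^+(L^+)\ge c_4k^2$ by domain monotonicity of Dirichlet eigenvalues — so they beat any power of $k$; in particular $\sum_{k\ge1}k^pe^{-c_4k^2D}<\infty$ for every $p\ge0$ with a bound independent of $n$, which is exactly what lets the polynomial-in-$k$ error factors from Lemma \ref{lmm:priors-eigenfunctions} be summed.

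For interior $x,y\in\mathbb{S}_n\cap(l,L)$, I would write $v_n(D,x;y)/(m_n(y)\delta y)-\bar v(D,x;y)$ as the finite sum over $k\le n_e$ of the per-mode discrepancies $e^{-\lambda_{n,k}^+D}\varphi_{n,k}^+(x)\varphi_{n,k}^+(y)-e^{-\lambda_k^+(L)D}\varphi_k^+(x;L)\varphi_k^+(y;L)$, minus the tail $\sum_{k>n_e}e^{-\lambda_k^+(L)D}\varphi_k^+(x;L)\varphi_k^+(y;L)$ of the limiting series. The tail is bounded by $c_3^2\sum_{k>n_e}e^{-c_4k^2D}$, which is smaller than any power of $\delta_n$ because $n_e\asymp\delta_n^{-1}$ under Assumption \ref{assump:grid}. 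In each surviving term I substitute $\lambda_{n,k}^+=\lambda_k^+(L)+k^3\mathcal{O}(L^+-L)+\mathcal{O}(k^5\delta_n^2)$ and $\varphi_{n,k}^+(\cdot)=\varphi_k^+(\cdot;L)+k\mathcal{O}(L^+-L)+\mathcal{O}(k^4\delta_n^2)$, use $|e^{-a}-e^{-b}|\le|a-b|e^{-\min(a,b)}$ with $\min(\lambda_{n,k}^+,\lambda_k^+(L))\ge c_4k^2$, expand the triple product, and bound each cross term by a polynomial in $k$ times $e^{-c_4k^2D}$ times one of the small factors $\mathcal{O}(L^+-L)$ or $\mathcal{O}(\delta_n^2)$ (products of two small factors being higher order); summing over $k$ then gives \eqref{eq:vn-x-conv}.

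For \eqref{eq:vn-L-conv} I would rerun the same computation with $\varphi_{n,k}^+(L^-)$ replaced by the boundary expansion \eqref{eq:eigenfunction-error-boundary}. Its two leading contributions, $-\,\delta^+L^-\sum_ke^{-\lambda_k^+(L)D}\partial_x\varphi_k^+(L;L)\varphi_k^+(y;L)$ and $\tfrac12(\delta^+L^-)^2\sum_ke^{-\lambda_k^+(L)D}\partial_{xx}\varphi_k^+(L;L)\varphi_k^+(y;L)$, equal $-\partial_x\bar v(D,L;y)\,\delta^+L^-$ and $\tfrac12\partial_{xx}\bar v(D,L;y)(\delta^+L^-)^2$; term-by-term differentiation in $x$ on $[l,L]$ is legitimate because the $\partial_x^i\varphi_k^+$ are continuous with at most polynomial growth in $k$ (standard Sturm-Liouville bounds, cf.\ Lemma \ref{lmm:priors-eigenfunctions}) and $\lambda_k^+(L)\asymp k^2$, while the remaining sums are again polynomial-in-$k$ times $e^{-c_4k^2D}$ and contribute $\mathcal{O}(\delta^+L^-)(L^+-L)+\mathcal{O}(\delta_n^3)$. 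The two estimates for the initial state $y=l$ then follow at once by subtracting the estimates for $v_{2,n}(D,x;l)$ and $v_{2,n}(D,L^-;l)$ from those for $v_{1,n}(x;l)$ and $v_{1,n}(L^-;l)$ in part~3 of Lemma \ref{lmm:priors-eigenfunctions}, since $v(D,x;l)=v_1(x,L)-v_2(D,x,L)$ and $\partial_x,\partial_{xx}$ act linearly.

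The boundary identities are then a by-product of the Dirichlet condition $\varphi_k^+(L;L)=0$: evaluating the eigenvalue equation $\mathcal{G}\varphi_k^+(x;L)=-\lambda_k^+(L)\varphi_k^+(x;L)$ at $x=L$ gives $\mu(L)\partial_x\varphi_k^+(L;L)+\tfrac12\sigma^2(L)\partial_{xx}\varphi_k^+(L;L)=-\lambda_k^+(L)\varphi_k^+(L;L)=0$ for every $k$, so summing against the convergent weights $e^{-\lambda_k^+(L)D}\varphi_k^+(y;L)$ produces \eqref{eq:v-bar-boundary}; the same per-mode identity applied to the expansion of $v_2(D,\cdot,L)$, together with $\mathcal{G}v_1(\cdot,L)\equiv0$ on $[l,L]$, yields the identity for $v(D,\cdot;l)$. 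I expect the main obstacle to be the uniform-in-$n$ control of these truncated series: one has to verify that, once the Lemma \ref{lmm:priors-eigenfunctions} error terms — which carry factors up to $k^6$ — are inserted, the sums remain dominated by $\sum_kk^pe^{-c_4k^2D}$ with a constant not depending on $n$, and that the error from passing between $\sum_{k\le n_e}$ and $\sum_{k=1}^{\infty}$ is negligible (super-polynomially small in $\delta_n$) thanks to $n_e\asymp\delta_n^{-1}$; everything else is routine bookkeeping.
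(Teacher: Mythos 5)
Your proposal follows essentially the same route as the paper's proof: expand both $v_n/(m_n\delta y)$ and $\bar v$ in their (discrete and continuous) eigenfunction series, insert the eigenvalue/eigenfunction estimates of Lemma \ref{lmm:priors-eigenfunctions} (including the boundary expansion \eqref{eq:eigenfunction-error-boundary} for the $L^-$ case), control the polynomial-in-$k$ error factors by the Gaussian weights $e^{-c k^2 D}$, bound the tail $k>n_e$ as super-polynomially small, derive the boundary identities from $\varphi_k^+(L,L)=0$ together with the eigenvalue equation at $x=L$, and handle $y=l$ via the split $v=v_1-v_2$ and part 3 of Lemma \ref{lmm:priors-eigenfunctions}. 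This matches the paper's argument in both structure and all key steps.
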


\subsection{Analysis of $u_n^+(q, D, x; L^+)$}

Recall that 
\begin{align}
	u_n^+(q, D, x; L^+) = u_{1, n}^+(q, x; L^+) - u_{2, n}^+(q, D, x; L^+) \label{eq:un-plus-split}
\end{align}
with $u_{1, n}^+(q, x; L^+)$ satisfying
\begin{align}
	\begin{cases}
		{\pmb G}_n u_{1, n}^+(q, x; L^+) - qu_{1, n}^+(q, x; L^+) = 0,\ x \in \mathbb{S}_n \cap (l, L),\\
		u_{1, n}^+(q, l; L^+) = 0,\ u_{1, n}^+(q, L^+; L^+) = 1.
	\end{cases}
\end{align}
In addition, $u_{2, n}^+(q, D, x; L^+)$ admits an  eigenfunction expansion given by
\begin{align}
	&u_{2, n}^+(q, D, x; L^+) = e^{-qD}\sum_{k = 1}^{n_e} c_{n, k}(q) e^{-\lambda_{n, k}^+D} \varphi_{n, k}^+(x),\\
	& c_{n, k}(q) = \sum_{y \in \mathbb{S}_n^o \cap (-\infty, L)} \varphi_{n, k}^+(y) u_{1, n}^+(q, y; L^+) m_n(y) \delta y.
\end{align}

To study the convergence of $u_n^+(q, D, x; L^+)$, we consider the solution $u^+(q, D, x, b)$ to the following boundary-dependent PDE,
\begin{align}
	\begin{cases}
		\frac{\partial u}{\partial D}(D, x) = \mathcal{G} u(D, x) - q u(D, x),\ D > 0, x \in (l, b),\\
		u(D, l) = 0,\ u(D, b) = 1,\ D > 0,\\
		u(0, x) = 0,\ x \in [l, b].
	\end{cases}
\end{align}
$u^+(q, D, x, b)$ can be written as $u^+(q, D, x, b) = u^+_1(q, x, b) - u^+_2(q, D, x, b)$, where $u^+_1(q, x, b)$ is the solution to
\begin{align}
	\begin{cases}
		\mathcal{G} u(x) - q u(x) = 0,\ x \in [l, b],\\
		u(l) = 0,\ u(b) = 1,
	\end{cases}
\end{align}
and $u^+_2(q, D, x, b)$ is the solution to
\begin{align}
	\begin{cases}
		\frac{\partial u}{\partial D}(D, x) = \mathcal{G} u(D, x) - q u(D, x),\ D > 0, x \in (l, b),\\
		u(D, l) = u(D, b) = 0,\ D > 0,\\
		u(0, x) = u^+_1(q, x, b),\ x \in [l, b].
	\end{cases}
\end{align}
We can represent $u^+_2(q, D, x, b)$ using an eigenfunction expansion as follows:
\begin{align}
	u^+_2(q, D, x, b) = e^{-qD}\sum_{k = 1}^{\infty} c_k(q, b) e^{-\lambda_k^+(b) D} \varphi_k^+(x, b),\ c_k(q, b) = \int_{l}^{b} \varphi_k^+(y, b) u^+_1(q, y, b) m(y) dy.
\end{align}

\begin{lemma}\label{lmm:prior-u1p}
	Suppose Assumption \ref{assump:grid} and Assumption \ref{assump:smoothness} hold. Then, for any $q > 0$, $u_1^+(q, x, b)$ is $C^3$ in $b$ for $b$ in any closed interval enclosed by $[L, r]$ and we have
	\begin{align}
		& u_{1, n}^+(q, x; L^+) =  u_1^+(q, x, L) + \mathcal{O}(L^+ - L) + \mathcal{O}(\delta_n^2),\\
		& u_{1, n}^+(q, L^-; L^+) = 1 - \partial_x u_1^+(q, L, L)\delta^+ L^- + \frac{1}{2} \partial_{xx} u_1^+(q, L, L)(\delta^+ L^-)^2 + \mathcal{O}(\delta_n^3).
	\end{align}
\end{lemma}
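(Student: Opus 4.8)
The plan is to decompose the error of $u_{1,n}^+(q,x;L^+)$ into two pieces: (i) the gap between $u_{1,n}^+(q,x;L^+)$ and the exact solution $u_1^+(q,x,L^+)$ of the two-point boundary value problem on $(l,L^+)$, which is pure finite-difference discretization error, and (ii) the gap between $u_1^+(q,x,L^+)$ and $u_1^+(q,x,L)$, which is a boundary-perturbation error governed by the dependence of the solution on the right endpoint $b$. For part (i), the linear system in the statement is the standard central-difference scheme for $\mathcal{G}u - qu = 0$ on the nonuniform grid $\mathbb{S}_n\cap[l,L^+]$, written in the self-adjoint form $\tfrac{1}{m_n(x)}\tfrac{\delta^-x}{\delta x}\nabla^-(\tfrac{1}{s_n(x)}\nabla^+ u)=qu$. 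Under Assumption \ref{assump:smoothness} the coefficients $\mu\in C^3$, $\sigma\in C^4$, so $u_1^+(q,\cdot,L^+)\in C^4([l,L^+])$; a Taylor expansion of $\nabla^\pm$ then shows the local truncation error is $\mathcal{O}(\delta_n)$ pointwise but $\mathcal{O}(\delta_n^2)$ after the usual symmetrization (the leading $\mathcal{O}(\delta_n)$ term is $\tfrac{1}{3}(\delta^+x-\delta^-x)$ times a derivative, which is absorbed into the weights $m_n,s_n$ — this is exactly the mechanism behind the $m_n(x)=m(x)+m(x)\tfrac{b(x)}{a(x)}(\delta^+x-\delta^-x)+\mathcal{O}(\delta_n^2)$ expansion in Lemma \ref{lmm:priors-eigenfunctions}). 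A discrete maximum principle / stability bound for $(q\pmb I-\pmb G_n)$ (valid since $q>0$ makes the operator strictly diagonally dominant with the correct sign) then converts the consistency estimate into the uniform error bound $u_{1,n}^+(q,x;L^+)=u_1^+(q,x,L^+)+\mathcal{O}(\delta_n^2)$.

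For part (ii), I would set $w(x,b):=u_1^+(q,x,b)$ and differentiate the defining ODE and boundary conditions with respect to $b$. Since $\mathcal{G}w-qw=0$ with $w(l,b)=0$, $w(b,b)=1$, differentiating gives $\mathcal{G}(\partial_b w)-q(\partial_b w)=0$ on $(l,b)$ with $\partial_b w(l,b)=0$ and, from $\tfrac{d}{db}[w(b,b)]=0$, the boundary value $\partial_b w(b,b)=-\partial_x w(b,b)$. This is again a well-posed two-point boundary value problem with smooth coefficients, so $\partial_b w$ exists, is bounded, and (iterating) $w$ is $C^3$ in $b$ on any closed subinterval of $[L,r]$; a mean value argument then yields $u_1^+(q,x,L^+)=u_1^+(q,x,L)+\mathcal{O}(L^+-L)$ uniformly in $x\in[l,L]$. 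Combining (i) and (ii) gives the first claimed estimate. For the second estimate, note $L^+-L=\mathcal{O}(\delta_n)$ and that we evaluate at $x=L^-$, so $\delta^+L^-=L^+-L^-\le \delta_n$; here I would Taylor-expand $u_1^+(q,\cdot,L)$ about $x=L$ — using $u_1^+(q,L,L)=1$ — to get $u_1^+(q,L^-,L)=1-\partial_x u_1^+(q,L,L)\delta^+L^- + \tfrac12\partial_{xx}u_1^+(q,L,L)(\delta^+L^-)^2+\mathcal{O}((\delta^+L^-)^3)$, and then argue that the $\mathcal{O}(L^+-L)$ boundary-perturbation term and the $\mathcal{O}(\delta_n^2)$ discretization term, when evaluated at the near-boundary point $L^-$, each carry an extra factor $\delta^+L^-$ (because both vanish to first order as $x\to L$, since at $x=L$ the discrete and continuous solutions both equal $1$ exactly and the boundary perturbation is $O((b-L)\cdot\text{dist}(x,l))$ but more sharply $O((b-L)\delta^+L^-)$ near $x=L$). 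The leftover is then $\mathcal{O}(\delta_n^3)$, matching the stated bound.

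The main obstacle I anticipate is establishing the \emph{sharp} near-boundary refinement in the second estimate, i.e. showing the error at $x=L^-$ is genuinely $\mathcal{O}(\delta_n^3)$ rather than the naive $\mathcal{O}(\delta_n^2)$ one gets from the first estimate plus one Taylor step. This requires exploiting a cancellation: the discrete solution $u_{1,n}^+$ satisfies the exact boundary condition $u_{1,n}^+(q,L^+;L^+)=1=u_1^+(q,L^+,L^+)$, so the error function $e_n(x):=u_{1,n}^+(q,x;L^+)-u_1^+(q,x,L^+)$ vanishes at $x=L^+$ and at $x=l$, and a discrete-maximum-principle estimate applied to $e_n$ with the near-boundary barrier function $\min(x-l,\,L^+-x)$ should give $e_n(L^-)=\mathcal{O}(\delta_n^2\cdot\delta^+L^-)=\mathcal{O}(\delta_n^3)$; similarly the boundary perturbation $u_1^+(q,x,L^+)-u_1^+(q,x,L)$ vanishes at $x=l$ and, being a smooth solution of the homogeneous equation vanishing at one endpoint, is $\mathcal{O}((L^+-L)(x-l))$ but needs to be seen to be $\mathcal{O}((L^+-L)\delta^+L^-)$ near $x=L$, which follows by expanding it about $x=L^+$ where it equals $0$. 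Getting all these near-boundary factors to line up cleanly — and confirming no $\mathcal{O}(\delta_n^2)$ term survives at $x=L^-$ — is the delicate part; everything else is standard finite-difference consistency/stability bookkeeping and ODE perturbation theory already set up in Lemma \ref{lmm:priors-eigenfunctions}.
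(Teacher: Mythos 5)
Your overall architecture coincides with the paper's: split the error into a finite-difference discretization error on $(l,L^+)$ plus a boundary-perturbation error in $b$, prove a refined $\mathcal{O}(\delta_n^3)$ estimate at the near-boundary node $L^-$, and finish with a Taylor expansion about $x=L$. Two of your technical choices differ from the paper's and are worth noting. For the $C^3$ dependence on $b$, the paper simply writes $u_1^+(q,x,b)=\bigl(\psi^+(l)\psi^-(x)-\psi^+(x)\psi^-(l)\bigr)/\bigl(\psi^+(l)\psi^-(b)-\psi^+(b)\psi^-(l)\bigr)$ in terms of two $C^4$ fundamental solutions, so smoothness in $b$ is read off directly; your route of differentiating the BVP in $b$ (with $\partial_b w(b,b)=-\partial_x w(b,b)$) is equally valid and arguably more portable. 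For the refined estimate at $L^-$, the paper does not use a barrier at all: it shows $\pmb G_n e_n=\mathcal{O}(\delta_n^2)$ at interior nodes, sums the self-adjoint form to conclude the discrete flux $\tfrac{1}{s_n(x)}\nabla^+e_n(x)$ varies by only $\mathcal{O}(\delta_n^2)$ across the grid, uses $\sum_z \delta^+z\,\nabla^+e_n(z)=e_n(L^+)-e_n(l)=0$ to force the flux to be $\mathcal{O}(\delta_n^2)$ everywhere, and then gets $e_n(L^-)=e_n(L^+)-\delta^+L^-\,\nabla^+e_n(L^-)=\mathcal{O}(\delta_n^3)$. Your barrier idea can be made to work, but the specific barrier $\min(x-l,\,L^+-x)$ is not a supersolution: on the linear piece near $L^+$ one has $(q\pmb I-\pmb G_n)(L^+-x)=q(L^+-x)+\widetilde\mu(x)$, which need not be bounded below by a positive constant when $\mu$ is negative. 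You would need to replace it by, e.g., the discrete solution of $(q\pmb I-\pmb G_n)v=1$ with zero boundary data, whose value at $L^-$ is indeed $\mathcal{O}(\delta^+L^-)$.

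The one genuine error in your plan is the claimed cancellation for the boundary-perturbation term at $x=L^-$. The difference $u_1^+(q,L^-,L^+)-u_1^+(q,L^-,L)$ is \emph{not} $\mathcal{O}\bigl((L^+-L)\,\delta^+L^-\bigr)$: it is genuinely first order in $L^+-L$, because the two solutions carry the value $1$ at different right endpoints, so already at $x=L$ they differ by $-\partial_xu_1^+(q,L^+,L^+)(L^+-L)+\cdots$. If you Taylor-expand $u_1^+(q,\cdot,L)$ about $x=L$ as you propose and then discard the boundary perturbation, you obtain $1-\partial_xu_1^+(q,L,L)(L-L^-)+\cdots$, which misses the stated leading term by $\partial_xu_1^+(q,L,L)(L^+-L)=\mathcal{O}(\delta_n)$ — note the lemma's increment is $\delta^+L^-=L^+-L^-$, not $L-L^-$. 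The correct bookkeeping (the paper's, mirroring its derivation of the analogous eigenfunction estimate) is to expand $u_1^+(q,L^-,L^+)$ in $x$ anchored at the exact value $u_1^+(q,L^+,L^+)=1$, telescoping through $x=L$, which automatically produces the step $\delta^+L^-$, and only afterwards switch the boundary parameter in the derivative coefficients from $L^+$ to $L$. Equivalently, the identity $\partial_bu_1^+(q,L,L)=-\partial_xu_1^+(q,L,L)$ — which you yourself derive — shows the $\mathcal{O}(L^+-L)$ boundary perturbation combines with the $x$-expansion to yield exactly $-\partial_xu_1^+(q,L,L)\,\delta^+L^-$; it does not vanish to higher order. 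With that repair, and a corrected barrier (or the paper's flux argument) for the $\mathcal{O}(\delta_n^3)$ discretization estimate at $L^-$, your proof goes through.
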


\begin{proposition}\label{prop:conv-unp}
	Under Assumption \ref{assump:grid} and Assumption \ref{assump:smoothness}, we have that for any $q > 0$,
	\begin{align}
		&u_n^+(q, D, x, L^+) = u^+(q, D, x, L) + \mathcal{O}(L^+ - L) + \mathcal{O}(\delta_n^2), \label{eq:unp-x-conv} \\
		&u_n^+(q, D, L^-, L^+) = 1 - \partial_x u^+(q, D, L, L) \delta^+ L^- + \frac{1}{2} \partial_{xx} u^+(q, D, L, L) (\delta^+ L^-)^2 \\
		&\quad\quad \quad \quad \quad \quad \quad \quad \quad + \mathcal{O}(\delta^+L^-)(L^+ - L) + \mathcal{O}(\delta_n^3). \label{eq:unp-L-conv}
	\end{align}
	Moreover, the following equation holds:
	\begin{align}
		&\mu(L) \partial_x u^+(q, D, L, L) + \frac{1}{2} \sigma^2(L) \partial_{xx} u^+(q, D, L, L) - q = 0. \label{eq:up-boundary}
	\end{align}
\end{proposition}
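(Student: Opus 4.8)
\emph{Proof idea.} I would prove the proposition by treating the two pieces of the decomposition \eqref{eq:un-plus-split} separately. For $u_{1,n}^+(q,x;L^+)$ nothing beyond Lemma \ref{lmm:prior-u1p} is needed: it already supplies both the interior estimate $u_{1,n}^+(q,x;L^+)=u_1^+(q,x,L)+\mathcal{O}(L^+-L)+\mathcal{O}(\delta_n^2)$ and the boundary expansion of $u_{1,n}^+(q,L^-;L^+)$ around $x=L$. The real work is the term $u_{2,n}^+(q,D,x;L^+)$, which I would handle exactly as $v_{2,n}(D,x;l)$ is handled in the proof of Proposition \ref{prop:conv-vn}, since both are discrete eigenfunction expansions in the eigenpairs $(\lambda_{n,k}^+,\varphi_{n,k}^+)$ of the same matrix problem, their continuous counterparts being expansions in $(\lambda_k^+(L),\varphi_k^+(\cdot,L))$.

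Concretely, I would bound $u_{2,n}^+(q,D,x;L^+)-u_2^+(q,D,x,L)$ by splitting the difference of the two series into: (i) the continuous tail $e^{-qD}\sum_{k>n_e}c_k(q,L)e^{-\lambda_k^+(L)D}\varphi_k^+(x,L)$, which is smaller than any power of $\delta_n$ because $\lambda_k^+(L)\ge c_4 k^2$, $|\varphi_k^+|\le c_3$, $|c_k(q,L)|=\mathcal{O}(1)$ (Cauchy--Schwarz plus normalization) and $n_e\asymp 1/\delta_n$; (ii) the eigenvalue error, using $|e^{-\lambda_{n,k}^+D}-e^{-\lambda_k^+(L)D}|\le De^{-(c_4/2)k^2 D}|\lambda_{n,k}^+-\lambda_k^+(L)|$ together with \eqref{eq:eigenvalue-error}; (iii) the eigenfunction error $\varphi_{n,k}^+(x)-\varphi_k^+(x,L)=k\,\mathcal{O}(L^+-L)+\mathcal{O}(k^4\delta_n^2)$ from Lemma \ref{lmm:priors-eigenfunctions}; and (iv) the coefficient error $c_{n,k}(q)-c_k(q,L)$. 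For (iv), $c_{n,k}(q)=\sum_{y\in\mathbb{S}_n^o\cap(l,L)}\varphi_{n,k}^+(y)u_{1,n}^+(q,y;L^+)m_n(y)\delta y$ is a trapezoidal quadrature for $c_k(q,L)=\int_l^L\varphi_k^+(y,L)u_1^+(q,y,L)m(y)dy$, so its error decomposes into the quadrature error, the domain mismatch between $(l,L^+)$ and $(l,L)$ (producing $\mathcal{O}(L^+-L)$ through the grid points near $L$), the replacement $m_n\to m$ via \eqref{eq:mn-m}, and the replacements $\varphi_{n,k}^+\to\varphi_k^+$, $u_{1,n}^+\to u_1^+$ via Lemma \ref{lmm:priors-eigenfunctions} and Lemma \ref{lmm:prior-u1p}; each piece carries an explicit power of $k$. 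Summing (i)--(iv) — the series is absolutely convergent because every polynomial-in-$k$ factor is dominated by $e^{-(c_4/2)k^2 D}$ — yields $u_{2,n}^+(q,D,x;L^+)=u_2^+(q,D,x,L)+\mathcal{O}(L^+-L)+\mathcal{O}(\delta_n^2)$, and subtracting from Lemma \ref{lmm:prior-u1p} gives \eqref{eq:unp-x-conv}.

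For the boundary expansion \eqref{eq:unp-L-conv} I would run the same series comparison with $x=L^-$, now invoking the boundary eigenfunction estimate \eqref{eq:eigenfunction-error-boundary} in place of the interior one, together with the Dirichlet conditions $\varphi_k^+(L,L)=0$ and $u_2^+(q,D,L,L)=0$; this produces $u_{2,n}^+(q,D,L^-;L^+)=-\partial_x u_2^+(q,D,L,L)\delta^+L^-+\tfrac12\partial_{xx}u_2^+(q,D,L,L)(\delta^+L^-)^2+\mathcal{O}(\delta^+L^-)(L^+-L)+\mathcal{O}(\delta_n^3)$, the $\mathcal{O}(\delta^+L^-)(L^+-L)$ term coming from the boundary dependence of the eigenfunctions. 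Combining with the expansion of $u_{1,n}^+(q,L^-;L^+)$ from Lemma \ref{lmm:prior-u1p} and using $u^+(q,D,L,L)=u_1^+(q,L,L)-u_2^+(q,D,L,L)=1$ gives \eqref{eq:unp-L-conv}. Finally, \eqref{eq:up-boundary} follows by letting $x\uparrow L$ in the PDE $\partial_D u^+=\mathcal{G}u^+-qu^+$ satisfied by $u^+(q,\cdot,\cdot,L)$ on $(l,L)$: since $u^+(q,D,L,L)\equiv1$ is constant in $D$, the left-hand side vanishes there, and the smoothness in Assumption \ref{assump:smoothness} gives enough parabolic regularity for $\mathcal{G}u^+$ to extend continuously up to $x=L$, leaving $\mu(L)\partial_x u^+(q,D,L,L)+\tfrac12\sigma^2(L)\partial_{xx}u^+(q,D,L,L)-q=0$.

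The main obstacle is step (iv) together with the summation: the coefficient error must be controlled with \emph{explicit} powers of $k$ that, once multiplied by $\varphi_{n,k}^+(x)$ (of size $\mathcal{O}(k)$) and summed against $e^{-\lambda_{n,k}^+D}$, still collapse to $\mathcal{O}(L^+-L)+\mathcal{O}(\delta_n^2)$ rather than to a worse rate. This is precisely where the $k$-explicit and $b$-differentiable estimates of Lemma \ref{lmm:priors-eigenfunctions} for the moving right boundary $L^+$ are indispensable — a crude, boundary-insensitive bound would lose the clean second-order behavior and would not reveal that placing $L$ on the grid restores it.
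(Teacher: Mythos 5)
Your proof follows essentially the same route as the paper for the two convergence estimates: decompose $u_n^+$ via \eqref{eq:un-plus-split}, quote Lemma \ref{lmm:prior-u1p} for the $u_{1,n}^+$ part, and compare the discrete and continuous eigenfunction expansions of $u_{2,n}^+$ exactly as in Proposition \ref{prop:conv-vn}, the only genuinely new ingredient being the coefficient error $c_{n,k}(q)-c_k(q,L)$ --- which is precisely the one step the paper writes out in full, obtaining $\mathcal{O}(k^4\delta_n^2)+\mathcal{O}(k^2(L^+-L))$ by the same quadrature/boundary-mismatch/$m_n$-replacement decomposition you describe. The one place you diverge is \eqref{eq:up-boundary}: you pass to the limit $x\uparrow L$ in the parabolic PDE and use that $u^+(q,D,L,L)\equiv 1$ forces $\partial_D u^+=0$ on the boundary, which requires a boundary-regularity claim for the parabolic problem (continuity of $\partial_D u^+$, $\partial_x u^+$, $\partial_{xx}u^+$ up to $x=L$ for $D>0$, away from the corner incompatibility at $(0,L)$) that you assert but do not prove. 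The paper instead splits $u^+=u_1^+-u_2^+$, gets $\mu(L)\partial_x u_2^++\tfrac12\sigma^2(L)\partial_{xx}u_2^+=0$ at $x=L$ term-by-term from the eigenfunction expansion using $\varphi_k^+(L,L)=0$ (the same mechanism as \eqref{eq:v-bar-boundary}), and reads off $\mu(L)\partial_x u_1^++\tfrac12\sigma^2(L)\partial_{xx}u_1^+=q$ from the elliptic ODE together with $u_1^+(q,L,L)=1$; this is more self-contained given the machinery already built, whereas your version is shorter but leans on an unproved parabolic Schauder-type estimate. Both are valid.
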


\subsection{Analysis of $u_n^-(q, x; L^-)$}

$u_n^-(q, x; L^-)$ satisfies the equation
\begin{align}
	\begin{cases}
		{\pmb G}_n u_n^-(q, x; L^-) - qu_n^+(q, x; L^-) = 0,\ x \in \mathbb{S}_n^o \cap (L, \infty),\\
		u_n^-(q, L^-; L^-) = 1,\ u_n^+(q, r; L^-) = 0.
	\end{cases}
\end{align}
We consider the following boundary-dependent differential equation with solution $u^-(q, x; b)$:
\begin{align}
	\begin{cases}
		\mathcal{G} u^-(q, x, b) - qu^-(q, x, b) = 0,\ x \in (b, r),\\
		u^-(q, b, b) = 1,\ u^-(q, r, b) = 0.
	\end{cases}
\end{align}

\begin{lemma}\label{prop:conv-unn}
	Suppose Assumption \ref{assump:grid} and Assumption \ref{assump:smoothness} hold. Then, for any $q > 0$, $u^-(q, x, b)$ is $C^3$ in $b$ for $b$ in any closed interval enclosed by $[l, L]$ and we have
	\begin{align}
		&\mu(L) \partial_b u^-(q, L, L) - \sigma^2(L)  (\partial_b u^-(q, L, L))^2 + \frac{1}{2} \sigma^2(L)  \partial_{bb} u^-(q, L, L) + q = 0. \label{eq:wm-b-boundary}
	\end{align}
	Furthermore, $u^-_n(q, x; L^-) = \widetilde{u}^-_n(q, x; L^+) u^-_n(q, L^+; L^-)$, where $\widetilde{u}_n^-(q, x; L^+) = \mathbb{E}_x\left[ e^{q\tau_{L^{++}}^-} 1_{\{ Y_{\tau_{L^{++}}^-} = L^+ \}} \right]$ and $L^{++}$ is the grid point on the right of $L^+$.
	\begin{align}
		& \widetilde{u}_n^-(q, x; L^+) =  u^-(q, x, L) + \mathcal{O}(L^+ - L^-) + \mathcal{O}(\delta_n^2), \label{eq:conv-umn-x} \\
		& u_n^-(q, L^+; L^-) = 1 - \partial_b u^-(q, L, L) \delta^+L^- + \frac{1}{2} \partial_{bb} u^-(q, L, L) (\delta^+L^-)^2 \\
		&\quad\quad \quad \quad \quad \quad \quad  + \mathcal{O}(\delta^+ L^-)(L^+ - L) + \mathcal{O}(\delta_n^3). \label{eq:conv-umn-L}
	\end{align}	
\end{lemma}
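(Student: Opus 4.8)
The plan is to build everything on the multiplicative representation of $u^-(q,\cdot,\cdot)$. Because the ODE $\mathcal{G}\psi-q\psi=0$ on $(b,r)$ has a one‑dimensional solution space once the $b$‑independent condition $\psi(r)=0$ is imposed --- spanned by the solution $\psi_q$ of $\mathcal{G}\psi_q-q\psi_q=0$ with $\psi_q(r)=0$, which is strictly positive on $[l,r)$ since $q>0$ --- uniqueness of the Dirichlet problem forces
\begin{align}
  u^-(q,x,b)=\frac{\psi_q(x)}{\psi_q(b)},\qquad l\le b<x\le r.
\end{align}
Under Assumption \ref{assump:smoothness}, bootstrapping $\psi_q''=(2q/\sigma^2)\psi_q-(2\mu/\sigma^2)\psi_q'$ gives $\psi_q\in C^5([l,r])$, and since $\psi_q(b)>0$ on $[l,L]$ the map $b\mapsto u^-(q,x,b)$ is $C^5$, hence in particular $C^3$, on any closed subinterval of $[l,L]$. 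Differentiating the explicit formula and restricting to the diagonal $x=b=L$ yields $\partial_b u^-(q,L,L)=-\psi_q'(L)/\psi_q(L)=-\partial_x u^-(q,L,L)$ and $\partial_{bb}u^-(q,L,L)=-\partial_{xx}u^-(q,L,L)+2(\partial_x u^-(q,L,L))^2$; combining the latter with $\tfrac12\sigma^2(L)\partial_{xx}u^-(q,L,L)+\mu(L)\partial_x u^-(q,L,L)=q$ (the defining equation evaluated at $x=L$, where $u^-(q,L,L)=1$) and substituting into the left side of \eqref{eq:wm-b-boundary} makes it vanish after a one‑line cancellation.

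For the factorization I would fix a grid point $x\ge L$ (so $x\ge L^+$) and use that $Y$ is a birth‑and‑death process: its first passage below $L$ is the first hitting of $L^-$, which can be reached only after first hitting $L^+$, and that hitting time is exactly $\tau^-_{L^{++}}$. Writing $\tau^-_L=\tau^-_{L^{++}}+\tau^-_L\circ\theta_{\tau^-_{L^{++}}}$ and applying the strong Markov property at $\tau^-_{L^{++}}$ (where $Y=L^+$ a.s.) gives $u^-_n(q,x;L^-)=\widetilde u^-_n(q,x;L^+)\,u^-_n(q,L^+;L^-)$. The value of this split is that $\widetilde u^-_n(q,\cdot;L^+)$ solves the discrete elliptic problem on $\mathbb{S}_n\cap[L^+,r]$ with \emph{both} endpoints $L^+$ and $r$ on the grid, so it is a genuine central‑difference approximation of the boundary value problem for $u^-(q,\cdot,L^+)$ on $(L^+,r)$; the standard second‑order error analysis for such problems (as in \cite{zhang2018analysis}, already invoked for Lemma \ref{lmm:priors-eigenfunctions}) gives $\widetilde u^-_n(q,x;L^+)=u^-(q,x,L^+)+\mathcal{O}(\delta_n^2)$, and expanding $u^-(q,x,L^+)=u^-(q,x,L)+\partial_b u^-(q,x,L)(L^+-L)+\cdots$ via the $C^1$‑in‑$b$ regularity, together with $L^+-L\le L^+-L^-$, yields \eqref{eq:conv-umn-x}.

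Estimate \eqref{eq:conv-umn-L} is the step I expect to cost real effort. The difference $e_n(x):=u^-_n(q,x;L^-)-u^-(q,x,L^-)$ on $\mathbb{S}_n\cap[L^-,r]$ solves a discrete problem whose source is the $\mathcal{O}(\delta_n^2)$ truncation error of ${\pmb G}_n$ applied to the smooth function $u^-(q,\cdot,L^-)$ and which vanishes at the grid endpoints $L^-$ and $r$; exactly as in the node‑adjacent‑to‑the‑boundary estimate already carried out for $\varphi^+_{n,k}(L^-)$ in Lemma \ref{lmm:priors-eigenfunctions}, this forces $e_n(L^+)=\mathcal{O}(\delta^+L^-)\,\mathcal{O}(\delta_n^2)=\mathcal{O}(\delta_n^3)$, so $u^-_n(q,L^+;L^-)=\psi_q(L^+)/\psi_q(L^-)+\mathcal{O}(\delta_n^3)$. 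A Taylor expansion of $\psi_q(L^+)/\psi_q(L^-)$ about $x=L$, kept with $L^+-L$ and $L-L^-$ separate, then reproduces $1-\partial_b u^-(q,L,L)\,\delta^+L^-+\tfrac12\partial_{bb}u^-(q,L,L)(\delta^+L^-)^2$ up to an error of the form $\mathcal{O}(\delta^+L^-)(L^+-L)$, with the diagonal identities for $\partial_b u^-$ and $\partial_{bb}u^-$ from the first paragraph supplying precisely the quadratic coefficient. The main obstacle is the refined node‑adjacent bound: one must control the remainder beyond the leading $\mathcal{O}(\delta_n^2)$ corrector uniformly near $L^-$ and track all the $\mathcal{O}(\delta_n)$ and $\mathcal{O}(\delta_n^2)$ contributions so that only the stated terms survive --- the same bookkeeping, driven by the boundary ODE \eqref{eq:wm-b-boundary} and the $\partial_x\!\leftrightarrow\!\partial_b$ relations, that underlies Proposition \ref{prop:conv-vn}.
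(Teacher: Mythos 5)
Your proposal is correct and follows essentially the same route as the paper: the explicit representation $u^-(q,x,b)=\psi_q(x)/\psi_q(b)$ with $\psi_q$ the (up to scaling unique) solution of $\mathcal{G}\psi-q\psi=0$ vanishing at $r$ is exactly the paper's two-solution formula, the verification of \eqref{eq:wm-b-boundary} via the diagonal relations $\partial_b u^-=-\partial_x u^-$ and $\partial_{bb}u^-=-\partial_{xx}u^-+2(\partial_x u^-)^2$ combined with the interior ODE is an equivalent (if slightly tidier) reorganization of the paper's direct substitution, and the factorization, the $\mathcal{O}(\delta_n^2)$ finite-difference estimate for $\widetilde u_n^-$, the node-adjacent $\mathcal{O}(\delta_n^3)$ bound at $L^+$, and the two-sided Taylor expansion of $u^-(q,L^+,L^-)$ about $b=L$ all match the paper's argument (which defers the node-adjacent step to the proof of Lemma \ref{lmm:prior-u1p}). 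No gaps.
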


\subsection{Convergence Rates of $h_n(q, x; y)$ and Option Prices}

\begin{theorem}\label{thm:conv-h}
	Suppose Assumption \ref{assump:grid} and Assumption \ref{assump:smoothness} hold. Let
	\begin{align}
		h(q, x ; y)=& 1_{\{x<L\}} e^{-q D} v(D, x ; y)+1_{\{x<L\}} u^{+}\left(q, D, x ; L\right) h\left(q, L ; y\right) \\ &+1_{\{x \geq L\}} u^{-}\left(q, x ; L\right) h\left(q, L ; y\right),
	\end{align}
	with $h(q, L; y) = -e^{-qD} \partial_x v(D, x; y) / \left(\partial_x u^+(q, D, L, L) + \partial_b u^-(q, L, L)\right)$. Then for any $q > 0$, $h(q, x; y)$ is twice continuously differentiable in $y$, and
	\begin{align}
		&\lim_{y \downarrow l} h(q, x; y) =  h(q, x; L)  = 0 \text{ for all } x \in [l, r],\\
		&\sup_{x, y \in (l, L)}\left| \partial_y^k h(q, x; y) \right| < \infty,\ k = 0, 1, 2.		
	\end{align}
	Moreover, we have
	\begin{align}
		h_n(q, x; y) / (m_n(y) \delta y) = h(q, x; y) / m(y) + \mathcal{O}(L^+ - L) + \mathcal{O}(h_n^2). \label{eq:hn-xy-error}
	\end{align}
\end{theorem}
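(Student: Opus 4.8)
The plan is to assemble the error estimate for $h_n(q,x;y)$ by substituting, into the recursive representation \eqref{eq:hqxy-diff-recall} and the closed forms \eqref{eq:hLLm-11-recall}--\eqref{eq:hLLm-22-recall}, the error expansions already obtained for the first-passage ingredients: $v_n(D,x;y)$ and $v_n(D,L^-;y)$ from Proposition \ref{prop:conv-vn}, $u_n^+(q,D,x;L^+)$ and $u_n^+(q,D,L^-;L^+)$ from Proposition \ref{prop:conv-unp}, and $\widetilde u_n^-(q,x;L^+)$ and $u_n^-(q,L^+;L^-)$ from Lemma \ref{prop:conv-unn}. First I would analyze the denominator $\Delta_n(q):=1-u_n^-(q,L^+;L^-)u_n^+(q,D,L^-;L^+)$. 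Using \eqref{eq:unp-L-conv} and \eqref{eq:conv-umn-L}, both factors equal $1$ plus an $\mathcal O(\delta^+L^-)$ first-order term plus higher-order terms, so their product is $1 + (\text{sum of the two first-order terms})\delta^+L^- + \mathcal O((\delta^+L^-)^2)$, and hence
\begin{align}
\Delta_n(q) = \bigl(\partial_x u^+(q,D,L,L) + \partial_b u^-(q,L,L)\bigr)\delta^+L^- + \mathcal O(\delta^+L^-)(L^+-L) + \mathcal O(\delta_n^2).
\end{align}
The key point is that $\Delta_n(q)$ is of order $\delta^+L^-$, not $O(1)$; this is what produces a nontrivial limit $h(q,L;y)$ rather than $0$. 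I would check that the leading coefficient $\partial_x u^+(q,D,L,L)+\partial_b u^-(q,L,L)$ is bounded away from zero (using positivity/monotonicity of $u^+$ in $x$ near the upper boundary and of $u^-$ in $b$, which follow from their probabilistic interpretations), so that division by $\Delta_n(q)$ is safe.

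Next I would handle the numerator of \eqref{eq:hLLm-11-recall}. By \eqref{eq:vn-L-conv}, $v_n(D,L^-;y)/(m_n(y)\delta y)$ has leading term $-\partial_x\bar v(D,L;y)\,\delta^+L^-$, again of order $\delta^+L^-$. Dividing numerator by denominator, the two $\delta^+L^-$ factors cancel, giving
\begin{align}
\frac{h_n(q,L^-;y)}{m_n(y)\delta y} = \frac{-e^{-qD}\partial_x\bar v(D,L;y)}{\partial_x u^+(q,D,L,L)+\partial_b u^-(q,L,L)} + \mathcal O(L^+-L) + \mathcal O(\delta_n^2),
\end{align}
which matches $h(q,L;y)/m(y)$ (recalling $\bar v = v/m$ and the definition of $h(q,L;y)$ in the statement). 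Here I would use the boundary identities \eqref{eq:v-bar-boundary}, \eqref{eq:up-boundary}, \eqref{eq:wm-b-boundary} to verify consistency and, more importantly, to cancel what would otherwise be leftover first-order-in-$\delta_n$ terms when expanding the ratio (writing $(1+a)/(1+b) = 1 + (a-b) + \ldots$ and observing the $\delta^+L^- - \delta^-L^-$-type terms organize into the $\mathcal O(\delta_n^2)$ bucket under Assumption \ref{assump:grid}). The same computation with the extra factor $u_n^-(q,L^+;L^-) = 1 + \mathcal O(\delta^+L^-)$ gives $h_n(q,L^+;y)/(m_n(y)\delta y)$ with the same leading term, since the correction is higher order — consistent with $h(q,L^+;y)$ and $h(q,L^-;y)$ sharing the limit $h(q,L;y)$.

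With the boundary values controlled, I would substitute back into \eqref{eq:hqxy-diff-recall}: for $x<L$ the first term is $e^{-qD}v_n(D,x;y)$, handled by \eqref{eq:vn-x-conv}, and the second is $u_n^+(q,D,x;L^+)h_n(q,L^+;y)$, a product of quantities with known $\mathcal O(L^+-L)+\mathcal O(\delta_n^2)$ expansions (here $u_n^+ \to u^+(q,D,x,L) = O(1)$ in the interior, so no delicate cancellation is needed); for $x\ge L$ use $u_n^-(q,x;L^-) = \widetilde u_n^-(q,x;L^+)u_n^-(q,L^+;L^-)$ together with \eqref{eq:conv-umn-x}, \eqref{eq:conv-umn-L}, and the established estimate for $h_n(q,L^-;y)$. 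Collecting terms and dividing by $m_n(y)\delta y$ yields \eqref{eq:hn-xy-error}. The regularity claims — $h(q,\cdot;y)$ twice continuously differentiable in $y$ with uniformly bounded derivatives, and $\lim_{y\downarrow l}h(q,x;y)=h(q,x;L)=0$ — follow from the corresponding smoothness of $\bar v(D,\cdot;\cdot)$, $u^+$, $u^-$ in Lemma \ref{lmm:priors-eigenfunctions} and Lemmas \ref{lmm:prior-u1p}, \ref{prop:conv-unn} (differentiating the eigenfunction expansion of $v$ term by term, justified by the $k^{-}$ growth bounds on $\lambda_k$ and $\varphi_k$), with the vanishing at $y=l$ coming from $\varphi_k^+(l,L)=0$ so that $\bar v(D,x;l)$ — hence its boundary-derivative combination — degenerates appropriately. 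I expect the main obstacle to be the denominator analysis: verifying that $\Delta_n(q)$ genuinely carries the factor $\delta^+L^-$ with a nonvanishing, sign-definite leading coefficient and that after the cancellation no spurious $\mathcal O(\delta_n)$ term survives — this is exactly the step where the boundary identities of Propositions \ref{prop:conv-vn}, \ref{prop:conv-unp} and Lemma \ref{prop:conv-unn} must be invoked in concert, and it is the analogue, for the variable-boundary Parisian setting, of the cancellations that are automatic in the fixed-boundary barrier problem of \cite{zhang2018analysis}.
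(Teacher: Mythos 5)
Your proposal follows essentially the same route as the paper's proof: expand the numerator $v_n(D,L^-;y)$ and the denominator $1-u_n^-(q,L^+;L^-)u_n^+(q,D,L^-;L^+)$ to second order in $\delta^+L^-$, observe that both carry a leading factor $\delta^+L^-$ whose cancellation produces $h(q,L;y)$, invoke the three boundary identities \eqref{eq:v-bar-boundary}, \eqref{eq:up-boundary}, \eqref{eq:wm-b-boundary} to annihilate the residual first-order term, and then propagate via the factorization $u_n^-(q,x;L^-)=\widetilde u_n^-(q,x;L^+)u_n^-(q,L^+;L^-)$ and \eqref{eq:hqxy-diff-recall}. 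The only piece you leave unexecuted is the explicit algebra showing that the combination $-\tfrac12\bar v_x u^-_{bb}-\tfrac12\bar v_x u^+_{xx}+\bar v_x(u^-_b)^2+\tfrac12\bar v_{xx}u^-_b+\tfrac12 u^+_x\bar v_{xx}$ vanishes, which is exactly the computation the paper carries out with those identities.
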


To obtain the convergence rate of option prices, we first analyze $\widetilde{w}_n(q, x)$. Recall that $\mathcal{D}$ is the set of discontinuities of the payoff function $f$. 

\begin{proposition}\label{prop:conv-wn}
	Suppose Assumption \ref{assump:grid} and Assumption \ref{assump:smoothness} hold. Then for any $q > 0$, the following equation admit a unique solution $\widetilde{w}(q, \cdot) \in C^1([l, r]) \cap C^2([l, r] \backslash \mathcal{D})$:,
	\begin{align}\label{eq:w-tilde-equation}
		\begin{cases}
			\mu(x) w'(x) + \frac{1}{2} \sigma^2(x) w''(x) - qw(x) = f(x),\ x \in [l, r] \backslash \mathcal{D},\\
			w(x+) = w(x-),\ w'(x+) = w'(x-),\ x \in \mathcal{D},\\  
			w(l) = f(l) / q,\ w(r) = f(r) / q.
		\end{cases}
	\end{align}
	In particular, $\widetilde{w}(q, \cdot) \in C^2([l, r] )$ if $f(\cdot)$ is continuous on $[l, r]$. Moreover,
	\begin{align}
		\widetilde{w}_n(q, x) - \widetilde{w}(q, x) = \mathcal{O}(h_n^\gamma), \label{eq:wn-conv}
	\end{align}
	where $\gamma = 1$ in general and $\gamma = 2$ if either $f(\cdot)$ is continuous on $(l, r)$ or each point in $\mathcal{D}$ is placed midway between two adjacent grid points.
\end{proposition}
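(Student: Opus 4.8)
The plan is to treat the continuous boundary value problem and the finite-difference scheme $(q\pmb{I}-\pmb{G}_n)\widetilde{w}_n(q,\cdot)=\pmb{f}$ separately, deriving \eqref{eq:wn-conv} from a consistency-and-stability analysis in the spirit of the European-option error bounds in \cite{zhang2019analysis}. For well-posedness and regularity of $\widetilde{w}(q,\cdot)$, write $\mathcal{D}=\{\xi_1<\dots<\xi_d\}$ and split $[l,r]$ into the closed subintervals $[l,\xi_1],\dots,[\xi_d,r]$. On each of them $f$ (extended by its one-sided limits at the endpoints) is continuous, $\sigma$ is bounded away from $0$, and $\mu\in C^3,\sigma\in C^4$, so the general solution of $\mu w'+\tfrac12\sigma^2w''-qw=f$ there is $c_1\phi_1+c_2\phi_2+w_p$, with $\{\phi_1,\phi_2\}$ a fundamental system of the homogeneous equation and $w_p$ a particular solution obtained by variation of parameters that belongs to $C^2$ of the subinterval. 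Imposing the two Dirichlet conditions at $l,r$ and the $2d$ interface conditions (continuity of $w$ and $w'$ at each $\xi_i$) gives a square linear system for the $2(d+1)$ constants, whose homogeneous version corresponds to a $C^1([l,r])$ function solving $\mathcal{G}w=qw$ off $\mathcal{D}$ with zero boundary values; by the maximum principle for $\mathcal{G}-q$ with $q>0$ this must vanish, so the system is uniquely solvable. Thus $\widetilde{w}(q,\cdot)$ exists and is unique; by construction it lies in $C^1([l,r])$ and is $C^2$ on each subinterval, hence in $C^1([l,r])\cap C^2([l,r]\setminus\mathcal{D})$, and since $\widetilde{w}''=\tfrac2{\sigma^2}(f+q\widetilde{w}-\mu\widetilde{w}')$ the one-sided second derivatives match at each $\xi_i$ when $f$ is continuous, giving $\widetilde{w}(q,\cdot)\in C^2([l,r])$ in that case.

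For the rate, note first that $y_0=l,y_n=r$ absorbing forces $\widetilde{w}_n(q,l)=f(l)/q=\widetilde{w}(q,l)$ and likewise at $r$, so $e_n:=\widetilde{w}_n(q,\cdot)-\widetilde{w}(q,\cdot)$ vanishes at the endpoints and on $\mathbb{S}_n^o$ solves $(q\pmb{I}-\pmb{G}_n)e_n=-\tau_n$, where the truncation error is $\tau_n(x):=(q\pmb{I}-\pmb{G}_n)\widetilde{w}(q,\cdot)(x)-f(x)$: at grid points away from $\mathcal{D}$ this is the classical consistency error of the scheme for the smooth function $\widetilde{w}(q,\cdot)$, and at the $O(d)$ grid points whose stencil straddles some $\xi_j$ it is computed from a two-sided Taylor expansion of $\widetilde{w}(q,\cdot)$ about $\xi_j$ using the interface conditions. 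Inverting, $e_n(x)=-\sum_{y\in\mathbb{S}_n^o}G_n(x,y)\,\tau_n(y)\,m_n(y)\,\delta y$ with $G_n$ the symmetric discrete Green's function of $q\pmb{I}-\pmb{G}_n$ relative to the speed measure $m_n\,\delta y$; boundedness of the two discrete fundamental solutions and a uniform lower bound on the discrete Wronskian give uniform-in-$n$ bounds on $G_n$ and on its first difference in $y$, which is the stability input.

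Now split $\tau_n=\tau_n^{\mathrm{reg}}+\tau_n^{\mathrm{loc}}$ according to whether the stencil avoids or straddles $\mathcal{D}$. On the support of $\tau_n^{\mathrm{reg}}$, $\widetilde{w}(q,\cdot)$ is smooth; writing $\pmb{G}_n$ in its self-adjoint flux form and using discrete summation by parts against $G_n$ shows $\tau_n^{\mathrm{reg}}$ contributes only $\mathcal{O}(\delta_n^2)$ to $e_n$, even though it is merely $\mathcal{O}(\delta_n)$ pointwise on a non-uniform grid --- precisely the argument of \cite{zhang2019analysis}. For $\tau_n^{\mathrm{loc}}$, the two-sided expansion shows that the jump $\widetilde{w}''(\xi_j+)-\widetilde{w}''(\xi_j-)$, which is nonzero exactly when $f$ is discontinuous at $\xi_j$, enters the second-difference stencil and produces a local contribution to $e_n$ that equals, to leading order, a Green's-function-weighted combination of the flux error on the cell straddling $\xi_j$; this is $\mathcal{O}(\delta_n)$ in general, but drops to $\mathcal{O}(\delta_n^2)$ when $\xi_j$ lies at a cell boundary --- i.e.\ midway between two grid points --- so that the effective jump location of the piecewise-constant source seen by the chain coincides with $\xi_j$; and when $f$ is continuous the jump in $\widetilde{w}''$ vanishes, so $\tau_n$ is $\mathcal{O}(\delta_n)$ pointwise everywhere and $e_n=\mathcal{O}(\delta_n^2)$. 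Assembling the two contributions gives \eqref{eq:wn-conv} with $\gamma=1$ in general and $\gamma=2$ in the two stated cases.

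The main obstacle is obtaining these sharp exponents rather than a crude first-order bound: since the pointwise consistency error is genuinely only $\mathcal{O}(\delta_n)$ on a non-uniform grid, one cannot estimate $e_n$ through the $L^1$-norm of $\tau_n$ and must instead exploit the conservative flux form of $\pmb{G}_n$, discrete summation by parts, and the uniform bounds on $G_n$ and its $y$-difference, following \cite{zhang2019analysis}. The second, bookkeeping-heavy difficulty is the local analysis near each $\xi_j$: expanding $\widetilde{w}(q,\cdot)$ to sufficient order on both sides, identifying the leading error at the straddling grid points, and verifying its exact cancellation under midpoint placement (and its absence when $f$ is continuous).
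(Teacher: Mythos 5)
Your proposal is correct in substance but follows a genuinely different route from the paper's on both halves of the statement. For well-posedness, the paper simply invokes Theorem 1.4 of \cite{athanasiadis1996some}, whereas you construct $\widetilde{w}(q,\cdot)$ piecewise by variation of parameters, glue with the interface conditions, and settle uniqueness of the $2(d+1)$ constants via the maximum principle for $\mathcal{G}-q$; your argument is more self-contained and also delivers the $C^2$ upgrade for continuous $f$ by the same one-sided computation of $\widetilde{w}''$ that the paper leaves implicit. For the rate, the paper does \emph{not} go through the discrete Green's function. Instead it integrates the self-adjoint form of the ODE twice (first against $m\,dx$ from $l^+_{1/2}$ to $z$, then against $s\,dz$ over $[x,x^+]$), does the matching double summation of the flux form of $\pmb{G}_n$, subtracts, and obtains a one-step recursion for $e(x^+)-e(x)$ whose increments are $\mathcal{O}(\delta_n^3)$ away from $\mathcal{D}$ and degrade only through the quadrature error of $\int f\,m$ across the straddling cell; a discrete Gronwall inequality applied to $\pm e$ plus the shooting condition $e(r)=0$ then pins down the unknown left-boundary flux error to $\mathcal{O}(\delta_n^\gamma)$ and yields \eqref{eq:wn-conv}. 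The two methods exploit the same conservative structure and locate the $\gamma$-dependence in the same place (the Riemann/midpoint quadrature of $fm$ over the cell containing $\xi_j$, which is exact to second order when $\xi_j$ sits at a cell boundary, i.e.\ midway between grid points), so your cancellation mechanism is the right one. What the paper's route buys is that it never needs the stability lemma you take as input --- uniform-in-$n$ bounds on $G_n(x,y)$ and on its first difference in $y$ via bounds on the discrete fundamental solutions and a lower bound on the discrete Wronskian; that lemma is plausible and standard but is itself roughly as much work as the Gronwall argument, so if you follow your route you should prove it rather than assert it. Conversely, your Green's-function split makes the consistency-versus-stability structure more transparent and avoids the slightly delicate bookkeeping of how the one-cell quadrature defect propagates through every subsequent increment of the recursion.
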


Let 
\begin{align}
	\widetilde{u}(q, x) = \int_{l}^L h(q, x; z) \widetilde{w}(q, z)dz + h(q, x; l) \widetilde{w}(q, l). \label{eq:u-opt-recall}
\end{align}
The error of $\tilde{u}_n(q, x)$ for approximating $\widetilde{u}(q, x)$ can be decomposed into four parts as follows:
\begin{align}
	&\widetilde{u}_n(q, x) - \widetilde{u}(q, x) \\
	&= \sum_{z \in \mathbb{S}_n \cap (l, L)} h_n(q, x; z) \widetilde{w}_n(q, z) + h_n(q, x; l) \widetilde{w}_n(q, l) - \int_{l}^{L} h(q, x; z) \widetilde{w}(q, z) dz - h(q, x; l) \widetilde{w}(q, l)\\
	&= \sum_{z \in \mathbb{S}_n \cap (l, L)} \left(h_n(q, x; z) /(m_n(z) \delta z) - h(q, x; z)/m(z)\right) \widetilde{w}_n(q, z) m_n(z) \delta z \\
	&\quad + h_n(q, x; l) \widetilde{w}_n(q, l) - h(q, x; l) \widetilde{w}(q, l) \\
	&\quad + \sum_{z \in \mathbb{S}_n \cap (l, L)} h(q, x; z) \left( \widetilde{w}_n(q, z) - \widetilde{w}(q, z)  \right) \frac{m_n(z)}{m(z)} \delta z \\
	&\quad + \frac{1}{m(z)}\left( \sum_{z \in \mathbb{S}_n \cap (l, L)} h(q, x; z) \widetilde{w}(q, z) m_n(z) \delta z - \int_{l}^{L} h(q, x; z) \widetilde{w}(q, z) m(z) dz\right).
\end{align}
The first to the third parts of errors can be analyzed using the estimates for the approximation errors for $h(q, x; z)$ and $\widetilde{w}(q, z)$. The last part is the error of a numerical integration scheme which can be analyzed with the estimate of $m_n$ in Lemma \ref{lmm:priors-eigenfunctions}. Utilizing the previous estimates, we obtain the following theorem.

\begin{theorem}\label{thm:conv-u}
	Suppose Assumption \ref{assump:grid} and Assumption \ref{assump:smoothness} hold. Then we have the following estimate for any $q > 0$:
	\begin{align}
		\widetilde{u}_n(q, x) = \widetilde{u}(q, x) +  \mathcal{O}(L^+ - L) + \mathcal{O}(\delta_n^\gamma),  \label{eq:un-error}
	\end{align}
	where $\gamma = 1$ in general and $\gamma = 2$ if each point in $\mathcal{D}$ is placed midway between two adjacent grid points.
\end{theorem}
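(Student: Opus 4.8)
# Proof Proposal for Theorem \ref{thm:conv-u}

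The plan is to bound each of the four error terms in the decomposition of $\widetilde{u}_n(q, x) - \widetilde{u}(q, x)$ displayed just before the statement, and then combine them. The decomposition is designed precisely so that each piece matches a previously established estimate, so the work is mostly bookkeeping once the ingredients are in place.

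First I would handle term one, $\sum_{z \in \mathbb{S}_n \cap (l, L)} \left(h_n(q, x; z)/(m_n(z)\delta z) - h(q, x; z)/m(z)\right) \widetilde{w}_n(q, z) m_n(z)\delta z$. By Theorem \ref{thm:conv-h}, the bracket is $\mathcal{O}(L^+ - L) + \mathcal{O}(\delta_n^2)$ uniformly in $x, z$; by Proposition \ref{prop:conv-wn} and the boundedness of $\widetilde{w}$, the factor $\widetilde{w}_n(q, z)$ is uniformly bounded (it converges to $\widetilde{w}(q, z)$ which is continuous on the compact $[l, r]$); and $\sum_{z} m_n(z)\delta z$ is bounded by roughly $L - l$ using the two-sided bound on $m_n$ in Lemma \ref{lmm:priors-eigenfunctions}. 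Hence term one is $\mathcal{O}(L^+ - L) + \mathcal{O}(\delta_n^2)$. Term two, $h_n(q, x; l)\widetilde{w}_n(q, l) - h(q, x; l)\widetilde{w}(q, l)$, I would split as $(h_n(q, x; l) - h(q, x; l))\widetilde{w}_n(q, l) + h(q, x; l)(\widetilde{w}_n(q, l) - \widetilde{w}(q, l))$; the first factor is $\mathcal{O}(L^+ - L) + \mathcal{O}(\delta_n^2)$ by the $y = l$ part of Theorem \ref{thm:conv-h} (the estimates for $v_n(D, x; l)$ etc.\ feed into $h_n(q, x; l)$), and the second difference vanishes since $\widetilde{w}_n(q, l) = f(l)/q = \widetilde{w}(q, l)$ by the boundary condition, so term two is $\mathcal{O}(L^+ - L) + \mathcal{O}(\delta_n^2)$.

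For term three, $\sum_{z \in \mathbb{S}_n \cap (l, L)} h(q, x; z)\left(\widetilde{w}_n(q, z) - \widetilde{w}(q, z)\right)\frac{m_n(z)}{m(z)}\delta z$, I would use $\widetilde{w}_n(q, z) - \widetilde{w}(q, z) = \mathcal{O}(\delta_n^\gamma)$ from Proposition \ref{prop:conv-wn}, boundedness of $h(q, x; z)$ from Theorem \ref{thm:conv-h}, boundedness of $m_n/m$ from the bounds on $m_n$ and $m$, and again $\sum_z \delta_n \lesssim L - l$; this gives $\mathcal{O}(\delta_n^\gamma)$, which is where the $\gamma$ dichotomy in the conclusion originates. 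Term four, $\frac{1}{m(z)}\bigl(\sum_{z} h(q, x; z)\widetilde{w}(q, z) m_n(z)\delta z - \int_l^L h(q, x; z)\widetilde{w}(q, z) m(z)\,dz\bigr)$ (with the outer $1/m(z)$ evidently a typo for absorbing $m(z)$ into the integrand, i.e.\ it is the quadrature error for $\int_l^L h\,\widetilde{w}\,m\,dz$), is a Riemann-sum-type error: replacing $m_n(z)$ by $m(z) + m(z)\frac{b(z)}{a(z)}(\delta^+z - \delta^-z) + \mathcal{O}(\delta_n^2)$ from Lemma \ref{lmm:priors-eigenfunctions}, the main term is a midpoint-type sum for a $C^1$ (indeed $C^2$ away from $\mathcal{D}$) integrand $h(q, x; \cdot)\widetilde{w}(q, \cdot)m(\cdot)$, whose quadrature error is $\mathcal{O}(\delta_n^2)$, while the correction term involving $\delta^+z - \delta^-z$ contributes a telescoping/second-order piece also of order $\mathcal{O}(\delta_n^2)$; the nonsmoothness of $\widetilde{w}$ at points of $\mathcal{D}$ contributes only finitely many cells each of error $\mathcal{O}(\delta_n^2)$, so term four is $\mathcal{O}(\delta_n^2)$. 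Summing the four contributions yields $\widetilde{u}_n(q, x) - \widetilde{u}(q, x) = \mathcal{O}(L^+ - L) + \mathcal{O}(\delta_n^\gamma)$ with $\gamma$ as stated.

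The main obstacle I anticipate is term four, the quadrature-error analysis: on a general non-uniform grid satisfying only Assumption \ref{assump:grid}, a naive rectangle rule is only first order, so one must verify that the particular sampling (evaluating at grid points $z$ with weights $m_n(z)\delta z$, where $\delta z = (\delta^+z + \delta^-z)/2$ is the symmetric cell width) together with the expansion $m_n(z) = m(z) + m(z)\frac{b(z)}{a(z)}(\delta^+z - \delta^-z) + \mathcal{O}(\delta_n^2)$ actually produces a midpoint-like cancellation of the first-order term — this is exactly the mechanism by which the $\mathcal{O}(\delta_n^2)$ (rather than $\mathcal{O}(\delta_n)$) rate survives, and it parallels the role the boundary-dependent estimates played in cancelling lower-order terms in the earlier propositions. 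A secondary nuisance is carefully tracking that all the $h$, $\widetilde{w}$ factors and their derivatives are uniformly bounded on $(l, L)$, which is guaranteed by Theorem \ref{thm:conv-h} and Proposition \ref{prop:conv-wn} but must be invoked explicitly so that the $\mathcal{O}$ constants are genuinely independent of $n$.
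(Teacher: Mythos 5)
Your proposal is correct and follows essentially the same route as the paper: the same four-term decomposition, the same inputs (Theorem \ref{thm:conv-h}, Proposition \ref{prop:conv-wn}, the $m_n$ expansion in Lemma \ref{lmm:priors-eigenfunctions}), and the same second-order quadrature mechanism, which the paper organizes as a trapezoid sum plus a telescoping correction in $(\delta^+z)^2-(\delta^-z)^2$ rather than a midpoint rule. The one detail to make explicit in the quadrature step is that the half-cells adjacent to $l$ and $L$ would contribute only $\mathcal{O}(\delta_n)$ for a generic integrand; they are $\mathcal{O}(\delta_n^2)$ here precisely because $\lim_{y\downarrow l}h(q,x;y)=h(q,x;L)=0$ (Theorem \ref{thm:conv-h}).
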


\begin{remark}
	Setting $f(x) = 1$, we obtain the error estiamte for $h(q,x)$ as $h_n(q, x) = h(q, x) + \mathcal{O}(L^+ - L) +  \mathcal{O}(\delta_n^2)$.
\end{remark}

\subsection{Grid Design}

From Theorem \ref{thm:conv-u}, we need to place $L$ on the grid (in this case $L^+ = L$) and strikes in the midway to ensure second order convergence. This can be achieved by the piecewise uniform (PU) grid construction proposed in \cite{zhang2019analysis}. For example, suppose the payoff has a single strike at $K < L$, then the PU grid can be constructed as follows.
\begin{align}
	\mathbb{S}_{PU} &= \{l + (i + 1/2)h_1: 0 \le i \le n_1 - 1 \} \cup \{ K + h_1/2  + ih_2: 0 \le i \le n_2  \} \\
	&\quad \cup \{ L + ih_3: 0 \le i \le n_3 \}, \label{eq:PU-grid}
\end{align}
where $h_1 = (K - l) / n_1$, $h_2 = (L - K - h_1/2) / n_2$ and $h_3 = (r - L) / n_3$. The case of $L < K$ is similar. 

Although this grid design is inspired by the convergence rate results for diffusion models, we can also apply it to pure-jump and jump-diffusion models to remove convergence oscillations and make extrapolation applicable to accelerate convergence, as shown by numerical examples. 

\begin{remark}
If $K = L$, we do not have a grid that can meet both requirements that $L$ should be on the grid and $K$ should be placed midway. However, we can still achieve second order convergence in the following way. We first construct a grid that places $K$ midway to calculate $\widetilde{w}(q, x)$ on this grid. Then we use a second grid with $L$ on it to calculate $\widetilde{u}_n(q, x)$. Since these two grids do not coincide, we need the values of $\widetilde{w}(q, x)$ at points off the first grid and they can be obtained by interpolating the values of $\widetilde{w}(q, x)$ on the first grid using a high-order interpolation scheme. 
\end{remark}

\section{Numerical Examples}\label{sec:numerical-ex}

To illustrate the performance of our algorithm, we compute the distribution of Parisian stopping times and price Parisian options under several popular models. Throughout this section, we use the following parameter setting: the risk-free rate $r_f = 0.05$, the dividend yield $d = 0$, the option's maturity $T = 1$, the reference level for the length of excursion $D=1/12$, the initial stock price $S_0 = 90$, the barrier level $L = 90$, and the strike price $K = 95$, except for the standard Brownian motion model. We consider the following representative models:
\begin{itemize}
	\item The standard Brownian (BM).
	\item The Black-Scholes (BS) model: $\sigma= 0.3$.
	\item Regime-switching BS model with $2$ regimes: volatility is $0.3$ in the first regime and $0.5$ in the second. The rate of transition is 0.75 from regime 1 to 2 and 0.25 from 2 to 1.
	\item Kou's model (\cite{kou2004option}): $\sigma = 0.30, \lambda = 3.0, \eta^+ = \eta^- = 0.1, p^+ = p^- = 0.5$. This is a popular  jump-diffusion model with finite jump activity. 
	\item Variance Gamma (VG) model (\cite{madan1998variance}): $\sigma = 0.1213$, $\nu = 0.1686$, $\theta = -0.1436$. This is a commonly used pure jump model with infinite jump activity. 
\end{itemize}

Figure \ref{fig:bm-cdf-12} shows the convergence of $\mathbb{P}[\tau_{L, D}^- \le t]$ under the standard BM model by plotting errors against the number of CTMC states. Apparently we can observe second order convergence in both plots (see the blue lines) for $t = 1.5$ and $t = 3.0$, which verifies the theoretical convergence rate. Furthermore, the plots demonstrate that Richardson's extrapolation can substantially reduce the error as seen from the red lines.

Figure \ref{fig:opt-pricing} presents the convergence behavior of CTMC approximation for a sing-barrier down-and-in Parisian option under the BS model,  Kou's model and the VG model, respectively. In the left panel, we compare two different grids: a picewise uniform grid with $L$ on the grid and $K$ placed midway between two adjacent grid points (see \eqref{eq:PU-grid}) and a uniform grid. For all three models, when the uniform grid is used, convergence is only first order. Moreover, there are big oscillations and this means Richardson's extrapolation cannot be applied. In contrast, using the grid we design, convergence becomes second order and smooth in all the models. The results for the BS model justify our theoretical results. Although we don't have proof for models with jumps, these numerical examples suggest that our grid design would also work for jump models. The plots in the right panel showcase the effectiveness of Richardson's extrapolation. We also provide the 
numerical values for the original and extrapolated results in Table \ref{tab:bs-down-in}, \ref{tab:kou-down-in} and \ref{tab:cgmy-down-in}. In Section \ref{app:RS-SV}, we extend our algorithm to deal with regime-switching models. As an example, we consider the regime-switching BS model and the good performance of our method can be seen from the results in Table \ref{tab:rs-bs-down-in}.

%As a comparison, we implement the Monte Carlo simulation method with Richardson extrapolation to price the same Parisian down-in option under the BS model. Table \ref{tab:bs-monte-carlo} presents the results. We can see that the relative error of Monte Carlo simulation is still sizable compared with CTMC approximation when the computation time is about $40$ minutes. 

We compare our method with the explicit finite difference scheme developed in \cite{wilmott2013paul}. The results are shown in Figure \ref{fig:comparison-fdm}, which shows that our algorithm takes much less time (about 25\% of the finite difference method) to achieve the same error level. The relatively slow performance of the finite difference approach should be expected as the PDE satisfied by the Parisian down-in option involves an additional dimension that records the length of excursion below the barrier and time discretization is required. 

\begin{figure}
	\centering
	\includegraphics[width = 0.45 \textwidth]{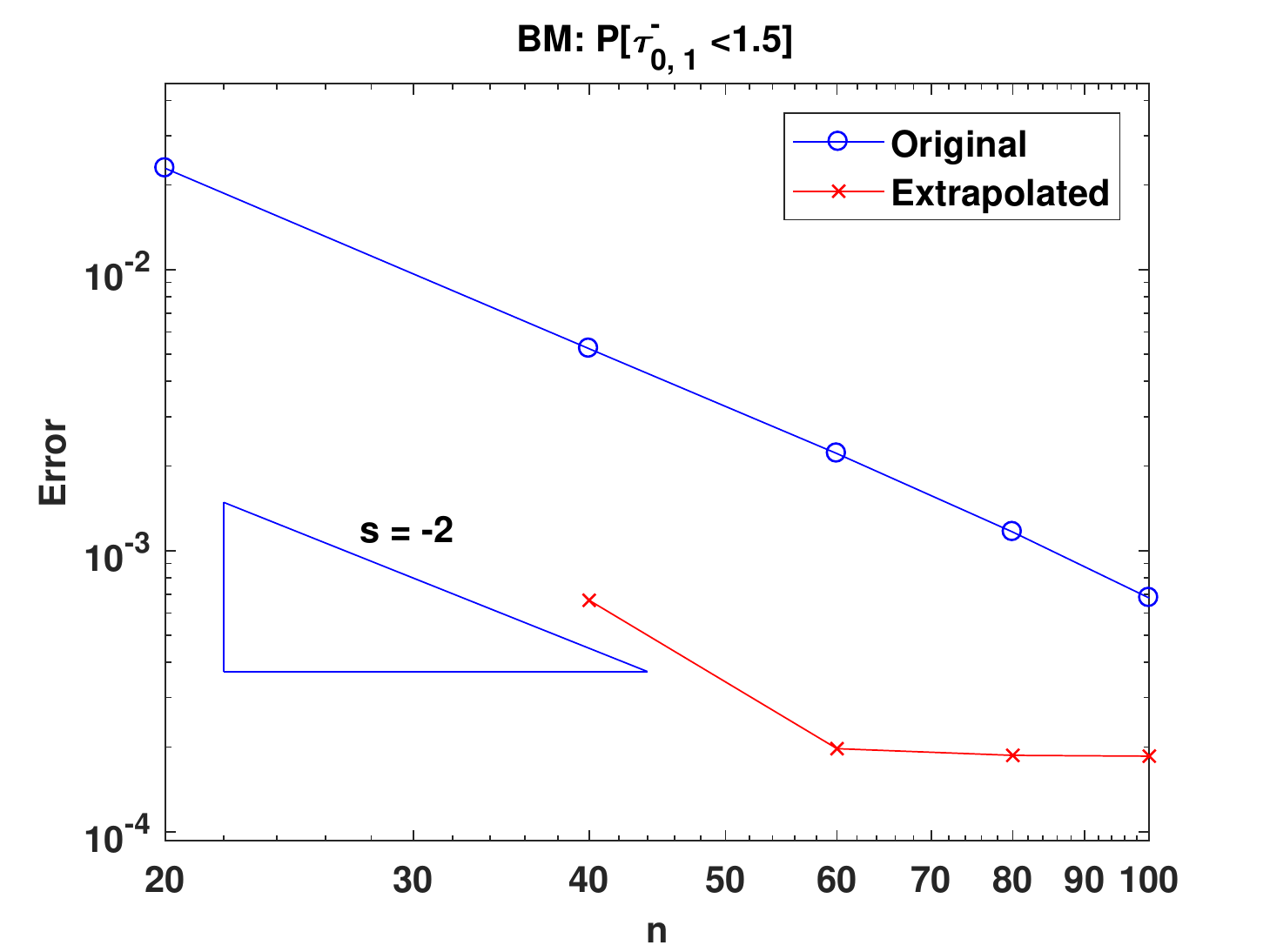}
	\includegraphics[width = 0.45 \textwidth]{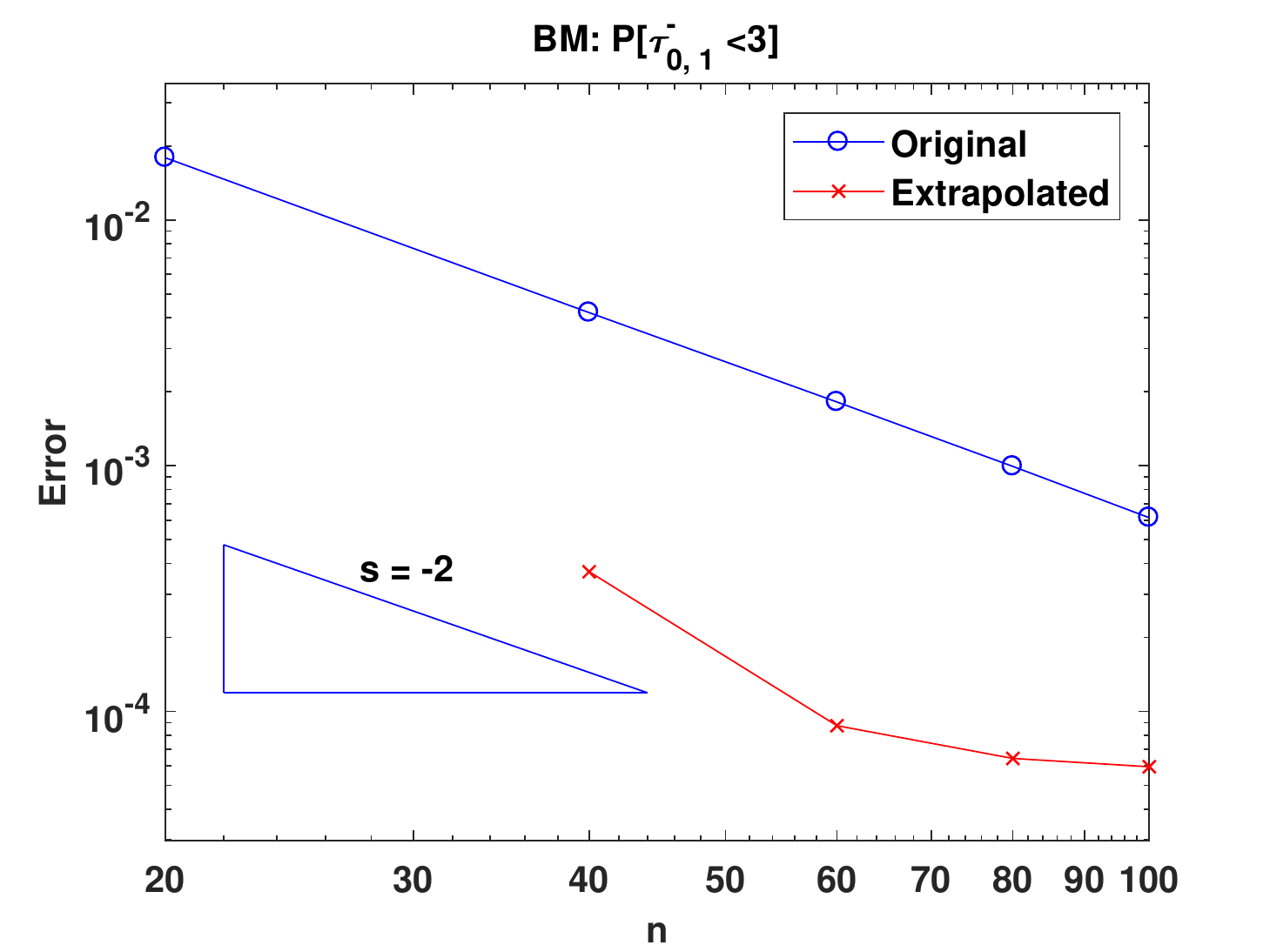}
	\caption{The convergence of $\mathbb{P}[\tau_{L, D}^- \le t]$ of a standard Brownian motion with $L = 0, D = 1$ and $t = 1.5, 3$. The benchmark for calculating the error is given by the solution in \cite{dassios2013parisian}.}\label{fig:bm-cdf-12}  
\end{figure}

\begin{figure}
	\centering
	\includegraphics[width = 0.45 \textwidth]{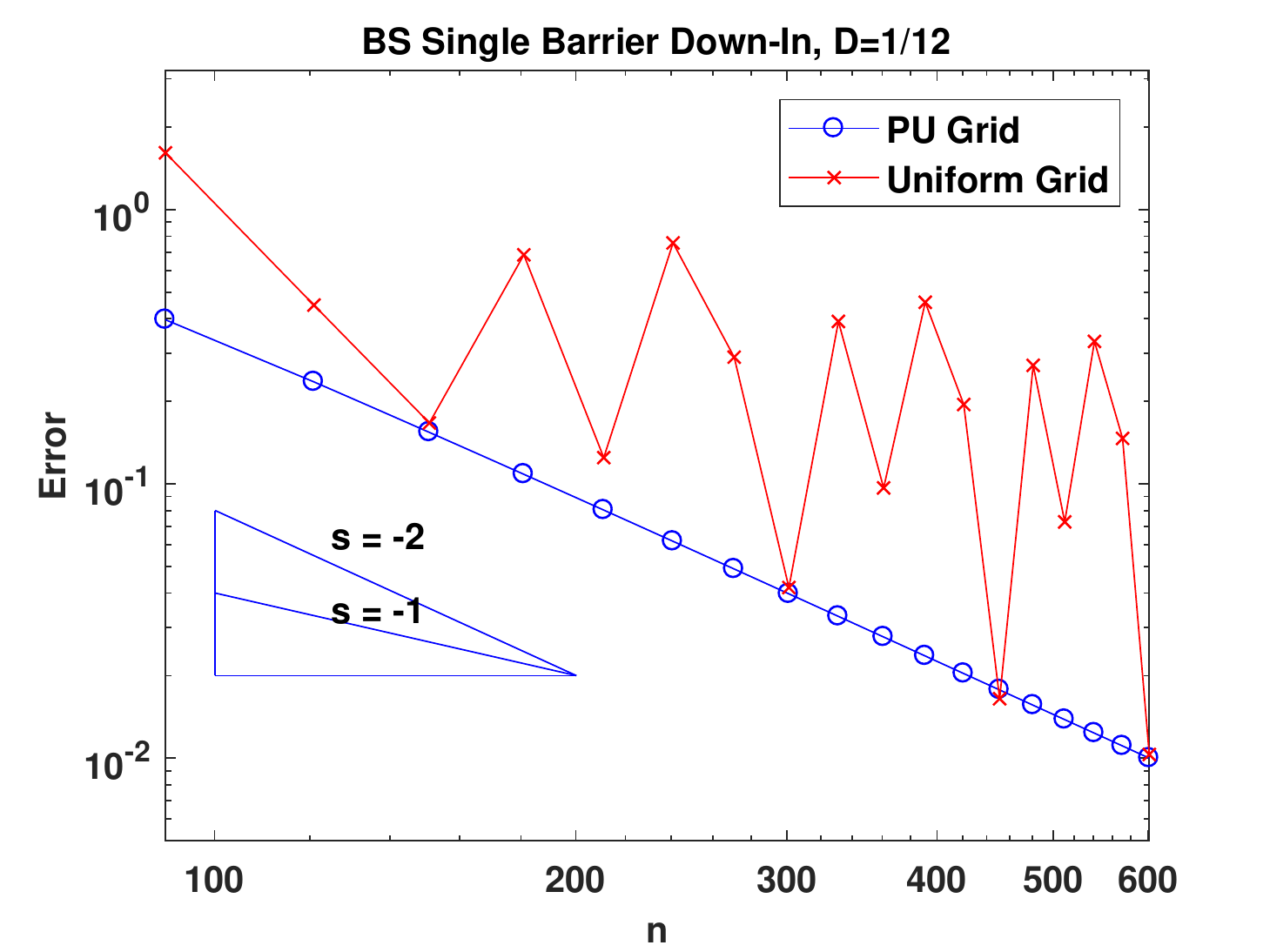}
	\includegraphics[width = 0.45 \textwidth]{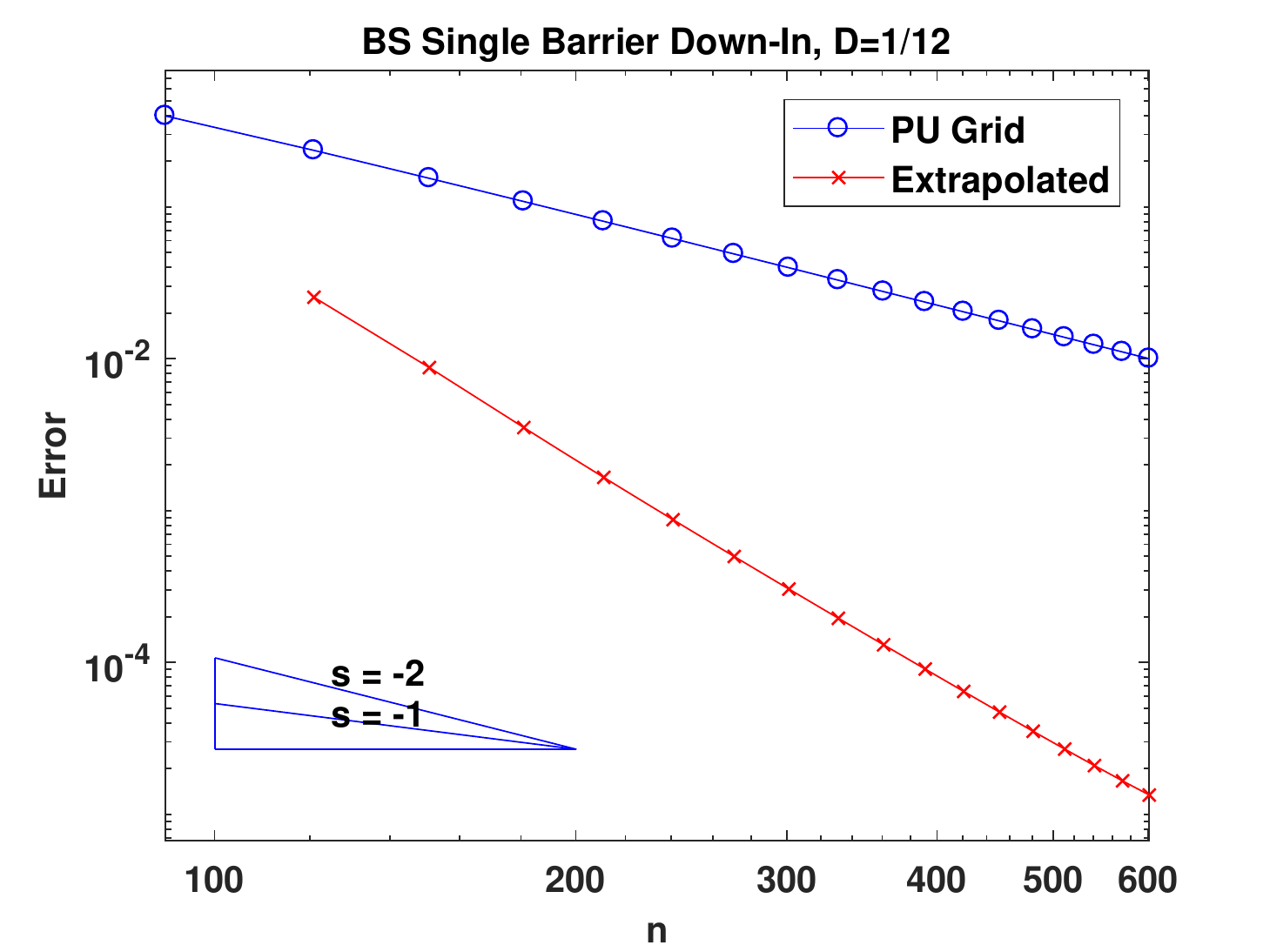}
	\includegraphics[width = 0.45 \textwidth]{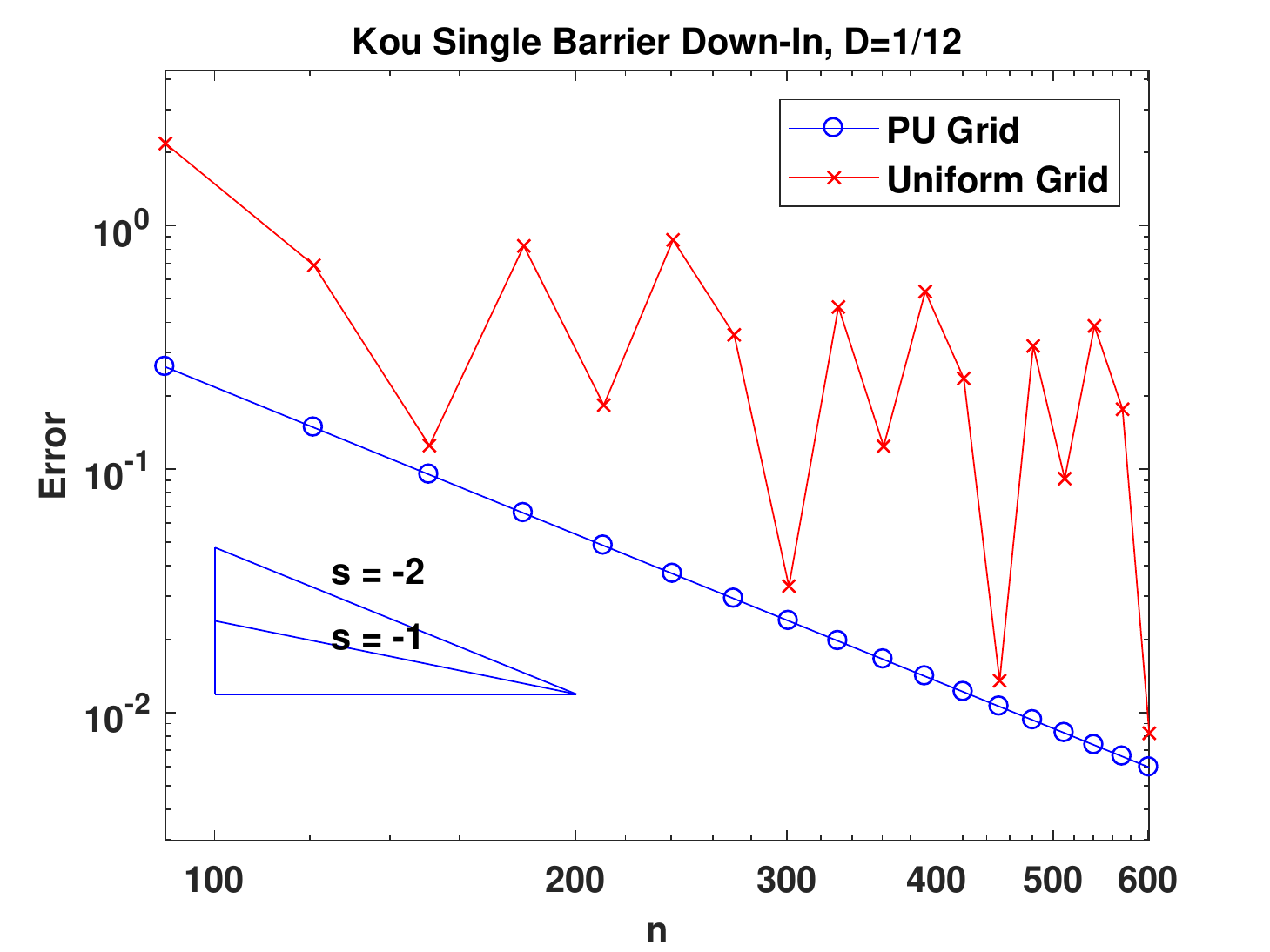}
	\includegraphics[width = 0.45 \textwidth]{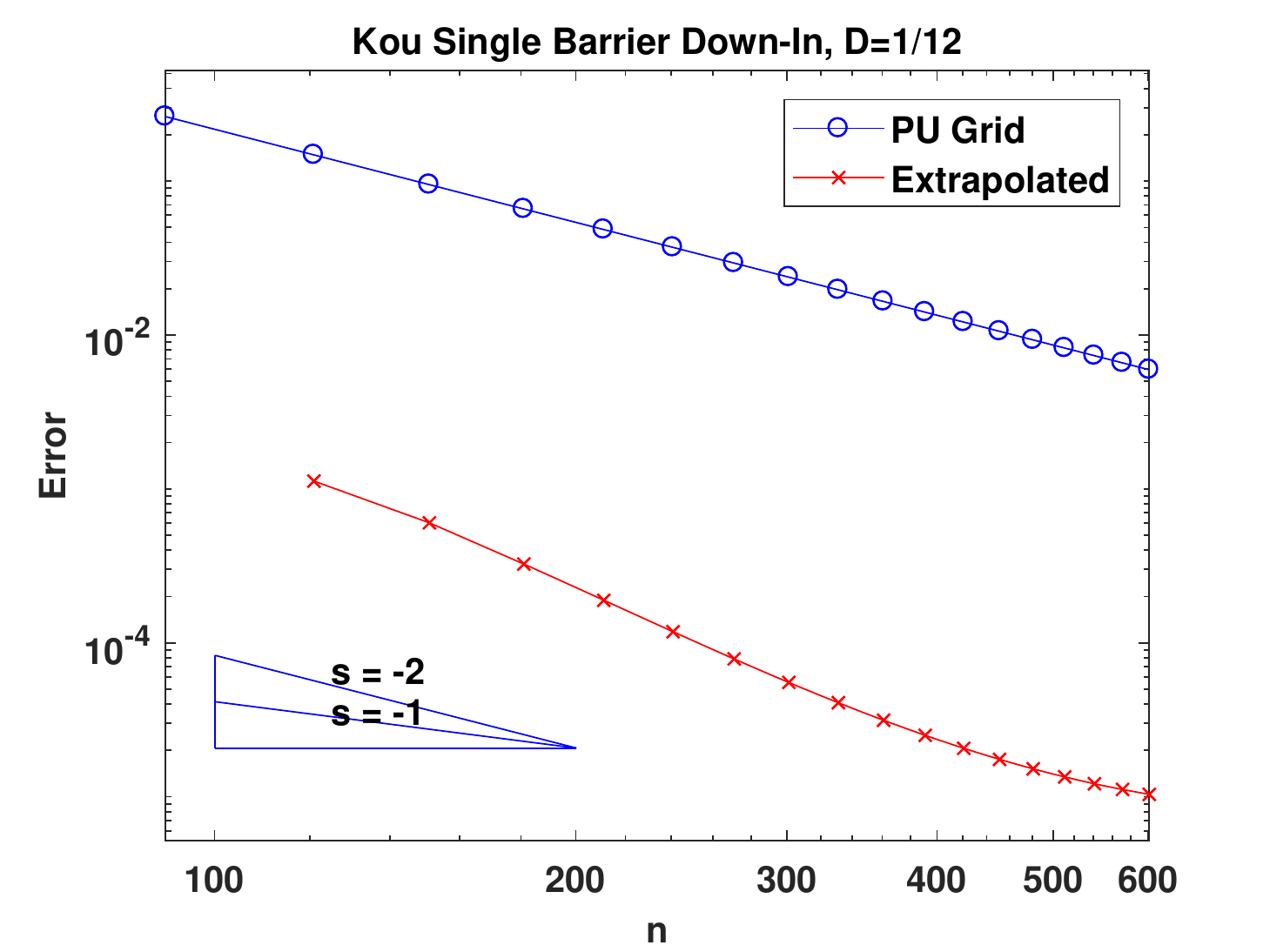}
	\includegraphics[width = 0.45 \textwidth]{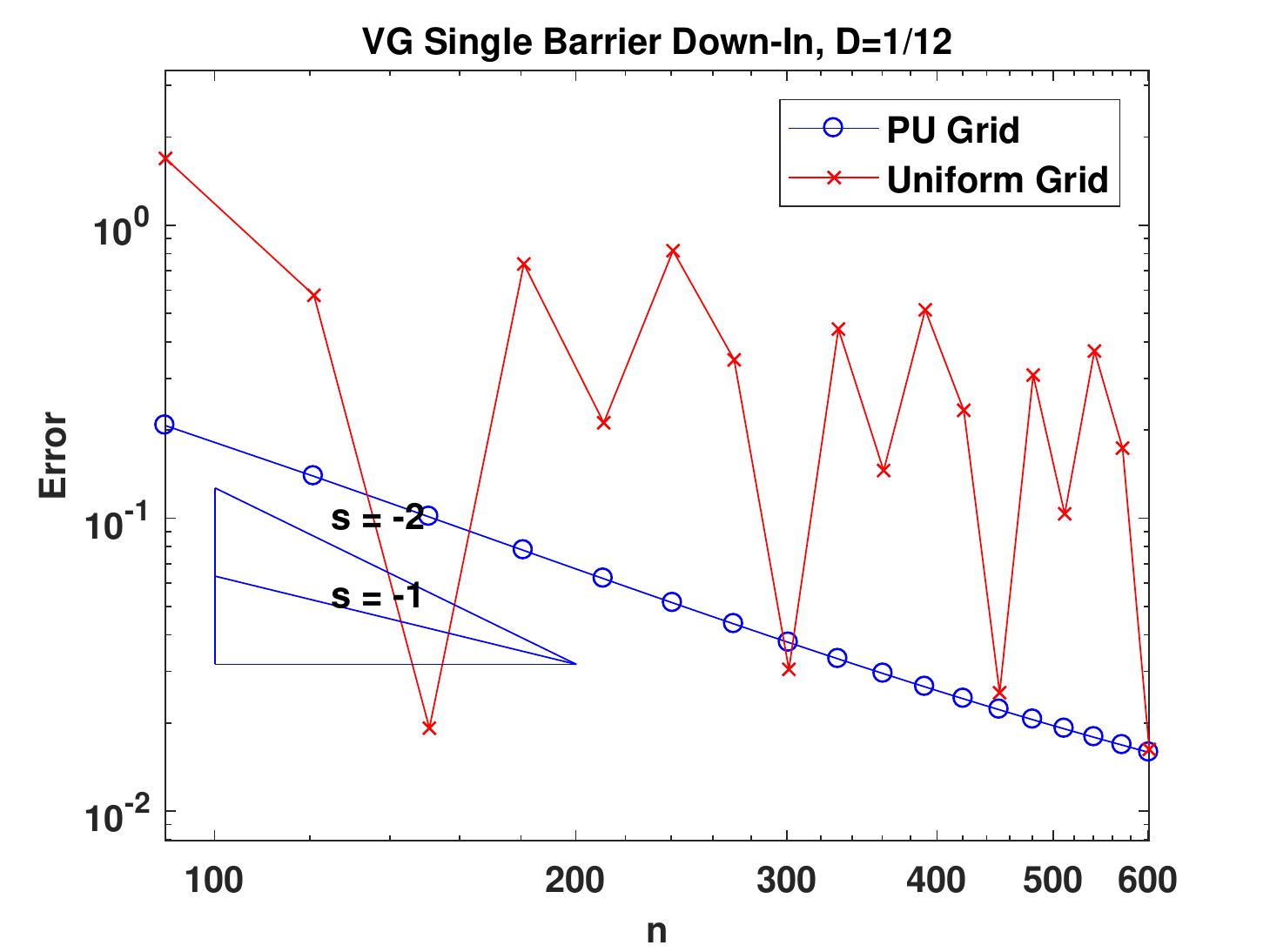}
	\includegraphics[width = 0.45 \textwidth]{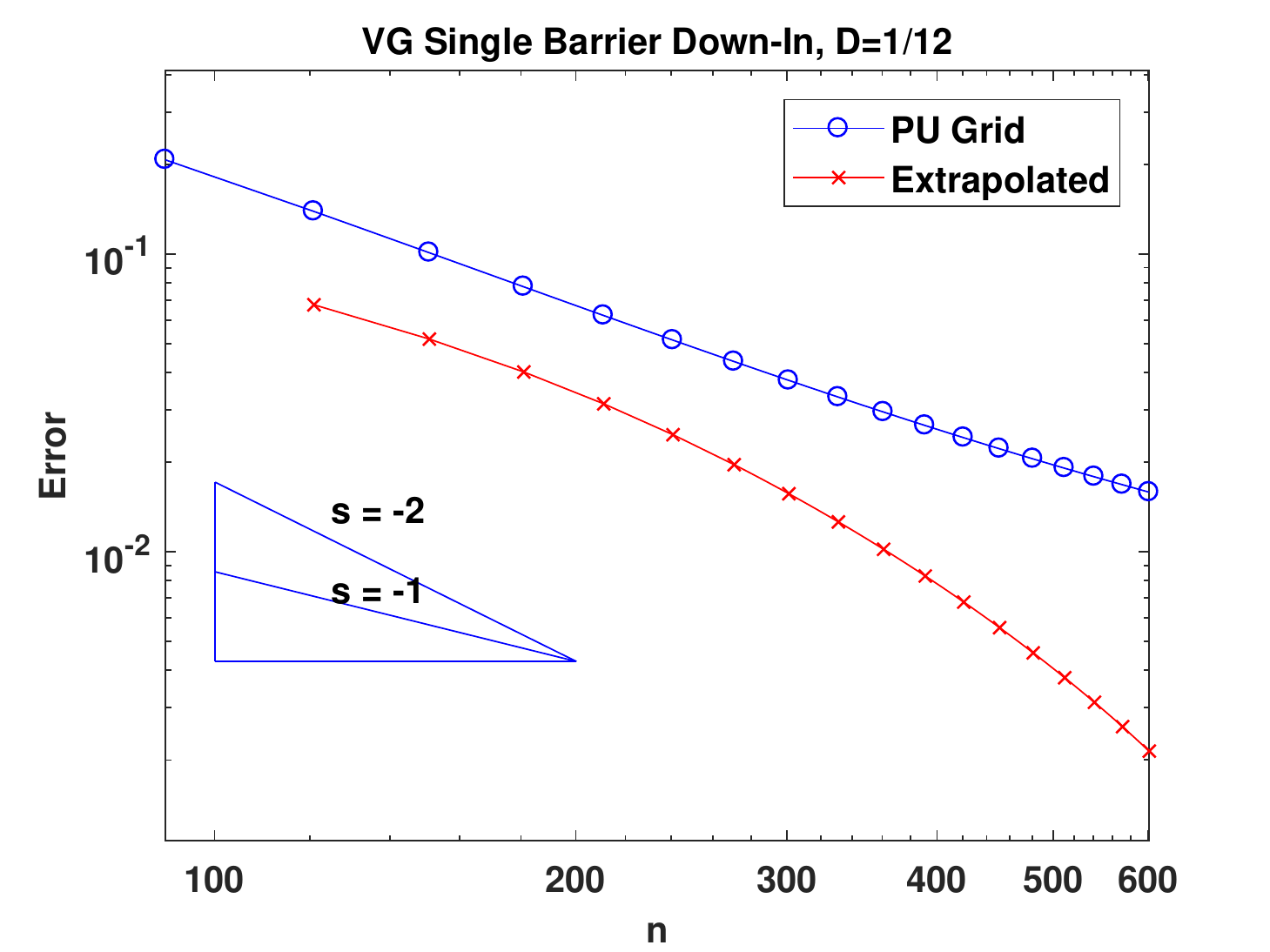}
	\caption{Convergence of CTMC approximation for a single-barrier down-and-in Parisian option under the BS model, Kou's model and the VG model. The benchmarks are obtained from CTMC approximation with very fine grids, and they match values from Monte Carlo simulation with a large number of replications. 
		%		{\color{red} The benchmarks in \cite{dassios2013parisian} does not seem to be accurate enough. And the CTMC is not very accurate when $D$ is small, be care of Laplace inversion.}
	}\label{fig:opt-pricing}
\end{figure}

\begin{figure}
	\centering
	\includegraphics[width=0.6\textwidth]{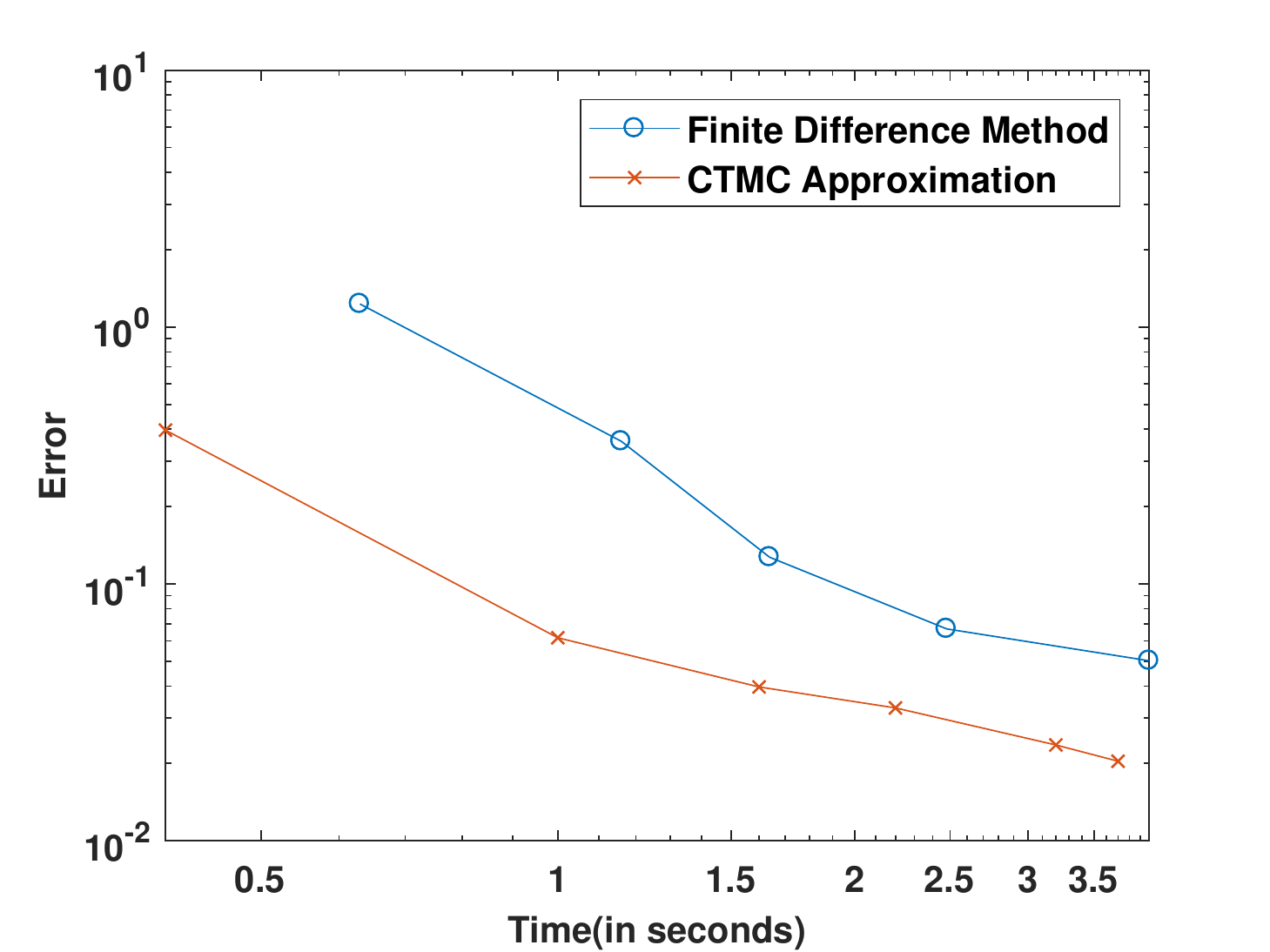}
	\caption{A comparison of CTMC approximation with finite difference method.}
	\label{fig:comparison-fdm}
\end{figure}

\begin{table}[htbp]
	\centering
	\begin{tabular}{ccccccccc}
		\hline
		$n$     & Exact & CTMC  & Error & Rel. Err. & Time/s & Extra. & Error & Rel. Err. \\
		\hline
		91    & 1.97866 & 1.58183 & 3.97E-01 & 20.06\% & 0.1   &       &       &  \\
		121   & 1.97866 & 1.74318 & 2.35E-01 & 11.90\% & 0.2   & 1.95326 & 2.54E-02 & 1.28\% \\
		151   & 1.97866 & 1.82432 & 1.54E-01 & 7.80\% & 0.2   & 1.96990 & 8.76E-03 & 0.44\% \\
		181   & 1.97866 & 1.87017 & 1.08E-01 & 5.48\% & 0.3   & 1.97513 & 3.53E-03 & 0.18\% \\
		211   & 1.97866 & 1.89839 & 8.03E-02 & 4.06\% & 0.3   & 1.97701 & 1.65E-03 & 0.08\% \\
		\hline
	\end{tabular}
	\caption{Errors for the original and extrapolated results for the down-and-in Parisian option under the BS model.}\label{tab:bs-down-in}
\end{table}

\begin{table}[htbp]
	\centering
	\begin{tabular}{ccccccccc}
		\hline
		$n$     & Exact & CTMC  & Error & Rel. Err. & Time/s & Extra. & Error & Rel. Err. \\
		\hline
		91    & 4.55552 & 4.29302 & 2.63E-01 & 5.76\% & 0.4   &       &       &  \\
		121   & 4.55552 & 4.40754 & 1.48E-01 & 3.25\% & 0.5   & 4.55665 & 1.13E-03 & 0.02\% \\
		151   & 4.55552 & 4.46071 & 9.48E-02 & 2.08\% & 0.6   & 4.55611 & 5.90E-04 & 0.01\% \\
		181   & 4.55552 & 4.48963 & 6.59E-02 & 1.45\% & 0.8   & 4.55584 & 3.15E-04 & 0.01\% \\
		211   & 4.55552 & 4.50709 & 4.84E-02 & 1.06\% & 1.1   & 4.55573 & 2.10E-04 & 0.00\% \\
		\hline
	\end{tabular}
	\caption{Errors for the original and extrapolated results for the down-and-in Parisian option under Kou's model.}\label{tab:kou-down-in}
\end{table}

\begin{table}[htbp]
	\centering
	\begin{tabular}{ccccccccc}
		\hline
		$n$     & Exact & CTMC  & Error & Rel. Err. & Time/s & Extra. & Error & Rel. Err. \\
		\hline
		391   & 1.05872 & 1.03818 & 2.05E-02 & 1.94\% & 11.5  &       &       &  \\
		421   & 1.05872 & 1.03961 & 1.91E-02 & 1.81\% & 13.8  & 1.05825 & 4.72E-04 & 0.04\% \\
		451   & 1.05872 & 1.04084 & 1.79E-02 & 1.69\% & 15.9  & 1.05810 & 6.19E-04 & 0.06\% \\
		481   & 1.05872 & 1.04192 & 1.68E-02 & 1.59\% & 17.3  & 1.05816 & 5.64E-04 & 0.05\% \\
		511   & 1.05872 & 1.04286 & 1.59E-02 & 1.50\% & 19.6  & 1.05793 & 7.89E-04 & 0.07\% \\
		\hline
	\end{tabular}%
	\caption{Errors for the original and extrapolated results for the down-and-in Parisian option under the VG model.}
	\label{tab:cgmy-down-in}%
\end{table}%

\begin{table}[htbp]
	\centering
	\begin{tabular}{ccccccccc}
		\hline
		$n$     & Exact & CTMC  & Error & Rel. Err. & Time/s & Extra. & Error & Rel. Err. \\
		\hline
		91    & 4.30229 & 3.98449 & 3.18E-01 & 7.39\% & 2.0     &       &       &  \\
		121   & 4.30229 & 4.12057 & 1.82E-01 & 4.22\% & 3.4   & 4.29775 & 4.54E-03 & 0.11\% \\
		151   & 4.30229 & 4.18514 & 1.17E-01 & 2.72\% & 6.5   & 4.30099 & 1.30E-03 & 0.03\% \\
		181   & 4.30229 & 4.22062 & 8.17E-02 & 1.90\% & 9.9   & 4.30184 & 4.51E-04 & 0.01\% \\
		211   & 4.30229 & 4.24215 & 6.01E-02 & 1.40\% & 13.8  & 4.30213 & 1.66E-04 & 0.00\% \\
		\hline
	\end{tabular}%
	\caption{Errors for the original and extrapolated results for the down-and-in Parisian option under the regime-switching BS model. The benchmark is obtained from CTMC approximation with very fine grids, and it matches the result of Monte Carlo simulation with a large number of replications. }
	\label{tab:rs-bs-down-in}%
\end{table}%

%\begin{table}[htbp!]
%	\centering
%	\begin{tabular}{ccccccc}
%		\hline
%		\multicolumn{1}{l}{No.Steps ($\times 10^3$)} & \multicolumn{1}{l}{Exact} & \multicolumn{1}{l}{MC} & \multicolumn{1}{l}{Std.err} & 
%	\multicolumn{1}{l}{Abs.err.} & \multicolumn{1}{l}{Rel.err.} & \multicolumn{1}{l}{Time (in minutes)} \\
%		\hline
%		1     & 1.97866 & 2.10917 & 6.21E-03 & 1.31E-01 & 6.60\% & 0.48 \\
%		5     & 1.97866 & 2.05107 & 6.11E-03 & 7.24E-02 & 3.66\% & 2.17 \\
%		10    & 1.97866 & 2.02589 & 6.07E-03 & 4.72E-02 & 2.39\% & 3.90 \\
%		50    & 1.97866 & 2.00790 & 6.06E-03 & 2.92E-02 & 1.48\% & 19.78 \\
%		100   & 1.97866 & 1.99279 & 6.03E-03 & 1.41E-02 & 0.71\% & 41.40 \\
%		\hline
%	\end{tabular}%
%	\caption{Monte Carlo simulation for the down-and-in Parisian option under the BS model. The number of replications is 1 million. %}\label{tab:bs-monte-carlo}%
%\end{table}%

\section{Conclusions}\label{sec:conclusions}

Our approach based on CTMC approximation is an efficient, accurate and flexible way to deal with Parisian stopping times and the related option pricing problems. It is applicable to general one-dimensional Markovian models and its convergence can be proved in a general setting. Moreover, extensions to regime-switching and stochastic volatility models as well as more complicated problems involving Parisian stopping times can be readily developed. We are also able to derive sharp estimate of the convergence rate for diffusion models. In general, convergence of CTMC approximation is first order, but we can improve it to second order with a proper grid design which places the Parisian barrier on the grid and all the discontinuities of the payoff function midway between two adjacent grid points. Such a grid can be easily constructed using a piecewise uniform structure. Our grid design also attains second order convergence in jump models as shown by the numerical examples. Furthermore, it ensures smooth convergence for both diffusion and jump models, which enables us to apply Richardson's extrapolation to achieve significant reduction in the error. In future research, it would be interesting to develop effective hedging strategies for Parisian options based on our method. 
 
\section*{Acknowledgements}
The research of Lingfei Li was supported by Hong Kong Research Grant Council General Research Fund Grant 14202117. 
The research of Gongqiu Zhang was supported by National Natural Science Foundation of China Grant 11801423 and Shenzhen Basic Research Program Project JCYJ20190813165407555.

\appendix

\section{Several Extensions}\label{sec:extensions}

In this section, we generalize our method for some more complicated situations to demonstrate the versality of our approach. 

\subsection{Multi-Sided Parisian Options}

Define a Parisian stopping time for an arbitrary set as
\begin{align}
	\tau_A^D = \inf\{ t \ge 0 : 1_{\{ Y_t \in A \}} (t - g_{A, t}) \ge D \},\ g_{A, t} = \sup\{s\le t : Y_s \notin A\}.
\end{align}
If $A = (-\infty, L)$, then $\tau_A^D = \tau_{L, D}^-$ and if $A = (U, \infty)$, $\tau_A^D = \tau_{U, D}^+$. The multi-sided Parisian stopping time can be defined as
\begin{align}
	\tau_\mathcal{A}^D = \min\{ \tau_A^D: A \in \mathcal{A} \},
\end{align}
where $\mathcal{A}$ is a collection of subsets of $\mathbb{R}$. The double-barrier case corresponds to $\mathcal{A} = \{ (-\infty, L), (U, \infty) \}$ with $L<U$. We also let
\begin{align}
	T^-_A = \inf\{ t \ge 0: Y_t \notin A \},\ T^+_A = \inf\{ t \ge 0: Y_t \in A \}.
\end{align}

Consider a finite state CTMC $Y$. For any $x, y$ in its state space $\mathbb{S}$, let $h(q, x; y) = \mathbb{E}_x \left[ e^{-q \tau_{\mathcal{A}}^D} 1_{\{ Y_{\tau_{\mathcal{A}}^D} = y \}} \right]$. Define $B=\cup_{A\in\mathcal{A}}A$. 
Then, by the conditioning argument, we obtain
\begin{align}
h(q, x; y) &= \sum_{A \in \mathcal{A}} e^{-qD} \mathbb{E}_x \left[ 1_{\{ T_A^- \ge D, Y_D = y \}} \right] \times 1_{\{ x \in A \}} \\
	&\quad + \sum_{A \in \mathcal{A}} \sum_{z \notin A} \mathbb{E}_x\left[ e^{-q T_A^-} 1_{\{ T_A^- < D, Y_{T_A^-} = z \}} \right] h(q, z; y)  \times 1_{\{ x \in A \}} \\
	&\quad + \sum_{z \in B}\mathbb{E}_x \left[ e^{-q T_{B}^+} 1_{\{ Y_{T_{B}^+} = z \}} \right] h(q, z; y) \times 1_{\{ x \notin B \}}.
\end{align}
Let $v_A(D, x; y) = \mathbb{E}_x \left[ 1_{\{ T_A^- \ge D, Y_D = y \}} \right]$. It solves
\begin{align}
	\begin{cases}
		\frac{\partial v_A}{\partial D}(D, x; y) = {\pmb G} v_A(D, x; y) ,\ x \in A, D > 0\\
		v_A(D, x; y) = 0,\ x \notin A, D > 0,\\
		v_A(0, x; y) = 1_{\{ x = y \}}.
	\end{cases}
\end{align}
Let $u^-_A(q, D, x; z) = \mathbb{E}_x\left[ e^{-q T_A^-} 1_{\{ T_A^- < D, Y_{T_A^-} = z \}} \right]$, then it satisfies,
\begin{align}
	\begin{cases}
		\frac{\partial u^-_A}{\partial D}(q, D, x; z) = {\pmb G} u^-_A(q, D, x; z) - q u^-_A(q, D, x; z),\ x \in A, D > 0,\\
		u^-_A(q, D, x; z) = 1_{\{ x = z \}},\ x \notin A, D > 0,\\
		u^-_A(q, 0, x; z) = 0.
	\end{cases}
\end{align}
We can rewrite it as $u^-_A(q, D, x; z) = u^-_{1,A}(q, x; z) - u^-_{2,A}(q, D, x; z)$, and these two parts satisfy
\begin{align}
	\begin{cases}
		{\pmb G} u^-_{1,A}(q, x; z) - q u^-_{1,A}(q, x; z) = 0,\ x \in A,\\
		u^-_{1,A}(q, x; z) = 1_{\{ x = z \}},\ x \notin A,
	\end{cases}
\end{align}
and
\begin{align}
	\begin{cases}
		\frac{\partial u^-_{2,A}}{\partial{D}}(q, D, x; z) = {\pmb G} u^-_{2,A}(q, D, x; z) - q u^-_{2,A}(q, D, x; z),\ x \in A,\\
		u^-_{2,A}(q, D, x; z) = 0,\ x \notin A, D > 0,\\
		u^-_{2,A}(q, 0, x; z) = u^-_{1,A}(q, x; z).
	\end{cases}
\end{align}
Let $u^+(q, x; z) = \mathbb{E}_x \left[ e^{-q T_{B}^+} 1_{\{ Y_{T_{B}^+} = z \}} \right]$. It is the solution to
\begin{align}
	\begin{cases}
		{\pmb G} u^+(q, x; z) - q u^+(q, x; z) = 0,\ x \notin B,\\
		u^+(q, x; z) = 1_{\{ x = z \}},\ x \in B.
	\end{cases}
\end{align}
Let ${\pmb I}_A = \diag((1_{\{x \in A\}})_{x \in \mathbb{S}})$. The solutions to the above equations are given as follows:
\begin{align}
	& {\pmb V}_A = \left( v_A(D, x; y) \right)_{x,y \in \mathbb{S}}= \exp\left( {\pmb I}_A {\pmb G}D \right) {\pmb I}_A,\\
	& {\pmb U}_{1,A}^-(q) = \left( u^-_{1,A}(q, x; z) \right)_{x, z\in \mathbb{S}} = \left( {\pmb I} - {\pmb I}_A - {\pmb I}_A ({\pmb G} - q{\pmb I} ) \right)^{-1} ({\pmb I} - {\pmb I}_A),\\
	& {\pmb U}_{2,A}^-(q) = \left( u^-_{2,A}(q,D, x; z) \right)_{x, z\in \mathbb{S}} = \exp\left( {\pmb I}_A {\pmb G}D \right) {\pmb I}_A {\pmb U}_{1,A}^-(q),\\
	& {\pmb U}^-_A(q) = \left( u^-_{A}(q,D, x; z) \right)_{x, z\in \mathbb{S}} = {\pmb U}_{1,A}^-(q) - {\pmb U}_{2,A}^-(q),\\
	& {\pmb U}^+(q) = \left( u^+(q, x; z) \right)_{x, z\in \mathbb{S}} = \left( {\pmb I}_B - ({\pmb I} - {\pmb I}_{B}) ({\pmb G } - q{\pmb I}) \right)^{-1} {\pmb I}_{B}.
\end{align}
Then, we obtain
\begin{align}
	{\pmb H}(q) = \left( h(q, x; y) \right)_{x, y \in \mathbb{S}} = e^{-qD} \left( {\pmb I} - 
	\sum_{A \in \mathcal{A}} {\pmb I}_A {\pmb U}^-_A(q) - ({\pmb I} - {\pmb I}_{B}) {\pmb U}^+(q) \right)^{-1}  \sum_{A \in \mathcal{A}} {\pmb I}_A {\pmb V}_A.
\end{align}

Now consider a multi-sided Parisian option with price $u(t, x) = \mathbb{E}_x\left[ 1_{\{ \tau_\mathcal{A}^D \le t \}} f(Y_t) \right]$. Using the arguments in the single-sided case, we can derive that the Laplace transform $\tilde{u}(q, x) = \int_{0}^{\infty} e^{-qt} u(t, x) dt$ is given by
\begin{align}
	\widetilde{\pmb u}(q) = \left( \tilde{u}(q, x) \right)_{x \in \mathbb{S}} = {\pmb H}(q) (q{\pmb I} - {\pmb G})^{-1}{\pmb f}.
\end{align}

\subsection{Mixed Barrier and Parisian Options}\label{subsec:other-types}

CTMC approximation can also be generalized to derive the joint distribution of Parisian stopping times and first passage times. For example, \cite{dassios2016joint} introduce the so-called MinParisianHit option which is triggered either when the age of an excursion above $L$ reaches time $D$ or a barrier $B > L$ is crossed by the underlying asset price process $S_t$. The MinParisianHit option price can be approximated as
\begin{align}
	\text{minPHC}_i^u(t, x; L, D, B) = e^{-r_ft} \mathbb{E}_x \left[ 1_{\{ \tau_{L, D}^+ \wedge \tau_B^+ \le t \}} f(Y_t) \right].
\end{align}
where $Y_t$ is a CTMC with state space $\mathbb{S}$ and transition rate matrix ${\pmb G}$ to approximate the underlying price process. To price the option, it suffices to substitute $\tau_{L, D}^-$ by $\tau_{L, D}^+ \wedge \tau_B^+$ in the proof of Theorem \ref{thm:opt-price-lt} and find the Laplace transform of $\tau_{L, D}^+ \wedge \tau_B^+$ under the model $Y_t$. Let $h_1(q, x; y) = \mathbb{E}_x[e^{-q\tau_{L, D}^+ \wedge \tau_B^+} 1_{\{ \tau_{L, D}^+ \wedge \tau_B^+ = y \}}]$. Using the conditioning argument, we can show that $h_1(q, x, y)$ satisfies a linear system
\begin{align}
	h_1(q, x; y) &= 1_{\{ x\ge B, x = y \}}  + 1_{\{ x \le L \}} \sum_{z > L} \mathbb{E}_{x}\left[ e^{-q \overline{\tau}_L^+} 1_{\{ Y_{\overline{\tau}_L^+} = z  \}} \right] h_1(q, z; y) \label{eq:h1-ls-1} \\
	&\quad + 1_{\{ L < x < B \}} \mathbb{E}_x\left[ e^{-q\tau_B^+} 1_{\{ \tau_B^+ < D, Y_{\tau_B^+} = y \}} \right] \label{eq:h1-ls-2} \\
	&\quad + 1_{\{ L < x < B \}} \sum_{z \le L} \mathbb{E}_x\left[ e^{-q\overline{\tau}_L^-} 1_{\{ \overline{\tau}_L^- < D \le \tau_B^+,  Y_{\overline{\tau}_L^-} = z \}} \right] h_1(q, z; y) \label{eq:h1-ls-3} \\
	&\quad + 1_{\{ L < x < B \}} e^{-qD} \mathbb{E}_x\left[ 1_{\{ \overline{\tau}_L^- \wedge \tau_B^+ \ge D, Y_{D} = y \}} \right], \label{eq:h1-ls-4}
\end{align}
where $\overline{\tau}_L^+ = \inf\{ t \ge 0: Y_t > L \}$ and $\overline{\tau}_L^- = \inf\{ t \ge 0: Y_t \le L \}$ (they are slightly different from $\tau_L^+$ and $\tau_L^-$ defined in Section \ref{sec:Parisian-CTMC}). 

We next analyze each term. For \eqref{eq:h1-ls-1}, let $\overline{u}(q, x; z) = \mathbb{E}_{x}\left[ e^{-q \overline{\tau}_L^+} 1_{\{ Y_{\overline{\tau}_L^+} = z  \}} \right]$. It satisfies
\begin{align}
	\begin{cases}
		{\pmb G} \overline{u}(q, x; z) - q \overline{u}(q, x; z) = 0,\ x \in (-\infty, L] \cap \mathbb{S},\\
		\overline{u}(q, x; z) = 1_{\{ x = z \}},\ x \in (L, \infty) \cap \mathbb{S},\\
	\end{cases}
\end{align}
For \eqref{eq:h1-ls-2}, let $v(q, D, x; y) = \mathbb{E}_x\left[ e^{-q\tau_B^+} 1_{\{ \tau_B^+ < D, Y_{\tau_B^+} = y \}} \right]$. It is the solution to
\begin{align}
	\begin{cases}
		\frac{\partial v}{\partial D}(q, D, x; y) = {\pmb G} v(q, D, x; y) - q v(q, D, x; y),\ D >0,\ x \in (-\infty, B) \cap \mathbb{S},\\
		v(q, D, x; y) = 1_{\{ x = y \}},\ D>0,\ x \in [B, \infty) \cap \mathbb{S},\\
		v(q, 0, x; y) = 0,\ x \in \mathbb{S}.
	\end{cases}
\end{align}
$v(q, D, x; y)$ can be split as $v_1(q, x; y) - v_2(q, D, x; y)$ with $v_1(q, x ; y)$ satisfying
\begin{align}
	\begin{cases}
		{\pmb G} v_1(q,x; y) - qv_1(q, x; y) = 0,\ x \in (-\infty, B) \cap \mathbb{S},\\
		v_1(q, x; y) = 1_{\{ x = y \}},\ x \in [B, \infty) \cap \mathbb{S},\\
	\end{cases}
\end{align}
and $v_2(q, D, x; y)$ satisfying
\begin{align}
	\begin{cases}
		\frac{\partial v_2}{\partial D}(q, D, x; y) = {\pmb G} v_2(q, D, x; y) - q v_2(q, D, x; y),\ D >0,\ x \in (-\infty, B) \cap \mathbb{S},\\
		v_2(q, D, x; y) = 0,\ D>0,\ x \in [B, \infty) \cap \mathbb{S},\\
		v_2(q, 0, x; y) = v_1(q, x; y),\ x \in \mathbb{S}.
	\end{cases}
\end{align}
For \eqref{eq:h1-ls-3}, let $u(q, D, x; z) = \mathbb{E}_x\left[ e^{-q\overline{\tau}_L^-} 1_{\{ \overline{\tau}_L^- < D \le \tau_B^+,  Y_{\overline{\tau}_L^-} = z \}} \right]$. It solves
\begin{align}
	\begin{cases}
		\frac{\partial u}{\partial D}(q, D, x; z) = {\pmb G} u(q, D, x; z) - q u(q, D, x; z),\ D > 0,\ x \in (L, B) \cap \mathbb{S},\\
		u(q, D, x; z) = 1_{\{ x = z \}} \mathbb{E}_x \left[ 1_{\{ \tau_B^+ \ge D \}} \right],\ D>0,\ x \in (-\infty, L] \cap \mathbb{S},\\
		u(q, D, x; z) = 0,\ D > 0,\ x \in [B, \infty) \cap \mathbb{S},\\
		u(q, 0, x; z) = 0,\ x \in \mathbb{S}. 
	\end{cases}
\end{align}
$u(q, D, x; z)$ can be expressed as $u_1(q, x; z) - u_2(q, D, x; z)$ with $u_1(q, x; z)$ satisfying
\begin{align}
	\begin{cases}
		{\pmb G} u_1(q, x; z) - q u_1(q, x; z) = 0,\ x \in (L, B) \cap \mathbb{S},\\
		u_1(q, x; z) = 1_{\{ x = z \}} \mathbb{E}_x \left[ 1_{\{ \tau_B^+ \ge D \}} \right],\ x \in (-\infty, L] \cap \mathbb{S},\\
		u_1(q, x; z) = 0,\ \ x \in [B, \infty) \cap \mathbb{S},
	\end{cases}
\end{align}
and $u_2(q, D, x; z)$ satisfying
\begin{align}
	\begin{cases}
		\frac{\partial u_2}{\partial D}(q, D, x; z) = {\pmb G} u_2(q, D, x; z) - q u_2(q, D, x; z),\ D > 0,\ x \in (L, B) \cap \mathbb{S},\\
		u_2(q, D, x; z) = 0,\ D>0,\ x \in (-\infty, L] \cap \mathbb{S},\\
		u_2(q, D, x; z) = 0,\ D > 0,\ x \in [B, \infty) \cap \mathbb{S},\\
		u_2(q, 0, x; z) = u_1(q, x; z),\ x \in \mathbb{S}. 
	\end{cases}
\end{align}
For \eqref{eq:h1-ls-4}, let $w(D, x; y)  = \mathbb{E}_x\left[ 1_{\{ \overline{\tau}_L^- \wedge \tau_B^+ \ge D, Y_{D} = y \}} \right]$. It solves
\begin{align}
	\begin{cases}
		\frac{\partial w}{\partial D}(D, x; y) = {\pmb G} w(D, x; y),\ D>0,\ x \in (L, B) \cap \mathbb{S},\\
		w(D, x; y) = 0,\ D>0,\ x \in \mathbb{S} \backslash (L, B),\\
		w(0, x; y) = 1_{\{ x = y \}},\ x \in \mathbb{S}.
	\end{cases}
\end{align}
Let  $\overline{\pmb I}_L^+ = \operatorname{diag}\left((1_{\{ x > L \}})_{x \in \mathbb{S}}\right)$, $\overline{\pmb I}_L^- = \operatorname{diag}\left((1_{\{ x \le L \}})_{x \in \mathbb{S}}\right)$, ${\pmb I}_{L, B} = \overline{\pmb I}_L^+ {\pmb I}_B^-$ (they are slightly different from ${\pmb I}_L^+$ and ${\pmb I}_L^-$ defined in Section \ref{sec:Parisian-CTMC}). The solutions to the above equations are given by 
\begin{align}
	&\overline{\pmb U}(q) = \left( \overline{u}(q, x; z) \right)_{x, z \in \mathbb{S}} = \left( q\overline{\pmb I}_L^- - \overline{\pmb I}_L^-{\pmb G} + \overline{\pmb I}_L^+\right)^{-1} \overline{\pmb I}_L^+,\\
	&{\pmb V}_1(q)  = \left( v_1(q, x; y) \right)_{x, y \in \mathbb{S}}= \left( q {\pmb I}_B^- - {\pmb I}_B^- {\pmb G} + {\pmb I}_B^+ \right)^{-1} {\pmb I}_B^+,\\
	&{\pmb V}_2(q)  = \left( v_2(q, D, x; y) \right)_{x, y \in \mathbb{S}} = \exp\left( {\pmb I}_B^- {\pmb G} - q{\pmb I}_B^- \right) {\pmb I}_B^- {\pmb V}_1(q),\\
	&{\pmb U}_1(q) = \left( u_1(q, x; z) \right)_{x, z \in \mathbb{S}} = \left( q{\pmb I}_{L, B} - {\pmb I}_{L, B} {\pmb G} + {\pmb I} - {\pmb I}_{L, B} \right)^{-1} \overline{\pmb I}_L^- \operatorname{diag} \left(\exp\left( {\pmb I}_B^- {\pmb G} D \right) {\pmb e}_B^-\right) ,\\
	&{\pmb U}_2(q) = \left( u_2(q, D, x; z) \right)_{x, z \in \mathbb{S}} = \exp\left( {\pmb I}_{L, B} {\pmb G} - q {\pmb I}_{L, B} \right) {\pmb I}_{L, B} {\pmb U}_1(q),\\
	&{\pmb W} = \left( w(D, x; y) \right)_{x, y \in \mathbb{S}} = \exp\left( {\pmb I}_{L, B} {\pmb G} D \right) {\pmb I}_{L, B},
\end{align}
where ${\pmb e}_B^- = (1_{\{ x < B \}})_{x \in \mathbb{S}}$. Let ${\pmb H}_1(q) = \left( h_1(q,x; y) \right)_{x, y \in \mathbb{S}}$, which satisfies
\begin{align}
	{\pmb H}_1(q) = {\pmb I}_B^+ + {\pmb I}_{L, B} \left({\pmb V}_1(q) - {\pmb V}_2(q) \right) + {\pmb I}_{L, B} \left({\pmb U}_1(q) - {\pmb U}_2(q) \right) {\pmb H}_1(q) + \overline{\pmb I}_L^- \overline{\pmb U}(q) {\pmb H}_1(q) + e^{-qD} {\pmb I}_{L, B} {\pmb W}.
\end{align}
The solution is
\begin{align}
	{\pmb H}_1(q) = \left( {\pmb I} - {\pmb U}(q) \right)^{-1} {\pmb V}(q),
\end{align}
where ${\pmb U}(q) =  {\pmb I}_{L, B} \left({\pmb U}_1(q) - {\pmb U}_2(q) \right) + \overline{\pmb I}_L^- \overline{\pmb U}(q)$ and ${\pmb V}(q) = {\pmb I}_B^+ + {\pmb I}_{L, B} \left({\pmb V}_1(q) - {\pmb V}_2(q) \right) + e^{-qD} {\pmb I}_{L, B} {\pmb W}$. Consider the Laplace transform
\begin{align}
	\widetilde{u}_i(q, x) = \int_{0}^{\infty} e^{-qt} \text{minPHC}_i^u(t, x; L, D, B) dt,\ \Re(q) > 0. 
\end{align}
It can be obtained as
\begin{align}
	\widetilde{\pmb u}_i(q) = \left( \widetilde{u}_i(q, x) \right)_{x \in \mathbb{S}} = {\pmb H}_1(q + r_f) \left( (q + r_f) {\pmb I} - {\pmb G} \right)^{-1} {\pmb f}.
\end{align}

\subsection{Pricing Parisian Bonds}

Recently, \cite{Dassios2020} propose a Parisian type of bonds, whose payoff depends on whether the excursion of the interest rate above some level $L$ exceeds a given duration before maturity, i.e., $h(R_\tau) 1_{\{\tau < T \}}$, where $T$ is the bond maturity, $R$ is the short rate process, $h(\cdot)$ is the payoff function, and
\begin{align}
	\tau = \inf\{ t > 0: U_t = D \},\ U_t = t - \sup\{ s < t: R_s \le L \}.
\end{align}
Then its price can be written as,
\begin{align}
	P(T, x) = \mathbb{E}_x\left[ e^{-\int_{0}^{\tau} R_t dt} f(R_{\tau}) 1_{\{ \tau < T \}} \right],
\end{align}
where $\mathbb{E}_x[\cdot] = \mathbb{E}[\cdot|R_0 = x]$.
We can calculate its Laplace transform w.r.t. $T$ as,
\begin{align}
	\widetilde{P}(q, x) = \int_{0}^{\infty} e^{-qT} P(T, x) dT = \frac{1}{q} \mathbb{E}_x\left[ e^{-\int_{0}^{\tau} (q + R_t) dt} f(R_{\tau})  \right].
\end{align}
Suppose $R$ is a CTMC with state space $\mathbb{S}_R$ to approximate the original short rate model (e.g., the CIR model considered in \cite{Dassios2020}). Let $h(q, x) = \mathbb{E}_x\left[ e^{-\int_{0}^{\tau} (q + R_t) dt} f(R_{\tau})  \right]$ and $\overline{\tau}_L^+$ and $\overline{\tau}_L^-$ are defined as in Section \ref{subsec:other-types}. Then, using the conditioning argument, we obtain
\begin{align}
	h(q, x) &= \mathbb{E}_x\left[ e^{-\int_{0}^{D} (q + R_t) dt} f(R_{D}) 1_{\{ \overline{\tau}_L^- \ge D \}} \right] 1_{\{ x > L \}} \\
	&\quad + \sum_{z \le L} \mathbb{E}_x\left[ e^{-\int_{0}^{\overline{\tau}_L^-} (q + R_t) dt} 1_{\{ \overline{\tau}_L^- < D,  R_{\overline{\tau}_L^-} = z \}} \right] 1_{\{ x > L \}} h(q, z) \\
	&\quad + \sum_{z > L} \mathbb{E}_x\left[ e^{-\int_{0}^{\overline{\tau}_L^+} (q + R_t) dt} 1_{\{ R_{\overline{\tau}_L^+} = z \}} \right] 1_{\{ x \le L \}} h(q, z).
\end{align}
Let ${\pmb G}$ be the generator matrix of $R$.  Consider 
$v(q, D, x) =  \mathbb{E}_x\left[ e^{-\int_{0}^{D} (q + R_t) dt} f(R_{D}) 1_{\{ \overline{\tau}_L^- \ge D \}} \right]$.  It satisfies
\begin{align}
	\begin{cases}
		\frac{\partial v}{\partial D}(q, D, x) = ({\pmb G} - {\pmb R}_q) v(q, D, x),\ D > 0, x \in (L, \infty) \cap \mathbb{S}_R,\\
		v(q, D, x) = 0,\ D > 0, x \in (-\infty, L] \cap \mathbb{S}_R, \\
		v(q, 0, x) = f(x),\ x \in \mathbb{S}_R,
	\end{cases}
\end{align}
where ${\pmb R}_q = \text{diag}\left( (q + x)_{x \in \mathbb{S}_R} \right)$.
Let $u^-(q, D, x; z) = \mathbb{E}_x\left[ e^{-\int_{0}^{\overline{\tau}_L^-} (q + R_t) dt} 1_{\{ \overline{\tau}_L^- < D,  R_{\overline{\tau}_L^-} = z \}} \right]$. It solves
\begin{align}
	\begin{cases}
		\frac{\partial u^-}{\partial D}(q, D, x; z) = ({\pmb G} - {\pmb R}_q) u^-(q, D, x; z),\ D > 0, x \in (L, \infty) \cap \mathbb{S}_R, \\
		u^-(q, D, x; z) = 1_{\{  x = z \}},\ D>0, x \in (-\infty, L] \cap \mathbb{S}_R,\\
		u^-(q, 0, x; z) = 0.  
	\end{cases}
\end{align}
We can decomposed $u^-(q, D, x; z)$ as $u^-(q, D, x; z) = u_1^-(q, x; z) - u_2^-(q, D, x; z)$ with them satisfying
\begin{align}
	\begin{cases}
		({\pmb G} - {\pmb R}_q) u_1^-(q, x; z) = 0,\ x \in (L, \infty) \cap \mathbb{S}_R,\\
		u_1^-(q, x; z) = 1_{\{  x = z \}},\ x \in (-\infty, L] \cap \mathbb{S}_R,
	\end{cases}
\end{align}
and
\begin{align}
	\begin{cases}
		\frac{\partial u_2^-}{\partial D}(q, D, x; z) = ({\pmb G} - {\pmb R}_q) u_2^-(q, D, x; z),\ D > 0, x \in (L, \infty) \cap \mathbb{S}_R, \\
		u_2^-(q, D, x; z) = 0,\ D>0, x \in (-\infty, L] \cap \mathbb{S}_R,\\
		u_2^-(q, 0, x; z) = u_1^-(q, x; z),\ x \in \mathbb{S}_R.
	\end{cases}
\end{align}
Let $u^+(q, x; z) = \mathbb{E}_x\left[ e^{-\int_{0}^{\overline{\tau}_L^+} (q + R_t) dt} 1_{\{ R_{\overline{\tau}_L^+} = z \}} \right]$. It satisfies
\begin{align}
	\begin{cases}
		({\pmb G} - {\pmb R}_q) u^+(q, x; z) = 0,\ x \in (-\infty, L] \cap \mathbb{S}_R,\\
		u^+(q, x; z) = 1_{\{ x = z \}},\ x \in (L, \infty) \cap \mathbb{S}_R.
	\end{cases}
\end{align}
Let ${\pmb h}(q) = \left( h(q, x) \right)_{x \in \mathbb{S}_R}$, ${\pmb v}(q) = \left( v(q, D, x) \right)_{x \in \mathbb{S}_R}$, ${\pmb U}^-(q) = \left( u^-(q, D, x; z) \right)_{x, z \in \mathbb{S}_R}$, ${\pmb U}_1^-(q) = \left( u_1^-(q, x; z) \right)_{x, z \in \mathbb{S}_R}$, ${\pmb U}_2^-(q) = \left( u_2^-(q, D, x; z) \right)_{x, z \in \mathbb{S}_R}$, ${\pmb U}^+(q) = \left( u^+(q, x; z) \right)_{x, z \in \mathbb{S}_R}$ and ${\pmb f} = (f(x))_{x \in \mathbb{S}_R}$. They can be calculated as follows:
\begin{align}
	& {\pmb v}(q) = \exp\left( \overline{\pmb I}^+_L ({\pmb G} - {\pmb R}_q)  D \right) \overline{\pmb I}^+_L {\pmb f}, \\
	& {\pmb U}_1^-(q) = \left( \overline{{\pmb I}}^-_L - \overline{\pmb I}^+_L ({\pmb G} - {\pmb R}_q) \right)^{-1} \overline{{\pmb I}}^-_L,\\
	& {\pmb U}_2^-(q) = \exp\left( \overline{\pmb I}^+_L ({\pmb G} - {\pmb R}_q)  D \right) \overline{\pmb I}^+_L {\pmb U}_1^-(q), \\
	& {\pmb U}^-(q) = {\pmb U}_1^-(q) - {\pmb U}_2^-(q), \\
	& {\pmb U}^+(q) = \left( \overline{\pmb I}^+_L - \overline{\pmb I}^-_L ({\pmb G} - {\pmb R}_q)  \right)^{-1}  \overline{\pmb I}^+_L,\\
	& {\pmb h}(q) = \left( {\pmb I} - \overline{\pmb I}^+_L {\pmb U}^-(q) \overline{\pmb I}^- - \overline{\pmb I}^-_L {\pmb U}^+(q) \overline{\pmb I}^+_L \right)^{-1} {\pmb v}(q).
\end{align}
We then calculate $\widetilde{P}(q, x)$ by dividing $h(q,x)$ by $q$ and obtain the bond price $P(T, x)$ by Laplace inversion.

\subsection{Regime-Switching and Stochastic Volatility Models}\label{app:RS-SV}

Suppose the asset price $S_t = \zeta(X_t, \widetilde{v}_t)$ for some function $\zeta(\cdot, \cdot)$ and a regime-switching process $X_t$. The regime process $\widetilde{v}_t \in \mathbb{S}_v = \{ v_1, v_2, \cdots, v_m \}$, and in each regime $X_t$ is a general jump-diffusion. We approximate the dynamics of $X_t$ in each regime by a CTMC $\widetilde{X}_t$ with state space $\mathbb{S}_X = \{ x_1, x_2, \cdots, x_n \}$. Hence, $S_t$ can be approximated by $\widetilde{S}_t = \zeta(\widetilde{X}_t, \widetilde{v}_t)$. The analysis of single-barrier Parisian stopping times for this type of models can be done similarly as in Section \ref{sec:Parisian-CTMC}. Let 
\begin{align}
	\widetilde{{\pmb x}} = ((x_1, v_1), (x_2, v_1), \cdots, (x_n, v_1), \cdots, (x_1, v_m), (x_2, v_m), \cdots, (x_n, v_m))^\top \in \mathbb{R}^{nm}.
\end{align}
Let $\widetilde{{\pmb G}}$ be the generator matrix of $(\widetilde{X}_t, \widetilde{v}_t)$ and it can be constructed as
\begin{align}
	\widetilde{\pmb G} = \operatorname{diag}(\pmb G_1, \pmb G_2, \cdots, \pmb G_m) + \Lambda \otimes \pmb I \in \mathbb{R}^{nm \times nm},
\end{align}
where $\pmb G_i$ is the generator matrix of $\widetilde{X}_t$ in regime $v_i$, $\Lambda$ is the generator matrix of $\widetilde{v}_t$, 
$\otimes$ stands for Kronecker product and $\pmb I$ is the identity matrix in $\mathbb{R}^{n \times n}$. Let 
\begin{align}
	{\pmb H}(q) = (h(q, x, v; y, u))_{x, y \in \mathbb{S}_X, u, v \in \mathbb{S}_v},
\end{align} 
with 
\begin{align}
	h(q, x, v; y, u) = \mathbb{E}_{x, v}\left[ e^{-q \widetilde{\tau}_{L, D}^-} 1_{\{ \widetilde{X}_{\widetilde{\tau}_{L, D}^-} = y, \widetilde{v}_{\widetilde{\tau}_{L, D}^-} = u \}} \right], 
\end{align}
where $\widetilde{\tau}_{L, D}^-  = \inf\{ t \ge 0: 1_{\{ \widetilde{S}_t < L \}} (t - \widetilde{g}^-_{L, t}) \ge D \}$ with $\widetilde{g}^-_{L, t} = \sup \{ s \le t: \widetilde{S}_s \ge L  \}$. We can solve the Parisian problem in the same way as for 1D CTMCs. Let
\begin{align}
	&\widetilde{{\pmb V}} = \exp\left( \widetilde{\pmb I}_L^- \widetilde{{\pmb G}} D \right) \widetilde{\pmb I}_L^-, \label{eq:v-matrix-sv}\\
	&\widetilde{\pmb U}^+_1(q) = \widetilde{\pmb I}_L^- (q\widetilde{\pmb I}_L^- - \widetilde{\pmb I}_L^- \widetilde{\pmb G} + \widetilde{\pmb I}_L^+ )^{-1} \widetilde{\pmb I}_L^+, \label{eq:u1-matrix-sv}\\
	&\widetilde{\pmb U}^+_2(q)  = \widetilde{\pmb I}_L^- \exp\left( \widetilde{\pmb I}_L^- \widetilde{\pmb G} D \right) \widetilde{\pmb I}_L^- \widetilde{\pmb U}_1(q), \label{eq:u2-matrix-sv} \\
	&\widetilde{\pmb U}^-(q)  = \widetilde{\pmb I}_L^+(q\widetilde{\pmb I}_L^+ - \widetilde{\pmb I}_L^+ \widetilde{\pmb G} + \widetilde{\pmb I}_L^- )^{-1} \widetilde{\pmb I}_L^-,\label{eq:um-matrix-sv} \\
	& \widetilde{\pmb U}(q) = \widetilde{\pmb U}_1^+(q) - \widetilde{\pmb U}_2^+(q) + \widetilde{\pmb U}^-(q), 
\end{align}
where $\widetilde{\pmb I}_L^+ = \diag(1_{\{ \zeta(\pmb x) \ge L  \}})$ and $\widetilde{\pmb I}_L^- = \diag(1_{\{ \zeta(\pmb x) < L  \}})$.
Then, we have
\begin{align}
	\widetilde{\pmb H}(q) = e^{-qD} \widetilde{\pmb V} + \widetilde{\pmb U}(q) \widetilde{\pmb H}(q),
\end{align}
and $\widetilde{\pmb H}(q)$ can be found by solving this linear system,
\begin{align}
	\widetilde{\pmb H}(q) = e^{-qD} (\widetilde{\pmb I} - \widetilde{\pmb U}(q))^{-1} \widetilde{\pmb V}. \label{eq:H-matrix-sv}
\end{align}
For option pricing, let $\widetilde{{\pmb u}}(q) = \left( \widetilde{u}(q, x, v) \right)_{x\in \mathbb{S}_X, v \in \mathbb{S}_v}$ with $\widetilde{u}(q, x, v) $ being the Laplace transform of option prices 
\begin{align}
	\widetilde{u}(q, x, v) = \int_{0}^{\infty} e^{-qt} \mathbb{E}_{x, v}\left[ f(\widetilde{X}_t, \widetilde{v}_t) 1_{\{ \widetilde{\tau}_{L, D}^- \le t \}} \right] dt.
\end{align} 
And then $\widetilde{{\pmb u}}(q)$ can be found as follows,
\begin{align}
	\widetilde{{\pmb u}}(q) = e^{-qD} (\widetilde{\pmb I} - \widetilde{\pmb U}(q))^{-1} \widetilde{\pmb V} \left(q \widetilde{\pmb I} - \widetilde{\pmb G} \right)^{-1} \widetilde{\pmb f} .
\end{align}
where $\widetilde{{\pmb f}} = {\pmb e}_m \otimes (f(x_1), f(x_2), \cdots, f(x_n))$ with ${\pmb e}_m$ being an all-one vector in $\mathbb{R}^m$.

\begin{remark}[Stochastic volatility models]
	\cite{cui2018general} show that general stochastic volatility models can be approximated by a regime-switching CTMC. Consider
	\begin{align}
		\left\{\begin{array}{l}{d S_{t}=\omega\left(S_{t}, v_{t}\right) d t+m\left(v_{t}\right) \Gamma\left(S_{t}\right) d W_{t}^{(1)}}, \\ {d v_{t}=\mu\left(v_{t}\right) d t+\sigma\left(v_{t}\right) d W_{t}^{(2)}},\end{array}\right.
	\end{align}
	where $[W^{(1)}, W^{(2)}]_t = \rho t$ with $\rho \in [-1, 1]$. As in \cite{cui2018general}, consider $X_t = g(S_t) - \rho f(v_t)$ with $g(x):=\int_{0}^{x} \frac{1}{\Gamma(u)} d u$ and $f(x):=\int_{0}^{x} \frac{m(u)}{\sigma(u)} d u$. It follows that
	\begin{align}
		dX_t =\theta(X_t, v_t) d t+\sqrt{1-\rho^{2}} m\left(v_{t}\right) d W_{t}^{*},
	\end{align}
	where $W^\ast$ is a standard Brownian motion independent of $W^{(2)}$ and,
	\begin{align}
		\theta(x, v) = \left(\frac{\omega\left(\zeta(x, v), v\right)}{\Gamma\left(\zeta(x, v)\right)}-\frac{\Gamma^{\prime}\left(\zeta(x, v)\right)}{2} m^{2}\left(v\right)-\rho h\left(v\right)\right),
	\end{align}
	with $\zeta(x, v) := g^{-1}(x + \rho f(v))$ and $h(x): =\mu(x) \frac{m(x)}{\sigma(x)}+\frac{1}{2}\left(\sigma(x) m^{\prime}(x)-\sigma^{\prime}(x) m(x)\right)$.
	Then we can employ a two-layer approximation to $(X_t, v_t)$: first construct a CTMC $\widetilde{v}_t$ with state space $\mathbb{S}_v = \{ v_1, \cdots, v_m \}$ and generator matrix $\Lambda \in \mathbb{R}^{m \times m}$ to approximate $v_t$, and then for each $v_l \in \mathbb{S}_v$, construct a CTMC with state space $\mathbb{S}_X = \{ x_1, \cdots, x_n \}$ and generator matrix $\mathcal{G}_v$ to approximate the dynamics of $X_t$ conditioning on $v_t=v_l$. Then, $(X_t, v_t)$ is approximated by a regime-switching CTMC $(\widetilde{X}_t, \widetilde{v}_t)$ where $\widetilde{X}_t$ transitions on $\mathbb{S}_X$ following $\mathcal{G}_v$ when $\widetilde{v}_t = v$ for each $v \in \mathbb{S}_v$, and $\widetilde{v}_t$ evolves according to its transition rate matrix $\Lambda$.
\end{remark}

\section{Proofs}\label{sec:proofs}

\begin{proof}[Proof of Lemma \ref{lmm:priors-eigenfunctions}]
	(1) This result can be found in Proposition 1 and Corollary 1 of \cite{zhang2018analysis}.
	
	(2) We first apply Liouville transform to the eigenvalue problem. Let $y = \int_{l}^{x} \frac{1}{\sigma(z)} dz$, $B = \int_{l}^{b} \frac{1}{\sigma(z)} dz$,  and $Q(y) = \frac{((m(x(y))/s(x(y)))^{1/4})''}{(m(x(y))/s(x(y)))^{1/4}}$ and $\mu_k^+(B) = \lambda^+_k(b)$. Then the eigenvalue problem is cast in the Liouville normal form as follows (\cite{fulton1994eigenvalue}, Eqs.(2.1-2.5))
	\begin{align}
		\begin{cases}
			-\partial_{yy} \phi_k^+(y, B) + Q(y)  \phi_k^+(y, B) = -\mu_k^+(B) \phi_k^+(y, B),\ y \in (0, B),\\
			\phi_k^+( 0, B) = \phi_k^+(B, B) = 0.
		\end{cases}
	\end{align}
	As shown in Eq.(2.13) of \cite{fulton1994eigenvalue},
	\begin{align}
		\left\| \varphi_k^+( \cdot, b) \right\|_2^2 = \int_{l}^{b} \varphi_k^+( x, b)^2 m(x) dx = \int_{0}^B \phi_k^+(y, B)^2 dy = \left\| \phi_k^+(\cdot, B) \right\|_2^2.
	\end{align}
	Hence the normalized eigenfunction can be recovered as 
	\begin{align}
		\varphi_k^+( x, b) = \left(\frac{s(x(y))}{m(x(y))}\right)^{1/4} \frac{\phi_k^+(y, B)}{\left\| \phi_k^+(\cdot, B) \right\|}.
	\end{align}
	Then we start to study the sensitivities of $\phi_k^+( y, B)$ and $\mu_k^+( B)$ w.r.t. $y$ and $B$ from which we can obtain those of $\varphi_k^+( x, b)$ and $\lambda_k^+( b)$ w.r.t. $x$ and $b$. Let $s_k(B) = \sqrt{\mu_k^+( B)}$. By \cite{fulton1994eigenvalue}, Eq.(3.7), we have
	\begin{align}
		\phi_k^+( y, B) = \frac{\sin(s_k(B) y) }{s_k(B) } + \frac{1}{s_k(B) } \int_{0}^{y}  \sin(s_k(B) (y - z)) Q(z) \phi_k^+( z, B) dz.   \label{eq:phi-equation} 
	\end{align}
	By Gronwall's inequality, $\phi_k^+( y, B) = \mathcal{O}(1/k)$. Taking differentiation on both sides yields
	\begin{align}
		\partial_y \phi_k^+( y, B) &= \sin(s_k(B) y)+\int_{0}^{y}  \cos(s_k(B) (y - z)) Q(z) \phi_k^+( z, B) dz, \label{eq:phi-y-equation} \\
		\partial_y^2 \phi_k^+( y, B) &= s_k(B) \cos(s_k(B) y) + Q(y) \phi_k^+( y, B) \\
		&\quad -s_k(B) \int_{0}^{y}  \sin(s_k(B) (y - z)) Q(z) \phi_k^+( z, B) dz, \label{eq:phi-yy-equation}\\
		\partial_y^3 \phi_k^+( y, B) &= -s_k^2(B) \sin(s_k(B) y) + Q'(y) \phi_k^+( y, B) + Q(y) \partial_y \phi_k^+( y, B) \\
		&\quad -s_k^2(B) \int_{0}^{y}  \cos(s_k(B) (y - z)) Q(z) \phi_k^+( z, B) dz. \label{eq:phi-yyy-equation}
	\end{align}
	Thus,
	\begin{align}
		\left| \partial_y \phi_k^+( y, B) \right| = \mathcal{O}(1),\ \left| \partial_y^2 \phi_k^+( y, B) \right| = \mathcal{O}(k),\ \left| \partial_y^3 \phi_k^+( y, B) \right| = \mathcal{O}(k^2).
	\end{align}
	By Theorem 3.2 in \cite{kong1996dependence}, we have
	\begin{align}
		s_k'(B) = -\frac{(\partial_y \phi_k^+(B, B))^2}{ \left\| \phi_k^+(\cdot, B) \right\|^2 }. \label{eq:sk-equation}
	\end{align}
	As in Eq.(6.11) of \cite{fulton1994eigenvalue}, $1/\left\| \phi_k^+(\cdot, B) \right\| = O(k)$.  \eqref{eq:phi-y-equation} implies $\partial_y \phi_k^+( B, B) = \mathcal{O}(1)$. Therefore, $s_k'(B)  = \mathcal{O}(k^2)$.
	
	Differentiating on both sides of \eqref{eq:phi-equation}, we obtain
	\begin{align}
		&\partial_B \phi^+_{k}( y, B)\\
		&= \frac{y \cos \left(s_{k}(B) y\right) s_{k}^{\prime}(B) s_{k}(B)-\sin \left(s_{k}(B) y\right) s_{k}^{\prime}(B)}{s_{k}(B)^{2}} \\ 
		&\quad +\int_{0}^{y} \frac{(x-z) \cos \left(s_{k}(B)(y-z)\right) s_{k}^{\prime}(B) s_{k}(B)-\sin \left(s_{k}(B)(y-z)\right) s_{k}^{\prime}(B)}{s_{k}(B)^{2}} Q(z) \phi^+_{k}( z, B) d z \\ 
		&\quad +\int_{0}^{y} \sin \left(s_{k}(B)(y-z)\right) Q(z) \partial_{B} \phi^+_{k}( z, B) d z.
	\end{align}
	Then there exist constants $c_2, c_3 > 0$ such that $\left|\partial_B \phi^+_{k}( y, B)\right| \le c_2 k + c_3 \int_{0}^{y} \left|\partial_B \phi^+_{k}( z, B)\right| dz$. Applying Gronwall's inequality again shows
	\begin{align}
		\left|\partial_B \phi^+_{k}( y, B)\right| \le c_2k \exp\left( c_3 B \right) = \mathcal{O}(k).
	\end{align}
	Taking differentiation w.r.t. $B$ on both sides of \eqref{eq:phi-y-equation}, \eqref{eq:phi-yy-equation}, \eqref{eq:phi-yyy-equation}, and applying the above estimate, we can obtain
	\begin{align}
		\left| \partial_B \partial_y \phi_k^+( y, B) \right| = \mathcal{O}(k^2),\ \left| \partial_B \partial_y^2 \phi_k^+( y, B) \right|  = \mathcal{O}(k^3),\ \left| \partial_B \partial_y^3 \phi_k^+( y, B) \right|  = \mathcal{O}(k^4).
	\end{align}
	We can also derive that
	\begin{align}
		\partial_B  \left\| \phi_k^+( \cdot, B) \right\|^2 = (\phi_k^+(\cdot, B))^2 + 2\int_{0}^{B} \partial_B \phi_k^+( z, B) \phi_k^+( z, B) dz = O(1/k).
	\end{align}
	Differentiating on both sides of \eqref{eq:sk-equation}, we obtain
	\begin{align}
		s_k''(B) = \frac{\partial_B \left\|  \phi_k^+(\cdot, B) \right\|^2 (\partial_y \phi_k^+(B, B))^2 - \left\|  \phi_k^+(\cdot, B) \right\|^2 \partial_B (\partial_y \phi_k^+(B, B))^2}{ \left\| \phi_k^+(\cdot, B) \right\|^4} = \mathcal{O}(k^4).
	\end{align}
	Subsequently, we get 
	\begin{align}
		\partial_B \mu_k( B) = 2s_k(B) s_k'(B) = \mathcal{O}(k^3),\ \partial_{BB} \mu_k( B) = 2s_k(B) s_k''(B) + 2 (s_k'(B))^2 = \mathcal{O}(k^5).
	\end{align}
	
	After some calculations based on the relationship between $(\lambda_k^+( b) , \varphi_k^+ ( x, b))$ and $(\mu_k^+( B), \phi_k^+(y, B))$, we obtain
	\begin{align}
		\partial_b \lambda^+_k( b) = \mathcal{O}(k^3),\ \partial_{bb} \lambda^+_k( b) = \mathcal{O}(k^5),
	\end{align}
	and
	\begin{align}
		\left| \partial_b  \varphi_k^+ ( x, b) \right| = \mathcal{O}(k^2),\ \left| \partial_b \partial_x \varphi_k^+ ( x, b) \right| = \mathcal{O}(k^3),\ \left| \partial_b \partial_x^2 \varphi_k^+ ( x, b) \right|  = O(k^4),\ \left| \partial_b \partial_x^3 \varphi_k^+ ( x, b) \right|  = O(k^5).
	\end{align}
	By Proposition 2 in \cite{zhang2018analysis}, we have $\lambda_{n, k}^+  = \lambda_k( L^+) + \mathcal{O}(k^4 \delta_n^2)$, and hence it holds that
	\begin{align}
		\lambda_{n, k}^+  = \lambda_k( L) + k^3\mathcal{O}(L^+ - L) + \mathcal{O}(k^4 \delta_n^2),
	\end{align}
	For \eqref{eq:eigenfunction-error-boundary}, by Proposition 3 in \cite{zhang2018analysis}, we have
	\begin{align}
		\varphi_{n, k}^+( x) = \varphi_k^+( x, L^+) + \mathcal{O}(k^4 \delta_n^2) = \varphi_k^+( x, L) + k \mathcal{O}(L^+ - L) + \mathcal{O}(k^4 \delta_n^2).
	\end{align}
	The same proposition also shows $\nabla^+ \varphi^+_{n, k}( L^-) = \nabla^+ \varphi^+_k( L^-; L^+) + \mathcal{O}(k^6 \delta_n^2)$. Thus, we obtain
	\begin{align}
		&\varphi^+_{n, k}( L^-) = \varphi^+_k( L^-, L^+) +  \mathcal{O}(k^6 \delta_n^3) \\
		&= \varphi^+_k( L^-, L^+) - \varphi^+_k( L, L^+) + \varphi^+_k( L, L^+)- \varphi^+_k( L^+, L^+) +  \mathcal{O}(k^6 \delta_n^3) \\
		&= \partial_x \varphi^+_k( L, L^+) (L^- - L) + \frac{1}{2} \partial_{xx} \varphi^+_k( L, L^+) (L^- - L)^2 \\
		&\quad - \partial_x \varphi^+_k( L, L^+) (L^+ - L) - \frac{1}{2} \partial_{xx} \varphi^+_k( L, L^+) (L^+ - L)^2 +   \mathcal{O}(k^6 \delta_n^3) \\
		&= -\partial_x \varphi^+_k( L, L^+) \delta^+ L^- + \frac{1}{2} \partial_{xx} \varphi^+_k( L, L^+) (\delta^+ L^-)^2 + \mathcal{O}(k^6 \delta_n^3) \\
		&\quad + \partial_{xx} \varphi^+_k( L, L^+) \delta^+ L^-(L^+ - L) +  \mathcal{O}(k^6 \delta_n^3) \\
		&= -\partial_x \varphi^+_k( L, L) \delta^+ L^- + \frac{1}{2} \partial_{xx} \varphi^+_k( L, L) (\delta^+ L^-)^2 + k^4 \delta^+ L^- \mathcal{O}(L^+ - L) +  \mathcal{O}(k^6 \delta_n^3). \label{eq:eigenfunction-boudary-error-deriv}
	\end{align}
	The rest of the results in the second part follows from Lemma 3, Lemma 5 and Lemma 6 in \cite{zhang2018analysis}. 
	
	(3) These results can be proved using arguments similar to those for Lemma \ref{lmm:prior-u1p} and Proposition \ref{prop:conv-unp}.
\end{proof}

\begin{proof}[Proof of Lemma \ref{lmm:prior-u1p}]
	Let $\psi^+(\cdot)$ and $\psi^-(\cdot)$ be two independent solutions to the Sturm-Liouville problem $\mu(x) \psi'(x) + \frac{1}{2} \sigma^2(x) \psi''(x) - q\psi(x) = 0$, where $\psi^+(\cdot)$ is strictly increasing and $\psi^-(\cdot)$ is strictly decreasing and they are $C^4$ by Theorem 6.19 in \cite{gilbarg2015elliptic}. Then we can construct $u_1^+(q,  x, b)$ as
	\begin{align}
		&u_1^+(q, x, b) = \frac{\psi^+(l) \psi^-(x) - \psi^+(x) \psi^-(l)}{\psi^+(l) \psi^-(b) - \psi^+(b) \psi^-(l)},
	\end{align}
	from which it is easy to see $u_1^+(q, x, b)\in C^3$ in $b$.
	
	Similar to Theorem 3.22 in \cite{li2018error}, we can show that $u_{1, n}^+(q, x; L^+) = u_1^+(q, x, L^+) + \mathcal{O}(h_n^2)$ and hence we have the first equation in the lemma due to the smoothness of $u_1^+(q, x; b)$ in $b$. For the second one, let $e_n(x) = u_{1, n}^+(q, x; L^+) - u^+_1(q, x; L^+)$.  For $x \in \mathbb{S}_n \cap (l, L^+)$, the following holds:
	\begin{align}
		{\pmb G}_n e_n(x) &= {\pmb G}_n u_{1, n}^+(q, x; L^+) - \left({\pmb G}_n u^+_1(q, x; L^+) - \mathcal{G} u^+_1(q, x, L^+) \right) + \mathcal{G} u^+_1(q, x, L^+) \\
		&= q \left( u_{1, n}^+(q, x; L^+) -  u_1^+(q, x; L^+) \right) - \left({\pmb G}_n u_1^+(q, x, L^+) - \mathcal{G} u_1^+(q, x, L^+) \right) = \mathcal{O}(\delta_n^2).
	\end{align}
	Therefore, for any $x, y \in \mathbb{S}_n \cap [l, L^-]$, we get
	\begin{align}
		\frac{1}{s_n(y)} \nabla^+ e_n(y) - \frac{1}{s_n(x)} \nabla^+ e_n(x) = \sum_{z \in \mathbb{S}_n \cap [x^+, y]} m_n(z) \delta z {\pmb G}_n e_n(z) = \mathcal{O}(\delta_n^2).
	\end{align}
	We also have $\sum_{z \in \mathbb{S}_n \cap [l, L^-]} \delta^+ z \nabla^+ e_n(z) = e_n(L^+) - e_n(l) = 0$, from which we conclude that there must exist a pair of $x, y \in \mathbb{S}_n \cap [l, L^-]$ such that $\frac{1}{s_n(y)} \nabla^+ e_n(y)  \frac{1}{s_n(x)} \nabla^+ e_n(x) \le 0$. In this case, we obtain 
	\begin{align}
		\left| \frac{1}{s_n(x)} \nabla^+ e_n(x) \right| \le \left| \frac{1}{s_n(y)} \nabla^+ e_n(y) - \frac{1}{s_n(x)} \nabla^+ e_n(x)  \right| = \mathcal{O}(\delta_n^2).
	\end{align}
	It follows that 
	\begin{equation}
		\left| \frac{1}{s_n(L^-)} \nabla^+ e_n(L^-) \right| \le \left|  \frac{1}{s_n(x)} \nabla^+ e_n(x)  \right| + \mathcal{O}(\delta_n^2)  = \mathcal{O}(\delta_n^2), 
	\end{equation}
	and hence $\nabla^+ e_n(L^-) = \mathcal{O}(\delta_n^2)$ holds. Therefore, $u_{1, n}^+(q, L^-; L^+) - u_1^+(q, L^-, L^+) = \delta^+ L^- \nabla^+ e_n(L^-) = \mathcal{O}(\delta_n^3)$. Using the arguments for obtaining \eqref{eq:eigenfunction-boudary-error-deriv}, we arrive at the second equation.
\end{proof}

\begin{proof}[Proof of Lemma \ref{prop:conv-unn}]
 We can construct $u^-(q, x, b)$ as
\begin{align}
	&u^-(q, x, b) = \frac{\psi^+(r) \psi^-(x) - \psi^+(x) \psi^-(r)}{\psi^+(r) \psi^-(b) - \psi^+(b) \psi^-(r)},
\end{align}
where $\psi^+$  and $\psi^-$ are given in the proof of Lemma \ref{lmm:prior-u1p}. From this expression, it is easy to deduce $u^-(q, x, b)$ is $C^3$ in $b$.  
%	\begin{align}
%		\partial_b u^-(q, x, b) &=  -\frac{\psi^+(r) \psi^-(x) - \psi^+(x) \psi^-(r)}{(\psi^+(r) \psi^-(b) - \psi^+(b) \psi^-(r))^2} (\psi^+(r) (\psi^-)'(b) - (\psi^+)'(b) \psi^-(r)) \\
%		\partial_{bb} u^-(q, x, b) &= -\frac{\psi^+(r) \psi^-(x) - \psi^+(x) \psi^-(r)}{(\psi^+(r) \psi^-(b) - \psi^+(b) \psi^-(r))^2} (\psi^+(r) (\psi^-)''(b) - (\psi^+)''(b) \psi^-(r)) \\
%		&\quad + 2\frac{\psi^+(r) \psi^-(x) - \psi^+(x) \psi^-(r)}{(\psi^+(r) \psi^-(b) - \psi^+(b) \psi^-(r))^3} (\psi^+(r) (\psi^-)'(b) - (\psi^+)'(b) \psi^-(r))^2.
%	\end{align}
Direct calculations show that
	\begin{align}
		\partial_b u^-(q, b, b) &= -\frac{\psi^+(r) (\psi^-)'(b) - (\psi^+)'(b) \psi^-(r)}{\psi^+(r) \psi^-(b) - \psi^+(b) \psi^-(r)},\\
		\partial_{bb} u^-(q, x, b) &= -\frac{\psi^+(r) (\psi^-)''(b) - (\psi^+)''(b) \psi^-(r)}{\psi^+(r) \psi^-(b) - \psi^+(b) \psi^-(r)} + 2\frac{(\psi^+(r) (\psi^-)'(b) - (\psi^+)'(b) \psi^-(r))^2}{(\psi^+(r) \psi^-(b) - \psi^+(b) \psi^-(r))^2}.
	\end{align}
Using these equations, we can verify 
	\begin{align}
		&\mu(L) \partial_b u^-(q, L, L) - \sigma^2(L)  (\partial_b u^-(q, L, L))^2 + \frac{1}{2} \sigma^2(L)  \partial_{bb} u^-(q, L, L) + q \\
		&= -\frac{\psi^+(r) \left( \mu(b) (\psi^-)'(b) + \frac{1}{2} \sigma^2(b) (\psi^-)''(b) - q\psi^-(b) \right)  }{\psi^+(r) \psi^-(b) - \psi^+(b) \psi^-(r)} \\
		& \quad + \frac{\psi^-(r) \left( \mu(b) (\psi^+)'(b) + \frac{1}{2} \sigma^2(b) (\psi^+)''(b) - q\psi^+(b) \right) }{\psi^+(r) \psi^-(b) - \psi^+(b) \psi^-(r)} = 0.
	\end{align}
	
	We have the following facterization for $u_n^-(q, x; L^-)$:
	\begin{align}
		u_n^-(q, x; L^-) &= \mathbb{E}_x\left[ e^{q\tau_L^-} 1_{\{ Y_{\tau_L^-} = L^- \}} \right] \\
		&= \mathbb{E}_x\left[ e^{q\tau_{L^{++}}^-} 1_{\{ Y_{\tau_{L^{++}}^-} = L^+ \}} \right] \mathbb{E}_{L^+}\left[ e^{q\tau_L^-} 1_{\{ Y_{\tau_L^-} = L^- \}} \right] = \widetilde{u}_n^-(q, x; L^+) u_n^-(q, L^+; L^-).
	\end{align}
	To derive it, we use the fact that for $Y^{(n)}$ to arrive at $L^-$ from the above, it should first touch $L^+$, because it is a birth-and-death process.
	
	Using the smoothness of $u^-(q, x; b)$ in $b$ and $u_n^-(q, x; L^+) = u^-(q, x; L^+) + \mathcal{O}(\delta_n^2)$, we get
	\begin{align}
		u_n^-(q, x; L^+) = u^-(q, x, L) + \mathcal{O}(L^+ - L) + \mathcal{O}(\delta_n^2).
	\end{align}
	Similar to the proof of Lemma \ref{lmm:prior-u1p}, we can show $u_n^-(q, L^+; L^-) = u^-(q, L^+, L^-) + \mathcal{O}(\delta_n^3)$. Thus, we can develop the following estimate for $u^-(q, L^+, L^-)$:
	\begin{align}
		&u^-(q, L^+, L^-) - 1 \\
		&= u^-(q, L^+, L^-) - u^-(q, L^+, L) + u^-(q, L^+, L)  - u^-(q, L^+, L^+) \\
		&= \partial_b u^-(q, L^+, L)(L^- - L) + \frac{1}{2} \partial_{bb} u^-(q, L^+, L)(L^- - L)^2 \\
		&\quad - \partial_b u^-(q, L^+, L)(L^+ - L) - \frac{1}{2} \partial_{bb} u^-(q, L^+, L)(L^+ - L)^2 + \mathcal{O}(\delta_n^3) \\
		&= -\partial_b u^-(q, L^+, L) \delta^+ L^- + \frac{1}{2} \partial_{bb} u^-(q, L^+, L)(\delta^+ L^-)^2 \\
		&\quad - \frac{1}{2} \partial_{bb} u^-(q, L^+, L) \delta^+ L^- (L^+ - L) + \mathcal{O}(\delta_n^3) \\
		&= -\partial_b u^-(q, L^+, L) \delta^+ L^- + \frac{1}{2} \partial_{bb} w^-(q, L^+, L)(\delta^+ L^-)^2 + \mathcal{O}(\delta^+ L^-) (L^+ - L) + \mathcal{O}(\delta_n^3).
	\end{align}
	This concludes the proof.
\end{proof}

\begin{proof}[Proof of Proposition \ref{prop:conv-vn}]
	By Lemma \ref{lmm:priors-eigenfunctions}, there exist constants $c_1, c_2 > 0$ such that
	\begin{align}
		&v_n(D, x; y) / (m_n(y) \delta y) \\
		&= \sum_{k = 1}^{n_e} e^{-\lambda_{n, k}^+ D} \varphi_{n, k}^+(q, x) \varphi_{n, k}^+(y) \\
		&= \sum_{k = 1}^{n_e} e^{-\lambda_k^+(L)D} \varphi_k^+(x, L) \varphi_k^+(y, L) + \left(  \mathcal{O}(L^+ - L) + \mathcal{O}(\delta_n^2) \right)\sum_{k = 1}^{n_e} k^{11} e^{-c_1k^2 D} \\
		&= \sum_{k = 1}^{\infty} e^{-\lambda_k^+(L)} \varphi_k^+(x, L) \varphi_k^+(y, L) + \mathcal{O}\left(\sum_{k = n_e + 1}^{\infty} e^{-c_2 k^2 D}\right) \\
		&\quad + \left(  \mathcal{O}(L^+ - L) + \mathcal{O}(\delta_n^2) \right)\sum_{k = 1}^{\infty} k^{11} e^{-c_1k^2 D} \\
		&= \bar{v}(D, x, y) + \mathcal{O}(L^+ - L) + \mathcal{O}(\delta_n^2).
	\end{align}
	In the last equality, we use the fact that $\sum_{k = n_e + 1}^{\infty} e^{-c_2 k^2 D} \le c_3 e^{-c_4/\delta_n^2} \le c_5 \delta_n^2$ for some constants $c_3, c_4, c_5 > 0$. Later on, we use  this result directly. Using \eqref{eq:eigenfunction-error-boundary}, we also obtain 
	\begin{align}
		&v_n(D, L^-; y) / (m_n(y) \delta y)\\
		&= \sum_{k = 1}^{n_e} e^{-\lambda_{n, k}^+ D} \varphi_{n, k}^+(L^-) \varphi_{n, k}^+(y) \\
		&= \sum_{k = 1}^{n_e} e^{-\lambda_k^+(L) D} \varphi_{n, k}^+(L^-) \varphi_k^+(y, L) + \mathcal{O}((L^+ - L)\delta^+L^-) + \mathcal{O}(\delta_n^3) \\
		&= -\delta^+ L^-\sum_{k = 1}^{n_e} e^{-\lambda_k^+(L) D} \partial_x \varphi^+_k(L, L) \varphi_k^+(y, L) \\
		&\quad + \frac{1}{2} (\delta^+ L^-)^2 \sum_{k = 1}^{n_e} e^{-\lambda_k^+(L) D} \partial_{xx} \varphi^+_k(L, L) \varphi_k^+(y, L) + \mathcal{O}((L^+ - L)\delta^+L^-) + \mathcal{O}(\delta_n^3) \\
		&=  -\delta^+ L^-\sum_{k = 1}^{\infty} e^{-\lambda_k^+(L) D} \partial_x \varphi^+_k(L, L) \varphi_k^+(q, y; L) \\
		&\quad + \frac{1}{2} (\delta^+ L^-)^2 \sum_{k = 1}^{\infty} e^{-\lambda_k^+(L) D} \partial_{xx} \varphi^+_k(L, L) \varphi_k^+(y, L) + \mathcal{O}((L^+ - L)\delta^+L^-) + \mathcal{O}(\delta_n^3) \\
		&= - \bar{v}_x(D, L; y) \delta^+ L^- + \frac{1}{2} \partial_{xx} \bar{v}(D, L; y) ( \delta^+ L^-)^2 + \mathcal{O}((L^+ - L)\delta^+L^-) + \mathcal{O}(\delta_n^3).
	\end{align}
	
	For the last claim, we note that,
	\begin{align}
		\mu(L)\partial_x \varphi_k^+(L, L) + \frac{1}{2} \sigma^2(L) \partial_{xx} \varphi_k^+(L, L) - q \varphi_k^+(L, L) = -\lambda_k(L) \varphi_k^+(L, L). 
	\end{align}
	Noting that $\varphi_k^+(L,L) = 0$, we have $\mu(L)\partial_x \varphi_k^+(L, L) + \frac{1}{2} \sigma^2(L) \partial_{xx} \varphi_k^+(L, L) = 0$. Then
	\begin{align}
		&\mu(L) \partial_x \bar{v}(D, L; y) + \frac{1}{2} \sigma^2(L) \partial_{xx} \bar{v}(D, L; y) \\
		&= \sum_{k = 1}^\infty e^{-\lambda_k^+(L) D} \left( \mu(L)\partial_x \varphi_k^+(L, L) + \frac{1}{2} \sigma^2(L) \partial_{xx} \varphi_k^+(L, L) \right) = 0,
	\end{align}
	where the interchange of summation and differentiation can be verified with the estimates of $\lambda_k^+(L)$ and $\varphi_k^+(L, L)$ in Lemma \ref{lmm:priors-eigenfunctions}.
	
	The results for $v_n(D, x; l)$ and $v(D, L, L)$ can be proved by arguments similar to those for the third part of Lemma \ref{lmm:prior-u1p} and the equation $v(D, x, L) = v_1(x, L) - v_2(D, x, L)$ along with the differential equation for $v_1(x, L)$ and eigenfunction expansion for $v_2(D, x, L)$.
\end{proof}

\begin{proof}[Proof of Proposition \ref{prop:conv-unp}]
	For the first and second claims, we only prove that $c_{n, k}(q) = c_k(q, L) + O(k^4 \delta_n^2)$. The other steps are almost identical to those in the proof of Proposition \ref{prop:conv-vn}. 
	\begin{align}
		&c_{n, k}(q) - c_k(q, L) \\
		&= \sum_{y \in \mathbb{S}_n^o \cap (-\infty, L^+)} \varphi_{n, k}^+(y) u_{1, n}^+(q, y; L^+) m_n(y) \delta y -  \int_{l}^{L} \varphi_k^+(y, L) u_1^+(q, y, L) m(y) dy \\
		&=  \sum_{y \in \mathbb{S}_n^o \cap (-\infty, L^+)} \varphi_k^+(y, L) u_1^+(q, y, L) m_n(y) \delta y -  \int_{l}^{L} \varphi_k^+(y, L) u_1^+(q, y, L) m(y) dy \\
		&\quad + \mathcal{O}(k^4 \delta_n^2) + \mathcal{O}(k^2 (L^+ - L))\\
		&= \sum_{y \in \mathbb{S}_n^o \cap (-\infty, L^+)} \varphi_k^+(y, L) u_1^+(q, y, L) m(y) \delta y -  \int_{l}^{L} \varphi_k^+(y, L) u_1^+(q, y, L) m(y) dy \\
		&\quad + \mathcal{O}(k^4 \delta_n^2) + \mathcal{O}(k^2 (L^+ - L)) \\
		&= \sum_{z \in \mathbb{S}_n^- \cap (-\infty, L^-)} \left(\frac{1}{2} \left( \varphi_k^+(z, L) u_1^+(q, z, L) m(z) + \varphi_k^+(z^+, L) u_1^+(q, z^+, L)) m(z^+) \right) \delta^+ z \right. \\
		&\quad \quad \quad \left. - \int_{z}^{z^+} \varphi_k^+(y, L) u_1^+(q, y, L) m(y) dy\right) + \int_{L^-}^{L} \varphi_k^+(y, L) u_1^+(q, y, L) m(y) dy \\
		&\quad + \mathcal{O}(k^4 \delta_n^2) + \mathcal{O}(k^2 (L^+ - L)) \\
		&= \mathcal{O}(k^4 \delta_n^2) + \mathcal{O}(k^2 (L^+ - L)).
	\end{align}
	
	For the last claim, using the arguments in the proof of Proposition \ref{prop:conv-vn},  we obtain $\mu(L) \partial_x u_2^+(q, D, L; L) + \frac{1}{2} \sigma^2(L) \partial_{xx} u^+_2(q, D, L; L) = 0$.  Hence, there holds that
	\begin{align}
		&\mu(L) \partial_x u^+(q, D, L, L) + \frac{1}{2} \sigma^2(L) \partial_{xx} u^+(q, D, L, L) - q \\
		&= \mu(L) \partial_x u_1^+(q, L, L) + \frac{1}{2} \sigma^2(L) \partial_{xx} u_1^+(q, L, L) - qu_1^+(q, L, L) = 0,
	\end{align}
where we use the equation of $u_1^+(q, \cdot, L)$ at the boundary and $u_1^+(q, L, L) = 1$.
\end{proof}

\begin{proof}[Proof of Proposition \ref{prop:conv-wn}]
	
	By Theorem 1.4 in \cite{athanasiadis1996some}, \eqref{eq:w-tilde-equation} admits a unique solution $w(\cdot)$ that belongs to $C^1([l, r]) \cap C^2([l, r] \backslash \mathcal{D})$. The equation for $\widetilde{w}(q, z)$ can be written in a self-adjoint form as
	\begin{align}
		\frac{1}{m(x)} \partial_x\left( \frac{1}{s(x)} \partial_x \widetilde{w}(q, x)\right) - q \widetilde{w}(q, x) = f(x).
	\end{align}
	Multiplying both sides with $m(x)$ and integrating from $l^+_{1/2} = l + \delta^+ l/2$ to $z \in \mathbb{S}_n$ yield
	\begin{align}
		\frac{1}{s(z)} \partial_x \widetilde{w}(q, z) - \frac{1}{s(l^+_{1/2})} \partial_x \widetilde{w}(q, l^+_{1/2}) - q \int_{l^+_{1/2}}^{z} m(y) \widetilde{w}(q, y) dy dz = \int_{l^+_{1/2}}^{z} m(y) f(y) dy.
	\end{align}
	Further multiplying both sides with $s(z)$ and integrating from $x$ to $x^+$ give us
	\begin{align}
		&\widetilde{w}(q, x^+) - \widetilde{w}(q, x) - \frac{\int_{x}^{x^+} s(z) dz}{s(l^+_{1/2})} \partial_x \widetilde{w}(q, l^+_{1/2}) - q \int_{x}^{x^+} s(z) \int_{l^+_{1/2}}^{z} m(y) \widetilde{w}(q, y) dy dz \\
		&= \int_{x}^{x^+} s(z) \int_{l^+_{1/2}}^{z} m(y) f(y) dy dz.
	\end{align}
	Moreover, it is clear that $\widetilde{w}_n(q, z)$ satisfies
	\begin{align}
		\frac{\delta^- x}{m_n(x) \delta x} \nabla^-\left( \frac{1}{s_n(x)} \nabla^+ \widetilde{w}_n(q, x) \right) - q \widetilde{w}_n(q, x) = f(x).
	\end{align}
	Multiplying both sides with $m_n(x) \delta x$ and summing from $l^+$ to $x$, we obtain
	\begin{align}
		\frac{1}{s_n(x)} \nabla^+ \widetilde{w}_n(q, x) - \frac{1}{s_n(l)} \nabla^+ \widetilde{w}_n(q, l) - q \sum_{l < y \le x} \widetilde{w}_n(q, y) m_n(y) \delta y = \sum_{l < y \le x} f(y) m_n(y) \delta y.
	\end{align}
	It follows that
	\begin{align}
		&\widetilde{w}_n(q, x^+) - \widetilde{w}_n(q, x) - s_n(x)  \frac{\delta^+ x}{s_n(l)} \nabla^+ \widetilde{w}_n(q, l) - q s_n(x) \delta^+ x\sum_{l < y \le x} \widetilde{w}_n(q, y) m_n(y) \delta y \\
		&= s_n(x) \delta^+ x \sum_{l < y \le x} f(y) m_n(y) \delta y. 
	\end{align}
	Let $e(x) = \widetilde{w}_n(q, x) - \widetilde{w}(q, x)$. We have
	\begin{align}
		e(x^+) - e(x) &= s_n(x) \delta^+ x \left( \frac{1}{s_n(l^+_{1/2})} \nabla^+ \widetilde{w}_n(q, l) - \frac{1}{s(l^+_{1/2})} \partial_x \widetilde{w}(q, l) \right) \label{eq:exp-ex-1}\\
		&\quad + \left( s_n(x) \delta^+ x - \int_{x}^{x^+} s(z) dz  \right) \frac{1}{s(l^+_{1/2})} \partial_x \widetilde{w}(q, l) \label{eq:exp-ex-2} \\
		&\quad + q s_n(x) \delta^+ x \sum_{l < y \le x} e(y) m_n(y) \delta y \label{eq:exp-ex-3} \\
		&\quad + q \left(s_n(x) \delta^+ x - \int_{x}^{x^+} s(z) dz\right) \sum_{l < y \le x} \widetilde{w}(q, y) m_n(y) \delta y \label{eq:exp-ex-4} \\
		&\quad + q \int_{x}^{x^+} s(z) \left( \sum_{l < y \le x} \widetilde{w}(q, y) m_n(y) \delta y - \int_{l^+_{1/2}}^{z} \widetilde{w}(q, y) m(y) dy \right) dz \label{eq:exp-ex-5}\\
		&\quad + \left(s_n(x) \delta^+ x - \int_{x}^{x^+} s(z) dz\right) \sum_{l < y \le x} f(y) m_n(y) \delta y \label{eq:exp-ex-6} \\
		&\quad + \int_{x}^{x^+} s(z) \left( \sum_{l < y \le x} f(y) m_n(y) \delta y - \int_{l^+_{1/2}}^{z} f(y) m(y) dy \right) dz. \label{eq:exp-ex-7}
	\end{align}
	The quantities in \eqref{eq:exp-ex-2}, \eqref{eq:exp-ex-4} and \eqref{eq:exp-ex-6} are all $\mathcal{O}(\delta_n^3)$. For the quantity in \eqref{eq:exp-ex-7}, note that $s(z) = s(x) + s'(x)(z - x) + \mathcal{O}(\delta_n^2)$, $\int_{l^+_{1/2}}^{z} f(y) m(y) dy = \int_{l^+_{1/2}}^{x^+_{1/2}} f(y) m(y) dy + f(x) m(x) (z - x^+_{1/2}) + \mathcal{O}(\delta_n^\gamma 1_{\{ x \in \mathcal{D}_N \}} + \delta_n^2)$ with $x^+_{1/2} = x + \delta^+x/2$ for $z \in [x, x^+]$, $\mathcal{D}_N = \{ y \in \mathbb{S}_n^o: [y^-, y^+] \cap \mathcal{D} \ne \emptyset \}$, and $\int_{l^+_{1/2}}^{x^+_{1/2}} f(y) m(y) dy = \sum_{l < y \le x} f(y) m_n(y) dy + \mathcal{O}(\delta_n^\gamma)$. Therefore,
	\begin{align}
		\eqref{eq:exp-ex-7} &= s(x) \delta^+ x \left( \sum_{l < y \le x} f(y) m_n(y) dy - \int_{l^+_{1/2}}^{x^+_{1/2}} f(y) m(y) dy \right) \\
		&\quad + s(x) f(x) m(x) \int_{x}^{x^+} (x^+_{1/2} - z) dz + \mathcal{O}(\delta_n^{1 + \gamma} 1_{\{ x\in \mathcal{D}_N \}} + \delta_n^3 + \delta_n^{1 + \gamma}) \\
		&= \mathcal{O}(\delta_n^{1 + \gamma} 1_{\{ x\in \mathcal{D}_N \}} + \delta_n^3).
	\end{align}
	By the same argument, we can show that $\eqref{eq:exp-ex-5} = \mathcal{O}(\delta_n^3)$. Putting these estimates back and letting $e^\ast(x) = -e(x)$, we deduce that there exists a constant $c_1 > 0$ independent of $\delta_n$ such that
	\begin{align}
		e(x^+) &\le s_n(x) \delta^+ x \left( \frac{1}{s_n(l^+_{1/2})} \nabla^+ \widetilde{w}_n(q, l) - \frac{1}{s(l^+_{1/2})} \partial_x \widetilde{w}(q, l) \right) \\
		&\quad + e(x) + q s_n(x) \delta^+ x \sum_{l < y \le x} e(y) m_n(y) \delta y  + c_1 (\delta_n^{1 + \gamma} 1_{\{ x\in \mathcal{D}_N \}} + \delta_n^3),\\
		e^\ast(x^+) &\le -s_n(x) \delta^+ x \left( \frac{1}{s_n(l^+_{1/2})} \nabla^+ \widetilde{w}_n(q, l) - \frac{1}{s(l^+_{1/2})} \partial_x \widetilde{w}(q, l) \right) \\
		&\quad + e^\ast(x) + q s_n(x) \delta^+ x \sum_{l < y \le x} e^\ast(y) m_n(y) \delta y  + c_1 (\delta_n^{1 + \gamma} 1_{\{ x\in \mathcal{D}_N \}} + \delta_n^3).  
	\end{align}
	Noting the positive lower and upper bounds for $s_n(x)$, $m_n(x)$ which are independent of $\delta_n$ and using the discrete Gronwall's inequality, we conclude that there exist constants $c_2, c_3, c_4, c_5 > 0$ independent of $\delta_n$ such that
	\begin{align}
		e(x) &\le c_2\left( c_1 h_n^\gamma + c_3 \left( \frac{1}{s_n(l^+_{1/2})} \nabla^+ \widetilde{w}_n(q, l) - \frac{1}{s(l^+_{1/2})} \partial_x \widetilde{w}(q, l) \right) \right), \label{eq:ex-bound}\\
		e^\ast(x) &\le c_4\left( c_1 h_n^\gamma - c_5 \left( \frac{1}{s_n(l^+_{1/2})} \nabla^+ \widetilde{w}_n(q, l) - \frac{1}{s(l^+_{1/2})} \partial_x \widetilde{w}(q, l) \right) \right). \label{eq:exs-bound}
	\end{align}
	Note that $e(r) = e^\ast(r) = 0$. Then, there exist constants $c_6, c_7 > 0$ independent of $\delta_n$ such that
	\begin{align}
		&\left( \frac{1}{s_n(l^+_{1/2})} \nabla^+ \widetilde{w}_n(q, l) - \frac{1}{s(l^+_{1/2})} \partial_x \widetilde{w}(q, l) \right) \le c_6 \delta_n^\gamma,\ \\
		&-\left( \frac{1}{s_n(l^+_{1/2})} \nabla^+ \widetilde{w}_n(q, l) - \frac{1}{s(l^+_{1/2})} \partial_x \widetilde{w}(q, l) \right) \le c_7 \delta_n^\gamma.
	\end{align}
	Hence $\left| \frac{1}{s_n(l^+_{1/2})} \nabla^+ \widetilde{w}_n(q, l) - \frac{1}{s(l^+_{1/2})} \partial_x \widetilde{w}(q, l) \right| = \mathcal{O}(\delta_n^\gamma)$. Putting them back to \eqref{eq:ex-bound} and \eqref{eq:exs-bound}, we have $e(x) = \mathcal{O}(\delta_n^\gamma)$ and $-e(x) = \mathcal{O}(\delta_n^\gamma)$. Therefore, $e(x) = \mathcal{O}(\delta_n^\gamma)$ holds and the proof is completed.
\end{proof}

\begin{proof}[Proof of Theorem \ref{thm:conv-h}]
	The smoothness of $h(q, x; y)$ and its limit at $y = l$ and value at $y = L$ are direct consequences of the properties of $v(D, x; y)$. By \eqref{eq:vn-L-conv}, \eqref{eq:unp-L-conv} and \eqref{eq:conv-umn-L}, we have
	\begin{align}
		&u_n^-(q, L^+; L^-) v_n(D, L^-; y) / (m_n(y) \delta y) \\
		&= \left(- \bar{v}_x \delta^+ L^- + \frac{1}{2} \bar{v}_{xx} (\delta^+ L^-)^2  + \mathcal{O}(\delta^+ L^-)(L^+ - L) \right) \\
		&\quad \times \left(1 - u^-_b \delta^+ L^- + \frac{1}{2} u^-_{bb} (\delta^+ L^-)^2 + \mathcal{O}(\delta^+ L^-)(L^+ - L) \right) + \mathcal{O}(\delta_n^3) \\
		&= -\bar{v}_x \delta^+L^- + (\delta^+L^-)^2 \left( \bar{v}_x u_b^- + \frac{1}{2} \bar{v}_{xx}  \right) + \mathcal{O}(\delta^+ L^-) (L^+ - L) + \mathcal{O}(\delta_n^3).
	\end{align}
	Here we neglect the arguments $(D, L; y)$ for $\bar{v}$ and $(q, L, L)$ for $u^-$ to make the equations shorter. Moreover, using \eqref{eq:unp-L-conv} and \eqref{eq:conv-umn-L}, we obtain
	\begin{align}
		&1 - u_n^+(q,D, L^-; L^+) u_n^-(q, L^+; L^-) \\
		&= 1 - \left(1 - u^+_x \delta^+ L^- + \frac{1}{2} u^+_{xx} (\delta^+ L^-)^2 + \mathcal{O}(\delta^+ L^-)(L^+ - L)  \right) \\
		&\quad \quad \quad \times \left(1 - u^-_b \delta^+ L^- + \frac{1}{2} u^-_{bb} (\delta^+ L^-)^2 + \mathcal{O}(\delta^+ L^-)(L^+ - L) \right) + \mathcal{O}(\delta_n^3) \\
		&= \delta^+ L^- \left( u_b^- + u_x^+ \right) - (\delta^+ L^-)^2 \left( \frac{1}{2} u_{bb}^- + u_x^+ u_b^- + \frac{1}{2} u_{xx}^+  \right) + \mathcal{O}(\delta^+ L^-) (L^+ - L) + \mathcal{O}(\delta_n^3).
	\end{align}
	Here we neglect the arguments $(q, D, x, L)$ for $u^+$ for the same reason. It follows that
	\begin{align}
		&h_n(q, L^+; y)/(m_n(y)\delta y) \\
		&= e^{-qD} \frac{-v_x + \delta^+L^- \left( \bar{v}_x u_b^- + \frac{1}{2} \bar{v}_{xx}  \right) + \mathcal{O}(L^+ - L) + \mathcal{O}(\delta_n^2)}{  u_b^- + u_x^+  - \delta^+ L^- \left( \frac{1}{2} u_{bb}^- + u_x^+ u_b^- + \frac{1}{2} u_{xx}^+  \right) + \mathcal{O} (L^+ - L) + \mathcal{O}(\delta_n^2)} \\
		&= h(q, L; y) / m(y) + e^{-qD} \delta^+ L^-\frac{-\frac{1}{2} \bar{v}_x u_{bb}^- - \frac{1}{2} \bar{v}_x u_{xx}^+ + v_x^- (u_b^- )^2 + \frac{1}{2} \bar{v}_{xx} u_b^- + \frac{1}{2} u_x^+ \bar{v}_{xx} }{\left( u_b^- + u_x^+ \right)^2 + \mathcal{O}(\delta_n)} \\
		&\quad + \mathcal{O}(L^+ - L) + \mathcal{O}(\delta_n^2).
	\end{align}
	By \eqref{eq:up-boundary} and \eqref{eq:v-bar-boundary}, we get $\bar{v}_{xx} = -\frac{2\mu(L)}{\sigma^2(L)} \bar{v}_x$ and $u^+_{xx} = -\frac{2\mu(L)}{\sigma^2(L)} u^+_x + \frac{2q}{\sigma^2(L)}$. Subsequently, we obtain
	\begin{align}
		&-\frac{1}{2} \bar{v}_x u_{bb}^- - \frac{1}{2} \bar{v}_x u_{xx}^+ + v_x^- (u_b^- )^2 + \frac{1}{2} \bar{v}_{xx} u_b^- + \frac{1}{2} u_x^+ \bar{v}_{xx} \\
		&= -\frac{1}{2} \bar{v}_x u_{bb}^-- \frac{1}{2} \bar{v}_x \left( -\frac{2\mu(L)}{\sigma^2(L)} u^+_x + \frac{2q}{\sigma^2(L)} \right) + v_x^- (u_b^- )^2 - \frac{1}{2} \frac{2\mu(L)}{\sigma^2(L)} \bar{v}_x u_b^- - \frac{1}{2} \frac{2\mu(L)}{\sigma^2(L)} \bar{v}_x u_x^+ \\
		&= -\frac{1}{\sigma^2(L) }\left( \frac{1}{2} \sigma^2(L) u_{bb}^- + q - \sigma^2(L) (u_b^-)^2 + \mu(L) u_b^- \right) = 0.
	\end{align}
	The last equality follows from \eqref{eq:wm-b-boundary}. Therefore, there holds that
	\begin{align}
		h_n(q, L^+; y)/(m_n(y)\delta y) = h(q, L; y) / m(y) + \mathcal{O}(L^+ - L) + \mathcal{O}(\delta_n^2).
	\end{align}
	Then, by \eqref{eq:conv-umn-x}, we obtain
	\begin{align}
		h_n(q, L^-; y) u_n^-(q, x; L^-) / (m_n(y) \delta y)  &= h_n(q, L^-; y) u_n^-(q, x; L^+) u_n^-(q, L^+; L^-) / (m_n(y) \delta y) \\
		&= h_n(q, L^+; y) u_n^-(q, x; L^+) / (m_n(y) \delta y) \\
		&= h(q, L; y) u_n^-(q, x; L^+) / m(y) + \mathcal{O}(L^+ - L) + \mathcal{O}(\delta_n^2) \\
		&= h(q, L; y) u^-(q, x, L) / m(y) + \mathcal{O}(L^+ - L) + \mathcal{O}(\delta_n^2).
	\end{align}
	Based on the previous estimates, we deduce
	\begin{align}
		&h_n(q, x; y) / (m_n(y) \delta y) \\
		&= 1_{\{ x< L \}} e^{-qD} v_n(D, x; y) / (m_n(y) \delta y) + 1_{\{ x < L \}} u_n^+(q, D, x; L^+) h_n(q, L^+; y) / (m_n(y) \delta y) \\
		& \quad + 1_{\{ x \ge L \}} u^-_n(q, x; L^-) h_n(q, L^-; y)  / (m_n(y) \delta y) \\
		&= 1_{\{ x< L \}} e^{-qD} \bar{v}(D, x; y) + 1_{\{ x < L \}} u^+(q, D, x, L) h(q, L; y) / m(y) \\
		&\quad + 1_{\{ x \ge L \}} h(q, L; y) u^-(q, x, L) / m(y) + \mathcal{O}(L^+ - L) + \mathcal{O}(\delta_n^2).
	\end{align}	
\end{proof}

\begin{proof}[Proof of Theorem \ref{thm:conv-u}] 
	Recall that $h_n(q, x; l) = h(q, x; l) + \mathcal{O}(L^+ - L) + \mathcal{O}(\delta_n^2)$ and $\widetilde{w}(q, l) = \widetilde{w}_n(q, l) = f(l)/q$, which will be used below. We have
	\begin{align}
		&\widetilde{u}_n(q, x) - \widetilde{u}(q, x) \\
		&= \sum_{z \in \mathbb{S}_n \cap (l, L)} h_n(q, x; z) \widetilde{w}_n(q, z) - \int_{l}^{L} h(q, x; z) \widetilde{w}(q, z) dz + h_n(q, x; l) \widetilde{w}_n(q, l) - h(q, x; l) \widetilde{w}(q, l)\\
		&= \sum_{z \in \mathbb{S}_n \cap (l, L)} h_n(q, x; z) \widetilde{w}_n(q, z) - \int_{l}^{L} h(q, x; z) \widetilde{w}(q, z) dz + \mathcal{O}(L^+ - L) + \mathcal{O}(\delta_n^2)\\
		&= \sum_{z \in \mathbb{S}_n \cap (l, L)} h_n(q, x; z) /(m_n(z) \delta z) \widetilde{w}_n(q, z) m_n(z) \delta z - \int_{l}^{L} h(q, x; z) \widetilde{w}(q, z) dz + \mathcal{O}(L^+ - L) + \mathcal{O}(\delta_n^2) \\
		&= \sum_{z \in \mathbb{S}_n \cap (l, L)} h(q, x; z) \widetilde{w}(q, z) m_n(z) / m(z) \delta z - \int_{l}^{L} h(q, x; z) \widetilde{w}(q, z) dz + \mathcal{O}(L^+ - L) + \mathcal{O}(\delta_n^\gamma) \\
		&= \sum_{z \in \mathbb{S}_n \cap (l, L)} h(q, x; z) \widetilde{w}(q, z) \delta z - \int_{l}^{L} h(q, x; z) \widetilde{w}(q, z) dz \\
		&\quad + \frac{1}{2}\sum_{z \in \mathbb{S}_n \cap (l, L)} h(q, x; z) \widetilde{w}(q, z) \frac{\mu(z)}{\sigma^2(z) } \left( (\delta^+ z)^2 - (\delta^- z)^2 \right) + \mathcal{O}(L^+ - L) + \mathcal{O}(\delta_n^\gamma) \\
		&= \sum_{z \in \mathbb{S}_n^- \cap (l, L^-)} \left(\frac{1}{2} (h(q, x; z) + h(q, x; z^+)) \delta^+ z - \int_{z}^{z^+} h(q, x; y) \widetilde{w}(q, y) dy \right) \\
		&\quad + h(q, x; l^+) \widetilde{w}(q, l^+) \delta^- l^+/2 + h(q, x; L^-) \widetilde{w}(q, L^-) \delta^+ L^-/2 - \int_{y \in (l, l^+) \cup (L^-, L)} h(q, x; y) \widetilde{w}(q, y) dy \\
		&\quad + \frac{1}{2} \sum_{z \in \mathbb{S}_n^- \cap (l, L^-)} \left( h(q, x; z) \widetilde{w}(q, z) \frac{\mu(z)}{\sigma^2(z) } - h(q, x; z^+) \widetilde{w}(q, z^+) \frac{\mu(z^+)}{\sigma^2(z^+) } \right) (\delta^+ z)^2 \\
		&\quad - \frac{1}{2} h(q, x; l^+) \widetilde{w}(q, l^+) \frac{\mu(l^+)}{\sigma^2(l^+) } (\delta^-l^+)^2 + \frac{1}{2} h(q, x; L^-) \widetilde{w}(q, L^-) \frac{\mu(L^-)}{\sigma^2(L^-) } (\delta^+L^-)^2 \\
		&\quad + \mathcal{O}(L^+ - L) + \mathcal{O}(\delta_n^\gamma) \\
		&= \sum_{z \in \mathbb{S}_n \cap (l, L^-), [z, z^+] \mathcal{D} = \emptyset} \max_{y \in [z, z^+]} \partial_{zz} \left( h(q, x; \cdot) \widetilde{w}(q, \cdot)  \right) (y) \mathcal{O}\left( \delta_n^3 \right) \\
		&\quad + \sum_{z \in \mathbb{S}_n \cap (l, L^-), [z, z^+] \mathcal{D} \ne \emptyset} \max_{y \in [z, z^+]} \partial_z \left( h(q, x; \cdot) \widetilde{w}(q, \cdot)  \right) (y) \mathcal{O}\left( \delta_n^2 \right) + \mathcal{O}(L^+ - L) + \mathcal{O}(\delta_n^\gamma) \\
		&=\mathcal{O}(L^+ - L) + \mathcal{O}(\delta_n^\gamma).
	\end{align}
	In the last to the second equality, we use the error estimate of the trapezoid rule and the smoothness of $h(q, x; z)$ and $\widetilde{w}(q, z)$.
\end{proof}

% \newpage
\bibliographystyle{chicagoa}
\bibliography{Parisian,mdd,EEFD,lookback_lqmc}

\end{document}